\numberwithin{equation}{section}
\newtheorem{theorem}{Theorem}
\newtheorem{lemma}{Lemma}
\newtheorem{corollary}{Corollary}
\newtheorem{proposition}{Proposition}
\newtheorem{Assumption}{Assumption}
\DeclareMathOperator*{\res}{Res}
\begin{document}

\title{ Soliton Resolution for the 	Short-pluse  Equation }
\author{Yiling YANG$^1$ and Engui FAN$^{1}$\thanks{\ Corresponding author and email address: faneg@fudan.edu.cn } }
\footnotetext[1]{ \  School of Mathematical Sciences, Fudan University, Shanghai 200433, P.R. China.}

\date{ }

\maketitle
\begin{abstract}
	\baselineskip=17pt

In this paper, we  apply $\overline\partial$ steepest descent method  to study the Cauchy problem for the focusing nonlinear  short-pluse   equation
\begin{align}
&u_{xt}=u+\frac{1}{6}(u^3)_{xx},  \nonumber\\
&u(x,0)=u_0(x)\in H^{1,1}(\mathbb{R}),\nonumber
\end{align}	
 where   $H^{1,1}(\mathbb{R})$ is  a weighted Sobolev space.
We construct the solution of the short-pluse equation  via a the solution of Riemann-Hilbert problem in the new scale $(y,t)$.
In any fixed space-time cone of the new scale $(y,t)$ which stratify that $v_1\leq v_1 \in \mathbb{R}^-$ and $\xi=\frac{y}{t}<0$,
\begin{equation}
C(y_1,y_2,v_1,v_2)=\left\lbrace (y,t)\in R^2|y=y_0+vt, y_0\in[y_1,y_2]\text{, }v\in[v_1,v_2]\right\rbrace,\nonumber
\end{equation}  we   compute the long time asymptotic expansion of the solution $u(x,t)$,
which prove soliton resolution conjecture consisting of three terms:  the leading order term can
be characterized with  an $N(I)$-soliton whose parameters are modulated by
a sum of localized soliton-soliton
 interactions as one moves through the cone; the second $t^{-1/2}$ order term  coming from soliton-radiation interactions on continuous   spectrum  up to an residual error order
 $\mathcal{O}(|t|^{-1})$ from a $\overline\partial$ equation. Our results also show that soliton solutions of
  short-pluse  equation are asymptotically stable.\\
\\
{\bf Keywords:}    short pluse   equation;  Riemann-Hilbert problem,    $\overline\partial$   steepest descent method, soliton resolution,  asymptotical  stability.

\end{abstract}

\baselineskip=17pt

\newpage

\tableofcontents

\section {Introduction}
\quad In this paper, we study the long time asymptotic behavior for   the initial value problem  of the short
pulse (SP) equation
\begin{align}
&u_{xt}=u+\frac{1}{6}(u^3)_{xx},\label{sp}\\
&  u(x,0)=u_0(x),\label{sp2}
\end{align}
where $u(x,t)$ is a real-valued function, which represents the magnitude of the electric field,  and the   initial data  $u_0(x)$  belongs to  the weighted Sobolev space
\begin{equation*}
H^{1,1}(R)=\left\lbrace f\in L^2(R);xf,f'\in L^2(R)\right\rbrace.
\end{equation*}
The SP equation (\ref{sp})  was proposed  to describe the propagation of ultra-short optical pulses in silica optical fibers \cite{sp1}. Pulse propagation in optical fibers is usually modeled by the cubic nonlinear Schr$\ddot{o}$dinger (NLS) equation \cite{GP}.  So  NLS equation forms the basis for optimizing existing fiber links and suggesting new fiber communication systems in attempts to achieve high bit-rate data transmission. However, it is questionable that  the validity of the NLS equation  as a slowly varying amplitude approximation of Maxwell's equations can  describe the propagation of these very narrow pulseless. But the SP equation describes the evolution of a short pulse in nonlinear media if the pulse center is far from the nearest resonance frequency of the material's susceptibility. In some sense it represents the opposite extreme from the NLS approximation since that results from expanding the susceptibility in the frequency while equation results from expanding the susceptibility in the wavelength, which means  the SP equation provides an increasingly better approximation to the corresponding solution of the Maxwell equations as the pulse duration shortens \cite{Y2005}. However, we can find that the SP equation appeared first as one of Rabelo's equations which describe photospherical surfaces, possessing a zero-curvature representation \cite{1989ML}.

Some scholar found that the SP equation can be viewed as the short wave approximation to the  modified Camassa-Holm equation \cite{BFch1,BFch2,PJO,Qiao}
\begin{equation}
m_t+\left((u^2-u^2_x)m \right) _x+u_x=0,\hspace{0.5cm}m=u-u_{xx}.\label{CH}
\end{equation}
 Actually,  by introducing the new variables
\begin{equation}
x'=\frac{x}{\epsilon},\hspace{0.5cm}t'=t\epsilon,\hspace{0.5cm}u'=\frac{u}{\epsilon^2},\nonumber
\end{equation}
passing to the limit $\epsilon\to 0 $ and retaining the main terms, we can  reduce the modified Camassa-Holm  equation (\ref{CH})    to the SP equation (\ref{sp}).

 It has been shown that the  SP equation (\ref{sp}) admits a  Wadati-Konno-Ichikawa  type Lax pair and is related to the sine-Gordon equation   through a chain of transformations  \cite{A.S2005}.
Then bi-Hamiltonian structure and the conservation laws were studied by   Brunelli \cite{JCB2005, JCB2006}. Sch$\ddot{a}$fer and Wayne  proved the nonexistence theorem that the equation (\ref{sp}) doesn't possess any solution representing a smooth localized pulse moving with constant shape and speed \cite{T.CE}.   Sakovich and  Sakovich also found the loop-soliton solutions of the SP equation  (\ref{sp}) \cite{A.S2006}.   Matsuno found the connection between the SP equation and the sine-Gordon equation through the hodograph transformation, and further
  found many kinds of exact solutions including    multi-soliton,  multi-loop,  multi-breather and  Periodic solutions  \cite{Matsuno, YM2008}.
    And a lot of generalizations of the SP equation has been studied, for example,  Pietrzyk, Kanatt$\breve{s}$ikov and Bandelow  introduced the vector SP equation  \cite{vsp},  a two-component SP  equation that generalizes the scalar (\ref{sp}) and describes the propagation of polarized ultra-short light pulses in cubically nonlinear anisotropic optical fibers, which can be written as
\begin{equation}
u_{n,xt}=c_{ni}u_i+c_{nijk}\left( u_iu_ju_k\right) _{xx},
\end{equation}
where $n,i,j,k = 1,2$, and the summation over the repeated indices is assumed \cite{SSvsp}. And there are
 others aspects of the SP equation have been addressed in the literature, including  integrable semi-discrete and full-discrete analogues \cite{BFF},
 well-posedness of the Cauchy problem \cite{wellG,wellD} and Riemann-Hilbert approach \cite{RHPsp}.  Using the method of testing by wave packets,   Okamoto prove the unique global existence of small solutions to the SP equation (\ref{sp}) when the small  initial data $u_0$  \cite{MO}.

In 1974,  Manakov first  carried out  the  study  on the long-time behavior of nonlinear wave equations solvable by the inverse scattering method \cite{Manakov1974}.
 Then by using this method, Zakharov and Manakov   give the first result   for large-time asymptotic  of solutions for the  NLS equation  with  decaying initial value  in 1976 \cite{ZM1976}.    The inverse scattering method    also    worked  for long-time behavior of integrable systems    such as  KdV,  Landau-Lifshitz  and the reduced Maxwell-Bloch   system \cite{SPC,BRF,Foka}.   In 1993, 
    Deift and Zhou  developed a  nonlinear steepest descent method to rigorously obtain the long-time asymptotics behavior of the solution for the MKdV equation
by deforming contours to reduce the original  Riemann-Hilbert problem (RHP) to a  model one  whose solution is calculated in terms of parabolic cylinder functions \cite{RN6}. 
Since then    this method
has been widely  applied  to  the focusing NLS equation, KdV equation, Fokas-Lenells equation,    derivative NLS equation, short-pluse equation and  Camassa-Holm equation  etc. \cite{RN9,RN10,Grunert2009,MonvelCH,xu2015,xufan2013,xusp}.

In recent years,   McLaughlin and   Miller further  presented a $\bar\partial$ steepest descent method which combine   steepest descent  with  $\bar{\partial}$-problem  rather than the asymptotic analysis
 of singular integrals on contours to analyze asymptotic of orthogonal polynomials with non-analytical weights  \cite{MandM2006,MandM2008}.
When  it  is applied  to integrable systems,   the $\bar\partial$ steepest descent method  also has  displayed some advantages,  such as   avoiding
  delicate estimates involving $L^p$ estimates  of Cauchy projection operators, and leading  the non-analyticity in the RHP reductions
   to a $\bar{\partial}$-problem in some sectors of the complex plane  which can be solved by being recast into an integral equation and by
   using Neumann series.   Dieng and  McLaughin use it to study the defocusing NLS equation  under essentially minimal regularity assumptions on finite mass
   initial data \cite{DandMNLS};  Cussagna and  Jenkins study the defocusing NLS equation with finite density initial data \cite{SandRNLS};  
   This $\bar\partial$ steepest descent method    also was successfully
    applied to prove asymptotic stability of N-soliton solutions to focusing NLS equation \cite{fNLS};
    Jenkins et.al  studied  soliton resolution for the derivative nonlinear NLS equation  for generic initial data in a weighted Sobolev space \cite{Liu3}.  Their work
    provided  the   soliton resolution property  for  derivative NLS equation, which    decomposes  the solution into the sum of a finite number of separated solitons and a
    radiative parts when $t\to\infty$.  And  the dispersive part contains two components, one coming from the continuous spectrum and another from the interaction of the discrete and continuous spectrum.

In our paper, we obtain  the  soliton resolution and  long-time asymptotic behavior  for the SP equation   (\ref{sp})
 with initial data $u_0\in H^{1,1}$   by using    $\bar\partial$ steepest descent method.
 This  paper is arranged as follows.  In section 2,  we introduce  two kinds of eigenfunctions to
 formulate the spectral  singularity of  the  Lax pair for the short-pluse equation. The analytical and asymptotics
 of the eigenfunctions are further studied.  In section 3,  following the
idea in  \cite{xusp},   we   construct  a  RH   problem   for   $M(z)$ to formulate the initial value problem of  the short-pluse equation (\ref{sp})-(\ref{sp2})   in an
alternative space variable $y$ instead of the original space variable $x$.   In section 4,   we   introduce
a function $T(z)$ to  define a new   RH problem  for  $M^{(1)}(z)$,  which  admits a regular discrete spectrum and  two  triangular  decompositions of the jump matrix
near critical point $\pm z_0$.
   In section 5,  by introducing a matrix-valued  function  $R(z)$,  we obtain  a mixed $\bar{\partial}$-RH problem  for  $M^{(2)}$  by continuous extension to $M^{(1)}$.
     In section 6,  we decompose  $M^{(2)}(z)$    into a
 model RHP  problem  for  $M^{RHP}(z)$ and a  pure $\bar{\partial}$ Problem for  $M^{(3)}(z)$.
 The  $M^{RHP}(z)$  can be obtained  via  an outer model $M^{(out)}(z)$   for the soliton components to  be solved   in Section \ref{sec7},
  and an inner model $M^{(\pm z_0)}$  for the stationary phase point $\pm z_0$   which are   approximated   by  a solvable model  for   $M^{sp}$  obtained in  \cite{xusp} in Section \ref{sec8}.
  In section \ref{sec9},  we compute  the error function  $E(z)$ with a  small-norm Riemann-Hilbert problem.
 In Section 10,   we analyze  the $\bar{\partial}$-problem  for $M^{(3)}$.
   Finally, in Section 11,   based on  the result obtained above,   a relation formula
   is found
\begin{align}
 M(z) = M^{(3)}(z) E(z) M^{out}(z)  T(z)^{-\sigma_3},\nonumber
\end{align}
from which   we then obtain the soliton  resolution  and  long-time   asymptotic behavior  for the short-pluse equation (\ref{sp}).

\section {The spectral analysis}

The SP equation (\ref{sp})  admits the Lax pair \cite{A.S2005}
\begin{equation}
\Phi_x = \left( \lambda\sigma_3+L_0\right) \Phi,\hspace{0.5cm}\Phi_t =\left(  \frac{1}{4\lambda}\sigma_3+M_0\right) \Phi, \label{lax0}
\end{equation}
where
\begin{equation}
L_0=\lambda u_x\sigma_1,\nonumber
\end{equation}
\begin{equation}
M_0=\frac{\lambda}{2}u^2\sigma_3+\left(\frac{\lambda}{2}u^2u_x+\frac{1}{2}u\right)\sigma_1,\nonumber
\end{equation}
and $\sigma_1$, $\sigma_2$ and $\sigma_3$  are  Pauli matrices
\begin{equation}
\sigma_1=\left(\begin{array}{cc}
0 & 1  \\
1 & 0
\end{array}\right),\hspace{0.5cm}\sigma_2=\left(\begin{array}{cc}
0 & -i  \\
i & 0
\end{array}\right),\hspace{0.5cm}\sigma_3=\left(\begin{array}{cc}
1 & 0   \\
0 & -1
\end{array}\right).\hspace{0.5cm}.\nonumber
\end{equation}
In order to make the presentation close to   the cases of the  CH  equation  \cite{MonvelCH}, we introduce the spectral parameter $z=i\lambda$.

First, we consider the symmetry of  the eigenfunction  $\Phi$. Note that
$$L_0(-z)=-L_0(z),\  M_0(-z)=-M_0(z), \ \sigma_2\sigma_1=-\sigma_1\sigma_2,\  \sigma_2\sigma_3=-\sigma_1\sigma_3,$$
  we find that all  $\overline{\Phi(\bar{z})}$, $\Phi(-z)$ and $\sigma_2\Phi(z)\sigma_2$  satisfy the Lax pair  (\ref{lax0}) with the  same asymptotics, which
  implies that
\begin{equation}
\overline{\Phi(\bar{z})}=\Phi(-z)=\sigma_2\Phi(z)\sigma_2.\label{symPhi}
\end{equation}

To study the long time asymptotic behaviors, usually we only use the $x$-part of Lax pair to analyze the initial value problem, and the $t$-part   is  used
to determine the time evolution of the scattering data for the integrable equations by inverse scattering transform method.
But unlike those of NLS and  derivative NLS euqations \cite{DandMNLS,SandRNLS,fNLS,YYL},    the Lax pair   (\ref{lax0}) for SP equation  has
 singularities at $z=0$ and   $z=\infty$. In order to control the behavior of solutions of (\ref{lax0}) and construct the solution $u(x,t)$ of the SP equation (\ref{sp}), we need
 use the $t$-part and  the expansion of the eigenfunction as spectral parameter $z\to 0$. So we use two different transformations respectively to analyze these two singularities $z=0$ and $z=\infty$. Because in the case $z=0$ it has well property, we first consider this case.

\noindent \textbf{Case I: z=0.}

 Consider the Jost solutions of  the Lax pair (\ref{lax0}), which are restricted by the boundary conditions
\begin{equation}
\Phi_\pm \sim  e^{i(zx-t/4z)\sigma_3}, \hspace{0.5cm}x\to \pm\infty.\label{asyx}
\end{equation}
By making transformation
\begin{equation}
\mu^0_\pm=\Phi_\pm e^{-i(zx-t/4z)\sigma_3},\label{trans2}
\end{equation}
we then  have
\begin{equation*}
\mu^0_\pm \sim I, \hspace{0.5cm} x \rightarrow \pm\infty.
\end{equation*}
Moreover,  $\mu^0_\pm$    satisfy an  equivalent Lax pair
\begin{align}
&(\mu^0_\pm)_x = iz[\sigma_3,\mu^0_\pm]+L_0\mu^0_\pm,\label{lax0.1}\\
&(\mu^0_\pm)_t = \frac{1}{4iz}[\sigma_3,\mu^0_\pm]+M_0\mu^0_\pm, \label{lax0.2}
\end{align}
from which we obtain its total differential form
\begin{equation}
d\left(e^{-i(zx-t/4z)\hat{\sigma}_3}\mu^0_\pm \right)=e^{-i(zx-t/4z)\hat{\sigma}_3}\left(L_0dx+M_0dt \right)\mu^0_\pm,
\end{equation}
whose   solutions  can be expressed as  Volterra type integrals
\begin{equation}
\mu^0_\pm=I+\int_{x}^{\pm \infty}e^{iz(x-y)\hat{\sigma}_3}L_0(y)\mu^0_\pm(y)dy\label{intmu0}.
\end{equation}
Then we can show that

\begin{proposition}
	 As $u(\cdot,t)\in H^1(\mathbb{R})$ for all $t\in \mathbb{R}$,	 the fundamental eigenfunctions $\mu^0_\pm$  exists  and is   unique.
\end{proposition}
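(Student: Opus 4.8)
The natural route is to read \eqref{intmu0} as a Volterra integral equation in the variable $x$ (with $t$ and $z$ frozen) and to solve it by Picard iteration. Write $\mu^0_\pm=\sum_{n\ge 0}P_n^\pm$ with $P_0^\pm=I$ and
\begin{equation}
P_{n+1}^\pm(x)=\int_x^{\pm\infty}e^{iz(x-y)\hat{\sigma}_3}L_0(y)P_n^\pm(y)\,dy.\nonumber
\end{equation}
First I would record the pointwise bounds on the kernel. Since $e^{iz(x-y)\hat{\sigma}_3}$ acts by conjugation with $e^{iz(x-y)\sigma_3}$, its off-diagonal entries carry the factors $e^{\pm 2iz(x-y)}$, each of modulus $1$ when $z\in\mathbb{R}$ on the relevant half-line (for $\mu^0_+$ the integration is over $y\ge x$, for $\mu^0_-$ over $y\le x$), while $L_0=-iz\,u_x\sigma_1$ is off-diagonal; hence $\|e^{iz(x-y)\hat{\sigma}_3}L_0(y)\|\le |z|\,|u_x(y)|$ on the whole domain of integration. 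The standard estimate for ordered iterated integrals then gives
\begin{equation}
\|P_n^\pm(x)\|\le \frac{1}{n!}\Big(|z|\int_x^{\pm\infty}|u_x(y)|\,dy\Big)^{\!n},\nonumber
\end{equation}
so that, as soon as $\rho_\pm(x):=\int_x^{\pm\infty}|u_x(y)|\,dy$ is finite, the series converges absolutely and uniformly on sets where $\rho_\pm$ is bounded, producing a continuous solution of \eqref{intmu0}; differentiating term by term shows it solves the $x$-equation \eqref{lax0.1} with the required normalization $\mu^0_\pm\to I$.

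The real point --- and the step I expect to cost the most work --- is verifying that the potential in \eqref{intmu0} is integrable enough for the series to converge, i.e.\ controlling $\rho_\pm$ under the regularity assumed on $u$. Because $H^1(\mathbb{R})$ is not contained in $W^{1,1}(\mathbb{R})$, one cannot simply bound $\rho_\pm$ by $\|u_x\|_{L^1}$; instead I would integrate by parts once in each iterate, using $\partial_y e^{\pm 2iz(x-y)}=\mp 2iz\,e^{\pm 2iz(x-y)}$, to move the $x$-derivative off $u$. This produces a boundary term proportional to $u(x)$, bounded since $H^1(\mathbb{R})\hookrightarrow L^\infty(\mathbb{R})$ and $u(\cdot,t)\to 0$ at $\pm\infty$, together with an integral in which $u$ rather than $u_x$ appears; the latter is controlled by Cauchy--Schwarz against a weight, e.g.\ $\int_x^{\infty}|u|\,dy\le \|yu\|_{L^2}\,\|y^{-1}\|_{L^2(x,\infty)}\lesssim |x|^{-1/2}$ for large $|x|$, which is where one uses that $u(\cdot,t)$ additionally carries the weighted decay $yu(\cdot,t)\in L^2$ coming from $u_0\in H^{1,1}$. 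After this reduction the Neumann series is dominated by a convergent exponential series uniformly in $x$ and locally uniformly in $z$, which simultaneously yields boundedness, continuity, and --- on the appropriate half-plane for each column --- analyticity in $z$ of $\mu^0_\pm$.

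Finally, uniqueness follows from the same contraction estimate: if $\mu^0_\pm$ and $\tilde\mu^0_\pm$ both solve \eqref{intmu0}, their difference $\delta$ satisfies the homogeneous Volterra equation $\delta=T_\pm\delta$ with $T_\pm$ the integral operator above, hence $\delta=T_\pm^{\,n}\delta$ for every $n$, and $\|T_\pm^{\,n}\|\le \rho_\pm^{\,n}/n!\to 0$ forces $\delta\equiv 0$ (a Gr\"onwall argument gives the same conclusion). I would then check that the constructed $\mu^0_\pm$ are consistent with the symmetry \eqref{symPhi}, since it is used repeatedly later. Apart from the integration-by-parts reduction flagged above --- trading the single $L^2$ derivative of $u$ for decay --- every step is routine Volterra theory.
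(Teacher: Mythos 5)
Your overall route---reading (\ref{intmu0}) as a Volterra equation, Picard/Neumann iteration with the factorial bound, and uniqueness from the nilpotency of the Volterra operator---is exactly the standard argument the paper implicitly appeals to (the proposition is stated there with no written proof), so your first and third paragraphs are in line with the paper's intent. The problem is the middle step, which you yourself identify as the crux: as written it does not close. The integration by parts cannot simply be performed ``in each iterate'': in $P_{n+1}(x)=\int_x^{\pm\infty}e^{iz(x-y)\hat{\sigma}_3}L_0(y)P_n(y)\,dy$ the $y$-derivative also falls on $P_n(y)$, and from the iteration one has $\partial_y P_n=iz[\sigma_3,P_n]-L_0(y)P_{n-1}(y)$, so the boundary-plus-integral manipulation reintroduces terms of size $|z|\,|u_x|$ together with commutator terms carrying an extra factor of $z$; the claimed domination of the Neumann series by a convergent exponential series is therefore never established, and one would need a genuinely closed estimate (for the resolvent, not the naive iterates). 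In addition, the decay input you invoke, $y\,u(\cdot,t)\in L^2(\mathbb{R})$, is not among the hypotheses of the proposition (which assumes only $u(\cdot,t)\in H^1(\mathbb{R})$) and is nowhere established in the paper for $t\neq 0$: the weighted condition $H^{1,1}$ is imposed only on the initial datum, and persistence of the weight under the flow is a separate, unproved claim. So the patch rests on both an unjustified estimate and an unavailable hypothesis.

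Concretely, with only $u_x(\cdot,t)\in L^2$ your very first display is already problematic, since $\rho_\pm(x)=\int_x^{\pm\infty}|u_x|\,dy$ need not be finite and, for real $z$, the oscillatory factors give no decay, so even the first iterate is not absolutely convergent. Two honest ways to proceed are: (i) prove existence and uniqueness columnwise for $\operatorname{Im}z\neq 0$, where $|e^{\pm 2iz(x-y)}|$ decays exponentially on the relevant half-line and Cauchy--Schwarz against $\|u_x\|_{L^2}$ makes the series converge, and then treat the real axis by a limiting or density argument; or (ii) strengthen the hypothesis to $u_x(\cdot,t)\in L^1(\mathbb{R})$, in which case your first and third paragraphs already constitute a complete and correct proof coinciding with the paper's intended one. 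As it stands, the middle paragraph is a genuine gap rather than a harder-but-routine step.
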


Denote   $\mu^0_\pm=\left(\left[ \mu^0_\pm\right]_1, \left[ \mu^0_\pm\right]_2 \right) $,  where  $\left[ \mu^0_\pm\right] _1$ and $\left[ \mu^0_\pm\right] _2$ are
the first and second columns of $\mu^0_\pm$ respectively.
 Then  from  (\ref{intmu0}),   we can show that  $\left[ \mu^0_-\right] _1$ and $\left[ \mu^0_+\right] _2$ are analysis in $\mathbb{C}^-$, and $\left[ \mu^0_+\right] _1$
 and $\left[ \mu^0_-\right] _2$ are analysis in $\mathbb{C}^+$.

It is necessary to discuss the asymptotic behaviors of the Jost solutions $\mu^0_\pm$ as $z \rightarrow 0$. We consider the following  asymptotic  expansions
\begin{align}
&\mu_\pm^{0}=\mu_\pm^{0,(0)}+\mu_\pm^{0,(1)}z+\mu_\pm^{0,(2)}z^2+\mathcal{O}(z^{3}),\hspace{0.5cm}\text{as }z \rightarrow 0,\hspace{0.5cm}j=1,2,\label{expansion1}
\end{align}
where $\mu_\pm^{0,(k)}$ isn't depend on $z$, $k=0,1,2,...$.

Substituting   (\ref{expansion1})   into the Lax pair (\ref{lax0.1}) and (\ref{lax0.2}),   and comparing the coefficients,  we obtain
\begin{equation}
\mu_\pm^{0,(0)}=I,\hspace{0.5cm}\mu_\pm^{0,(1)}=izu\sigma_1,\hspace{0.5cm}\mu_\pm^{0,(2)}=-\frac{u^2}{2}I+i(u^2u_x-2u_t)\sigma_2.
\end{equation}

\noindent \textbf{Case II: $z\to \infty$.}

In order to control asymptotic behavior of the Lax pair (\ref{lax0}) as $z\to \infty$,
 we make a transformation
\begin{equation}
\Phi_\pm=G\mu_\pm e^{izp\sigma_3}\label{transmu},
\end{equation}
where
\begin{align}
&p(x,t,z)=x-\int_{x}^{+\infty} (\sqrt{m(y,t)}-1)dy-\dfrac{t}{4z^2}\label{p},\\
&m(x,t)=1+u_x(x,t)^2\label{m},\\
&G(x,t) =\sqrt{\dfrac{\sqrt{m}+1}{2\sqrt{m}}}\left(\begin{array}{cc}
1 & -\frac{\sqrt{m}-1}{u_x}   \\[4pt]
\frac{\sqrt{m}-1}{u_x} & 1
\end{array}\right).
\end{align}
Then    the SP  equation (\ref{sp})  is changed  into an   equivalent form
\begin{equation*}
(\sqrt{m})_t=\frac{1}{2}(u^2\sqrt{m})_x.
\end{equation*}
 And from (\ref{p}),  we have
\begin{equation}
p_x=\sqrt{m}, \ \ p_t=\frac{1}{2}u^2\sqrt{m}-\frac{1}{4z^2}.
\end{equation}

The Lax pair  (\ref{lax0}) is changed into  a new Lax pair
\begin{align}
&(\mu_\pm)_x = izp_x[\sigma_3,\mu_\pm]+P\mu_\pm,\label{lax1.1}\\
&(\mu_\pm)_t =izp_t[\sigma_3,\mu_\pm]+Q\mu_\pm, \label{lax1.2}
\end{align}
where
\begin{align}
&P=\dfrac{iu_{xx}}{2m}\sigma_2,\ \  Q=\frac{1}{4iz}\left(\frac{1}{\sqrt{m}} -1\right) \sigma_3+\dfrac{iu_{xx}u^2}{4m}\sigma_2-\frac{u_x}{4iz\sqrt{m}}\sigma_1,\\
&\mu_\pm \sim I, \hspace{0.5cm} x \rightarrow \pm\infty.
\end{align}
The Lax pair (\ref{lax1.1})-(\ref{lax1.2}) can be written in to a  total differential form
\begin{equation}
d\left(e^{-izp\hat{\sigma}_3}\mu_\pm \right)=e^{-izp\hat{\sigma}_3}\left(Pdx+Qdt \right)\mu_\pm,
\end{equation}
 which leads to two  Volterra type integrals
\begin{equation}
\mu_\pm=I+\int_{x}^{\pm \infty}e^{iz(p(x)-p(y))\hat{\sigma}_3}P(y)\mu_\pm(y)dy\label{intmu}.
\end{equation}
Similarly, we denote $\mu_\pm=\left(\left[ \mu_\pm\right] _1,\left[ \mu_\pm\right] _2 \right) $, then  we can show  that $\left[ \mu_-\right] _1$ and $\left[ \mu_+\right] _2$ are analysis in $\mathbb{C}^-$, and $\left[ \mu_+\right] _1$ and $\left[ \mu_-\right] _2$ are analysis in $\mathbb{C}^+$.   And the  $\mu_\pm$  admit the  asymptotics
\begin{align}
\mu_\pm=I+\dfrac{D_1}{z}+\mathcal{O}(z^{-2}),\hspace{0.5cm}z \rightarrow \infty,\label{asymu}
\end{align}
where the off-diagonal entries of the matrix $D_1(x, t)$ are
\begin{equation}
D_{12}(x,t)=D_{21}(x,t)=\dfrac{iu_{xx}}{4m\sqrt{m}}.
\end{equation}
Since   $\Phi_\pm$ are two fundamental matrix solutions of the  Lax  pair (\ref{lax0}),  there exists a linear  relation between $\Phi_+$ and $\Phi_-$, namely
\begin{equation}
\Phi_+(z;x,t)=\Phi_-(z;x,t)S(z),\hspace{0.5cm} z\in \mathbb{C},\label{scattering}
\end{equation}
where $S(z)$ is called scattering matrix which only depend on $z$.   Form the symmetry relation (\ref{symPhi})  of  $\Phi_\pm$,  the matrix   $S(z)$
also admits the symmetry
\begin{align*}
S(z) =\left(\begin{array}{cc}
\overline{a(z)} & b(z)   \\[4pt]
-\overline{b(z)} & a(z)
\end{array}\right).
\end{align*}
And   combining (\ref{transmu}) and (\ref{scattering}) gives
\begin{equation}
\mu_-(z)=\mu_+(z)e^{izp\hat{\sigma}_3}S(z),
\end{equation}
which can be written as
\begin{equation}
\left(\left[ \mu_-\right] _1,\left[ \mu_-\right] _2 \right)=\left(\left[ \mu_+\right] _1,\left[ \mu_+\right] _2 \right)\left(\begin{array}{cc}
\overline{a(z)} & e^{2izp}b(z)   \\[4pt]
-e^{-2izp}\overline{b(z)} & a(z)
\end{array}\right),
\end{equation}
which implies that
\begin{equation}
a(z)=\det\left(\left[ \mu_+\right] _1,\left[ \mu_-\right] _2 \right),\label{a}
\end{equation}
and  $a(z)$  is analytical  in $C^+$, and $a(z)=-\overline{a(-\bar{z})}$.   We introduce the reflection coefficient
\begin{equation}
r(z)=\dfrac{b(z)}{a(z)},
\end{equation}
with symmetry $r(-z)=\overline{r(\bar{z})}$. The zeros of $a(z)$ on $\mathbb{R}$  are known to
 occur and they correspond to spectral singularities \cite{RN3}.  They are excluded from our analysis in the this paper. To deal with our following work,
we assume our initial data satisfy this assumption.
\begin{Assumption}\label{initialdata}
	The initial data $u_0 (x) \in H^{1,1}(\mathbb{R})$     and it generates generic scattering data which satisfy that
	
	\textbf{1. }a(z) has no zeros on $\mathbb{R}$.
	
	\textbf{2. }a(z) only has finite number of simple zeros.
	
	\textbf{3. }a(z) and r(z) belong  $H^{1,1}(\mathbb{R})$.
\end{Assumption}
We assume that $a(z)$ has N simple zeros $z_n \in \mathbb{C}^+, n = 1, 2,.., N$, $a(z_n) = 0$. Denote $\mathcal{Z}$=$\left\lbrace z_n\right\rbrace ^N_{n=1}$ which is the set of the zeros in $\mathbb{C}^+$ of $a(z)$.
From (\ref{asymu}) and (\ref{a}),    we obtain  the asymptotic   of $a(z)$
\begin{align}
a(z)=1+\mathcal{O}(z^{-1}),\hspace{0.5cm}z \rightarrow \infty.\label{asya}
\end{align}

\quad    We can formulate a  RH problem    by defining  the matrix function $M(x,t,z)$  with  eigenfunctions $\mu_\pm$, while
the  reconstruction formula between  the solution $u(x, t)$  and   the   RH problem    can be found from the  asymptotic   of  $\mu_\pm$ as $z \to 0$.
  So we need to calculate  the relation between $\mu_\pm$ and $\mu^0_\pm$.

The relations  (\ref{trans2}) and (\ref{transmu}) implies that  there exist constant matrices  $C_\pm(z)$ satisfying
\begin{equation}
\mu_\pm(x,t,z)=G^{-1}(x,t)\mu^0_\pm e^{i(zx-t/4z)\sigma_3}C_\pm(z)e^{-izp(x,t,z)\sigma_3},
\end{equation}
which means $\mu_\pm(x,t,z)$ exists and is unique.
Take $x\to \pm \infty$,  we have
\begin{equation}
C_+=I,\hspace{0.5cm}C_-=e^{izc\sigma_3},
\end{equation}
where
\begin{equation}
c=\int_{R}(\sqrt{m}-1) dy\label{c}
\end{equation}
is a conserved quantity under the dynamics governed by (\ref{sp}). Then we have
\begin{equation}
\mu_\pm(x,t,z)=G^{-1}(x,t)\mu^0_\pm e^{-iz\int_{\pm\infty}^x (\sqrt{m}-1) dy\sigma_3}.
\end{equation}

Since  tr$(iz\sigma_3+L_0)$=tr$(\frac{1}{4iz}\sigma_3+M_0)$=0,   by  the  Able formula,   it holds that
\begin{equation}
\det(\Phi_\pm)_x=\det(\Phi_\pm)_t=0,
\end{equation}
which together with $\det(\mu^0_\pm)=\det(\Phi_\pm)$
leads to
\begin{equation}
\det(\mu^0_\pm)_x=\det(\mu^0_\pm)_t=0,
\end{equation}
and
\begin{equation}
1=\det(\mu^0_\pm)=\det(\Phi_\pm)=\det(S(z)).
\end{equation}
Then we have $|a(z)|^2+|b(z)|^2=1$, which is equivalent to $1+|r(z)|^2=\frac{1}{|a(z)|^2}$. In the absence of spectral singularities (real zeros of $a(z)$), there also exist $\nu\in (0,1)$ such that $\nu<|a(z)|<1/\nu$ for $z\in R$, which implies $1+|r(z)|>\nu^2>0$ for $z\in R$. And from the asymptotic   of the Jost solutions $\mu^0_\pm$ as $z \rightarrow 0$, we get  the asymptotic of $a(z)$
\begin{equation}
a(z)=1+icz-\frac{c^2}{2}z^2+\mathcal{O}(z^3),\ \ z \rightarrow 0,\label{asya0}
\end{equation}
where $c$ is defined in (\ref{c}).

\section{The construction of   a RH problem}

\quad   Suppose that  $\mathcal{Z}=\{z_n, \   n=1, \cdots, N\}$  are simple zeros for $a(z)$,   we  first calculate residue conditions.
Since   $(\Phi_+^1,\Phi_+^2)$ and  $(\Phi_-^1,\Phi_-^2)$  are linearly dependent,  there exists a constant $b_k$  such that
\begin{equation*}
(\Phi_+^1,\Phi_+^2)=b_k(\Phi_-^1,\Phi_-^2),
\end{equation*}
which implies that
\begin{equation}
\left[ \mu_+\right] _1(z_k)=b_ke^{2iz_kp(z_k)}\left[ \mu_-\right] _2(z_k).
\end{equation}
 We  denote   norming constant    $c_k=b_k/a'(z_k)$,   and  the collection
$\sigma_d=  \left\lbrace z_k,c_k\right\rbrace^N_{k=1}  $
is called the \emph{scattering data}.

We define  a   sectionally meromorphic matrix
\begin{equation}
N(z;x,t)=\left\{ \begin{array}{ll}
\left( \left[ \mu_+\right] _1, a(z)^{-1}\left[ \mu_-\right] _2\right),   &\text{as } z\in \mathbb{C}^+,\\[12pt]
\left( \overline{a(\bar{z})}^{-1}\left[ \mu_-\right] _1,\left[ \mu_+\right] _2\right)  , &\text{as }z\in \mathbb{C}^-,\\
\end{array}\right.
\end{equation}
which solves the following RHP.

\noindent\textbf{RHP1}.  Find a matrix-valued function $N(z;x,t)$ which satisfies:

$\blacktriangleright$ Analyticity: $N(z;x,t)$ is meromorphic in $\mathbb{C}\setminus \mathbb{R}$ and has single poles;

$\blacktriangleright$ Symmetry: $\overline{N(\bar{z})}$=$N(-z)$=$\sigma_2N(z)\sigma_2$;

$\blacktriangleright$ Jump condition: $N$ has continuous boundary values $N_\pm$ on $\mathbb{R}$ and
\begin{equation}
N^+(z;x,t)=N^-(z;x,t)V(z),\hspace{0.5cm}z \in \mathbb{R},
\end{equation}
where
\begin{equation}
V(z)=\left(\begin{array}{cc}
1 & e^{2izp}r(z)\\
e^{-2izp}\overline{r(z)} & 1+|r(z)|^2
\end{array}\right);
\end{equation}

$\blacktriangleright$ Asymptotic behaviors:
\begin{align}
&N(z;x,t) = I+\mathcal{O}(z^{-1}),\hspace{0.5cm}z \rightarrow \infty;
\end{align}

$\blacktriangleright$ Residue conditions: N has simple poles at each point in $ \mathcal{Z}\cup \bar{\mathcal{Z}}$ with:
\begin{align}
&\res_{z=z_n}N(z)=\lim_{z\to z_n}N(z)\left(\begin{array}{cc}
0 & c_ne^{-2iz_kp(z_k)}\\
0 & 0
\end{array}\right),\\
&\res_{z=\bar{z}_n}N(z)=\lim_{z\to \bar{z}_n}N(z)\left(\begin{array}{cc}
0 & 0\\
-\bar{c}_ne^{-2i\bar{z}_np(\bar{z}_n)} & 0
\end{array}\right).
\end{align}

We denote
\begin{equation}
c_+(x,t)=\int_{x}^{+\infty}(\sqrt{m(k,t)}-1)dk,\label{c+}
\end{equation}
and  consider the asymptotic  of $N(z;x,t)$
\begin{equation}
N(z;x,t)=G^{-1}(x,t)\left[I+z(ic_+\sigma_3+iu\sigma_1+\mathcal{O}(z^2)) \right], \ \ z\to 0.\label{asyM0}
\end{equation}
from which  it is difficult to reconstruct the solution of the SP equation (\ref{sp}), since  $p(x,t,z)$ is still unknown.
To overcome this,  we introduce a  new   scale
\begin{equation}
y(x,t)=x-\int_{x}^{+\infty} \left(\sqrt{m(k,t)}-1\right) dk=x-c_+(x,t).
\end{equation}

The price to pay for this is that the solution of the initial problem can be given only implicitly,
or parametrically: it will be given in terms of functions in the new scale, whereas the original scale will also be given in terms of functions in the new scale.
By the definition of the new scale $y(x, t)$, we define
\begin{equation}
M(y,t,z)=N(x(y,t),t,z),
\end{equation}
which satisfies the following  RH problem.

\noindent\textbf{RHP2}. Find a matrix-valued function  $M(z)=M(y,t,z)$ which satisfies:

$\blacktriangleright$ Analyticity: $M(z)$ is meromorphic in $\mathbb{C}\setminus R$ and has single poles;

$\blacktriangleright$ Symmetry: $\overline{M(\bar{z})}$=$M(-z)$=$\sigma_2M(z)\sigma_2$;

$\blacktriangleright$ Jump condition: M has continuous boundary values $M_\pm$ on $R$ and
\begin{equation}
M^+(z)=M^-(z)V(z),\hspace{0.5cm}z \in R,\label{jump}
\end{equation}
where
\begin{equation}
V(z)=\left(\begin{array}{cc}
1 & e^{2i(zy-\frac{t}{4z})}r(z)\\
e^{-2i(zy-\frac{t}{4z})}\overline{r(z)} & 1+|r(z)|^2
\end{array}\right);\label{jumpv}
\end{equation}

$\blacktriangleright$ Asymptotic behaviors:
\begin{align}
&M(z) = I+\mathcal{O}(z^{-1}),\hspace{0.5cm}z \rightarrow \infty;
\end{align}

$\blacktriangleright$ Residue conditions: M has simple poles at each point in $ \mathcal{Z}\cup \bar{\mathcal{Z}}$ with:
\begin{align}
&\res_{z=z_n}M(z)=\lim_{z\to z_n}M(z)\left(\begin{array}{cc}
0 & c_ne^{-2i(z_ny-\frac{t}{4z_n})}\\
0 & 0
\end{array}\right),\label{RES1}\\
&\res_{z=\bar{z}_n}M(z)=\lim_{z\to \bar{z}_n}M(z)\left(\begin{array}{cc}
0 & 0\\
-\bar{c}_ne^{-2i\bar{z}_n(\bar{z}_ny-\frac{t}{4\bar{z}_n})} & 0
\end{array}\right).\label{RES2}
\end{align}
From the asymptotic behavior of the functions $\mu_\pm$ and (\ref{asyM0}), we have following reconstruction formula of $u(x,t)=u(y(x,t),t)$:
\begin{equation}
u(x,t)=u(y(x,t),t)=\lim_{z\to 0}\dfrac{\left(M(y,t,0)^{-1}M(y,t,z) \right)_{12} }{iz},\label{recons u}
\end{equation}
where
\begin{equation}
x(y,t)=y+c_+(x,t)=y+\lim_{z\to 0}\dfrac{\left(M(y,t,0)^{-1}M(y,t,z) \right)_{11}-1 }{iz}.
\end{equation}

\section{Conjugation}

\quad
In the jump matrix (\ref{jumpv}), we  denote  the  oscillatory term
\begin{equation}
e^{2i(zy-\frac{t}{4z})}=e^{2it \theta(z) },\ \ \theta(z)=z\frac{y}{t}-\frac{1}{4z},\label{theta}
\end{equation}
it will be found that  the long-time asymptotic  of RHP2  is affected by the growth and decay of the exponential function $e^{2it\theta}$ appearing in
 both the jump relation and the residue conditions.
 In this section, we introduce  a new transform  $M(z)\to M^{(1)}(z)$, from which we make that the  $M^{(1)}(z)$
 is well behaved as $|t|\to \infty$ along any characteristic line.

  Let $\xi=\frac{y}{t}<0$, then  $z_0=\sqrt{-\frac{1}{4\xi}}\in \mathbb{R}$,  where $\pm z_0$ are the two critical points of the phase function $\theta(z)$.
  The case of $\xi>0$ is discussed by   Xu   \cite{xusp}.
  Then (\ref{theta}) can be written as
\begin{equation}
\theta(z)=-\frac{z}{4}\left(\frac{1}{z_0^2} +\frac{1}{z^2} \right) ,\hspace{0.5cm}\text{Re} (2it\theta)=-2t\text{Im}\theta=-2t\text{Im}z\left(\xi+\frac{1}{4|z|^2} \right) .
\end{equation}
The partition $\Delta^\pm_{z_0,\eta}$ of $\left\lbrace 1,...,N\right\rbrace $ for $z_0 \in \mathbb{R}$, $\eta$ = ${\rm sgn}(t)$ is defined as follow:
\begin{align*}
\Delta^+_{z_0,1}=\Delta^-_{z_0,-1}=\left\lbrace k \in \left\lbrace 1,...,N\right\rbrace  ||z_k|<z_0\right\rbrace, \\
\Delta^-_{z_0,1}=\Delta^+_{z_0,-1}=\left\lbrace k \in \left\lbrace 1,...,N\right\rbrace  ||z_k|>z_0\right\rbrace.
\end{align*}
This partition splits the residue coefficients $c_n$ in two sets which is shown in Figure. \ref{fig1}.
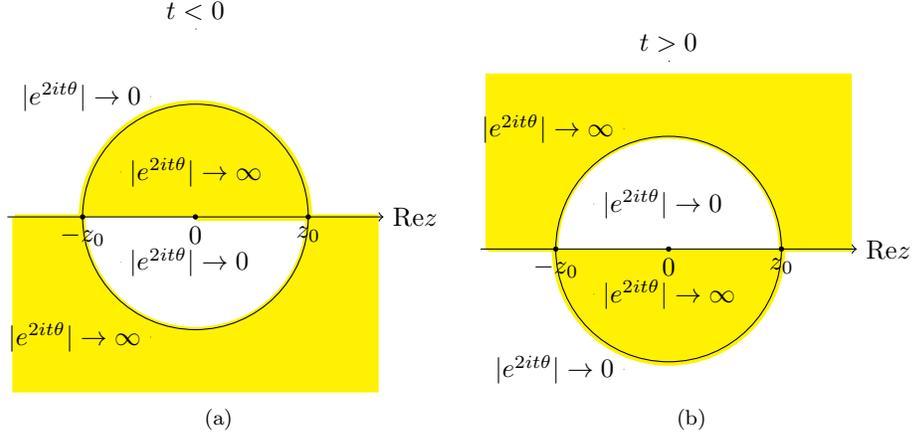
\begin{figure}[H]
	\centering
	\centering
	\subfigure[]{
		\begin{tikzpicture}[node distance=2cm]
		\filldraw[yellow,line width=3] (0,0) --(1.5,0)arc(0:180:1.5);
		\filldraw[yellow,line width=2] (-2.4,0)--(-2.4,-2.3)--(2.4,-2.3)--(2.4,0)--(1.5,0)--(1.5,0) arc(360:180:1.5)--(-1.5,0)--(-2.4,0);
		\draw[->](-2.5,0)--(2.5,0)node[right]{Re$z$};
		\draw[](0,2.5)--(0,2.5)node[above]{$t<0$};
		\draw(0,0) circle (1.5);
		\coordinate (A) at (-1,0.6);
		\coordinate (B) at (-1,-0.6);
		\coordinate (G) at (-0.6,1.6);
		\coordinate (H) at (-0.6,-1.6);
		\coordinate (I) at (0,0);
		\coordinate (C) at (1.5,0);
		\coordinate (D) at (-1.5,0);
		\fill (A) circle (0pt) node[right] {$|e^{2it\theta}|\to \infty$};
		\fill (B) circle (0pt) node[right] {$|e^{2it\theta}|\to 0$};
		\fill (G) circle (0pt) node[left] {$|e^{2it\theta}|\to 0$};
		\fill (H) circle (0pt) node[left] {$|e^{2it\theta}|\to \infty$};
		\fill (I) circle (1pt) node[below] {$0$};
		\fill (C) circle (1pt) node[below] {$z_0$};
		\fill (D) circle (1pt) node[below] {$-z_0$};
		\label{zplane1}
		\end{tikzpicture}
	}
	\subfigure[]{
		\begin{tikzpicture}[node distance=2cm]
		\filldraw[yellow,line width=3] (-1.5,0) arc(180:360:1.5);
		\filldraw[yellow,line width=2] (-2.4,0)--(-2.4,2.3)--(2.4,2.3)--(2.4,0)--(1.5,0)--(1.5,0) arc(0:180:1.5)--(-1.5,0)--(-2.4,0);
		\draw[->](-2.5,0)--(2.5,0)node[right]{Re$z$};
		\draw[](0,2.5)--(0,2.5)node[above]{$t>0$};
		\draw(0,0) circle (1.5);
		\coordinate (A) at (-1,0.6);
		\coordinate (B) at (-1,-0.6);
		\coordinate (G) at (-0.6,1.6);
		\coordinate (H) at (-0.6,-1.6);
		\coordinate (I) at (0,0);
		\coordinate (C) at (1.5,0);
		\coordinate (D) at (-1.5,0);
		\fill (A) circle (0pt) node[right] {$|e^{2it\theta}|\to 0$};
		\fill (B) circle (0pt) node[right] {$|e^{2it\theta}|\to \infty$};
		\fill (G) circle (0pt) node[left] {$|e^{2it\theta}|\to \infty$};
		\fill (H) circle (0pt) node[left] {$|e^{2it\theta}|\to 0$};
		\fill (I) circle (1pt) node[below] {$0$};
		\fill (C) circle (1pt) node[below] {$z_0$};
		\fill (D) circle (1pt) node[below] {$-z_0$};
		\label{zplane2}
		\end{tikzpicture}
	}
	\caption{In the yellow region, $|e^{2it\theta}|\to \infty$ when $t\to\pm\infty$ respectively. And in white region, $|e^{2it\theta}|\to 0$ when $t\to\pm\infty$ respectively.}
	\label{fig1}
\end{figure}
We define the  following functions and notation which will used later
\begin{align}
&k(s)=-\frac{1}{2\pi}\log(1+|r(s)|^2),\label{ks}\\
&I_+=\left( -\infty,-z_0\right] \cup\left[ z_0,+\infty\right) , \hspace{0.5cm}I_-=[-z_0,z_0]\\
&\delta (z)=\delta (z,z_0,\eta)=\exp\left(i\int _{I_\eta}\dfrac{k(s)ds}{s-z}\right)\\
&T(z)=T(z,z_0,\eta)=\prod_{k\in \Delta^-_{z_0,\eta}}\dfrac{z-\bar{z}_k}{z-z_k}\delta (z) \label{T}, \\
&\beta^\pm(z, z_0,\eta)=-\eta k(\pm z_0)\log(\eta(z\mp z_0+1))+\int _{I_\eta}\dfrac{k(s)-X_{\eta,\pm}(s)k(\pm z_0)}{s-z}ds,\\
&T_0(\pm z_0)=T(\pm z_0,z_0,\eta)=\prod_{k\in \Delta^-_{z_0,\eta}}\dfrac{\pm z_0-\bar{z}_k}{\pm z_0-z_k}e^{i\beta^\pm (z_0,\pm z_0,\eta)},\label{T0}
\end{align}
where  $X_{\eta,+}(s)$ and $X_{\eta,-}(s)$ are the characteristic functions of the interval $\eta z_0<\eta s<\eta z_0+1$ and $-\eta z_0-1<\eta s<-\eta z_0$ respectively. In all of the above formulas, we choose the principal branch of power and logarithm functions.

\begin{proposition}\label{proT}
	The function defined by (\ref{T}) has following properties:\\
	(a) $T$ is meromorphic in $C\setminus I_\eta$, for each $n\in\Delta^-_{z_0,\eta}$, $T(z)$ has a simple pole at $z_n$ and a simple zero at $\bar{z}_n$;\\
	(b) For $z\in C\setminus I_\eta$, $\overline{T(\bar{z})}T(z)=1$;\\
	(c) For $z\in I_\eta$, as z approaches the real axis from above and below, $T$ has boundary values $T_\pm$, which satisfy:
	\begin{equation}
	T_+(z)=(1+|r(z)|^2)T_-(z),\hspace{0.5cm}z\in I_\eta;
	\end{equation}
	(d) As $|z|\to \infty$ with $|arg(z)|\leq c<\pi$,
	\begin{equation}
	T(z)=1+\frac{i}{z}\left[ 2\sum_{k\in \Delta^-_{z_0,\eta}} Im(z_k)-\int _{I_\eta}k(s)ds\right]+ \mathcal{O}(z^{-2})\label{expT};
	\end{equation}
	(e) $T(z)$ is continuous at $z=0$, and as $|z|\to 0$,
	\begin{equation}
	T(z)=T(0)\left( 1+zT_1\right) +	\mathcal{O}(z^2) \label{expT0},
	\end{equation}
	where
	\begin{equation}
	T_1= 2\sum_{k\in \Delta^-_{z_0,\eta}} \frac{Im(z_k)}{z_k}-\int _{I_\eta}\frac{k(s)}{s^2}ds\label{T1};
	\end{equation}
	(f) As $z\to \pm z_0$, along $z=\pm z_0+e^{i\psi}l$, $l>0$, $|\psi|\leq c<\pi$,
	\begin{equation}
	|T(z,z_0,\eta)-T_0(\pm z_0,\eta)(\eta(z\mp z_0))^{i\eta k(\pm z_0)}|\leq C|z\mp z_0|^{1/2}.
	\end{equation}
\end{proposition}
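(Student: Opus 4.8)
The plan is to establish the six items by working throughout with the factorization $T(z)=B(z)\delta(z)$, where $B(z)=\prod_{k\in\Delta^-_{z_0,\eta}}\frac{z-\bar z_k}{z-z_k}$ is a finite rational factor and $\delta(z)=\exp\!\left(i\int_{I_\eta}\frac{k(s)}{s-z}ds\right)$ is a scalar Cauchy exponential. Since $r\in L^2(\mathbb{R})$ forces $|k(s)|\leq\frac{1}{2\pi}|r(s)|^2\in L^1$, the Cauchy transform is analytic off $I_\eta$, so $\delta$ is analytic in $\mathbb{C}\setminus I_\eta$; and because every $z_k\in\mathbb{C}^+$ (so $z_k,\bar z_k\notin\mathbb{R}\supset I_\eta$ and $z_k\neq0$), the factor $B$ contributes exactly one simple pole at each $z_n$ and one simple zero at each $\bar z_n$, $n\in\Delta^-_{z_0,\eta}$, all disjoint from $I_\eta$: this is (a). For (b) I would compute directly: conjugation interchanges $\frac{z-\bar z_k}{z-z_k}\leftrightarrow\frac{z-z_k}{z-\bar z_k}$, so $\overline{B(\bar z)}=B(z)^{-1}$, and since $k$ is real and $\overline{s-\bar z}=s-z$ on $\mathbb{R}$ one gets $\overline{\delta(\bar z)}=\delta(z)^{-1}$, whence $\overline{T(\bar z)}T(z)=1$. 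For (c), $B$ extends analytically across $I_\eta$ while Sokhotski--Plemelj gives $\delta_\pm(z)=\exp\!\left(\mp\pi k(z)+i\,\mathrm{p.v.}\!\int_{I_\eta}\frac{k(s)ds}{s-z}\right)$ for $z\in I_\eta$, so $\delta_+/\delta_-=e^{-2\pi k(z)}=1+|r(z)|^2$ by the definition (\ref{ks}) of $k$; hence $T_+=(1+|r|^2)T_-$.

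Items (d) and (e) are Laurent and Taylor expansions. For (d) I would expand $\frac{1}{s-z}=-\sum_{j\geq0}s^jz^{-j-1}$ inside the integral and $\frac{z-\bar z_k}{z-z_k}=1+\frac{2i\,\mathrm{Im}(z_k)}{z}+\mathcal{O}(z^{-2})$ in each rational factor (uniformly for $|\arg z|\leq c<\pi$), then multiply and collect the $z^{-1}$ coefficient to get (\ref{expT}). For (e) the only delicate point is convergence at the origin when $0\in I_\eta$ (the case $\eta=-1$, $I_\eta=I_-$): here I would first note that the identity $1+|r(z)|^2=|a(z)|^{-2}$ on $\mathbb{R}$ together with (\ref{asya0}), $a(z)=1+icz-\tfrac{c^2}{2}z^2+\mathcal{O}(z^3)$, forces $|a(z)|^2=1+\mathcal{O}(z^3)$, hence $|r(s)|^2=\mathcal{O}(|s|^3)$ and $k(s)=\mathcal{O}(|s|^3)$ near $s=0$; consequently $k(s)/s$ and $k(s)/s^2$ are integrable on $I_\eta$, so $\delta$ and therefore $T$ are continuous at $0$, and using $\frac{1}{s-z}-\frac1s-\frac{z}{s^2}=\frac{z^2}{s^2(s-z)}$ together with the analogous second-order expansion of $\frac{z-\bar z_k}{z-z_k}$ about $z=0$ produces (\ref{expT0}) with $T_1$ given by (\ref{T1}).

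The substance of the proposition is (f), and that is where I expect the work to lie. I would split off the contribution of the endpoint $\pm z_0$ of $I_\eta$:
\[
\log\delta(z)=i\,k(\pm z_0)\!\int_{I_\eta}\frac{X_{\eta,\pm}(s)}{s-z}\,ds+i\!\int_{I_\eta}\frac{k(s)-k(\pm z_0)X_{\eta,\pm}(s)}{s-z}\,ds .
\]
The first integral is elementary, its integrand being supported on a unit interval abutting $\pm z_0$; on exponentiation it produces precisely the power-law factor $(\eta(z\mp z_0))^{i\eta k(\pm z_0)}$ appearing in the statement, up to a factor that is analytic (hence Lipschitz) at $\pm z_0$ --- here one must keep the principal-branch conventions in the definitions of $\delta$, $\beta^\pm$ and $T_0$ aligned. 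In the second integral the numerator vanishes at $s=\pm z_0$, and since $r\in H^{1,1}(\mathbb{R})\subset H^1(\mathbb{R})$ embeds into $C^{1/2}(\mathbb{R})$, $k\in C^{1/2}(\mathbb{R})$, so the numerator is $\mathcal{O}(|s\mp z_0|^{1/2})$; the standard estimate for Cauchy integrals of $C^{1/2}$ functions vanishing at an endpoint (as in the Deift--Zhou analysis, see \cite{RN6,xusp}) then shows that this integral extends $\tfrac12$-H\"older continuously up to $z=\pm z_0$, its limiting value supplying the factor $e^{i\beta^\pm}$ in $T_0(\pm z_0,\eta)$ of (\ref{T0}). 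Combining these with the Lipschitz bound $|B(z)-B(\pm z_0)|\leq C|z\mp z_0|$ from the rational factor, and writing the difference $T(z)-T_0(\pm z_0,\eta)(\eta(z\mp z_0))^{i\eta k(\pm z_0)}$ as a telescoping sum of uniformly bounded factors against their differences --- the dominant one being the $\tfrac12$-H\"older term --- gives (f). The main obstacle is exactly this endpoint H\"older estimate for the Cauchy integral, together with the branch bookkeeping for $(\eta(z\mp z_0))^{i\eta k(\pm z_0)}$ over the whole sector $|\arg(z\mp z_0)|\leq c<\pi$; everything else reduces to bounded-factor algebra.
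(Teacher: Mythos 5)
Your proposal is correct and follows exactly the standard route (factoring $T=B\delta$ into the Blaschke-type product and the Cauchy exponential, Plemelj for the jump, Laurent/Taylor expansion of the Cauchy transform for (d)--(e), and the endpoint H\"older-$\tfrac12$ estimate for (f)) that the paper itself invokes only by the one-line citation to \cite{YYL}; you in fact supply more detail than the paper does, notably the observation that $1+|r|^2=|a|^{-2}$ and (\ref{asya0}) force $k(s)=\mathcal{O}(s^3)$ near $s=0$, which is what makes $\int_{I_\eta}k(s)s^{-2}ds$ converge in part (e) when $0\in I_\eta$. The only item to re-check is the branch/orientation bookkeeping in (f) --- the sign of the exponent of $(\eta(z\mp z_0))$ produced by the characteristic-function piece depends on whether $I_\eta$ abuts $\pm z_0$ from the left or the right --- a point you already flag as requiring alignment with the conventions in (\ref{T}) and (\ref{T0}).
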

\begin{proof}
	 The proof of  above properties  can be obtain by simple calculation, for details,  see   \cite{YYL}.
\end{proof}

We now use $T(z)$ to define a new  matrix-valued   function $M^{(1)}(z)$
\begin{equation}
M^{(1)}(z)=M(z)T(z)^{-\sigma_3},\label{transm1}
\end{equation}
which then satisfies the following RH problem.

\noindent \textbf{RHP3}. Find a matrix-valued function  $  M^{(1)}(z )$ which satisfies:

$\blacktriangleright$ Analyticity: $M^{(1)}(z )$ is meromorphic in $\mathbb{C}\setminus \mathbb{R}$ and has single poles;

$\blacktriangleright$ Symmetry: $\overline{M^{(1)}(\bar{z})}$=$M^{(1)}(-z)$=$\sigma_2M^{(1)}(z)\sigma_2$;

$\blacktriangleright$ Jump condition: $M^{(1)}$ has continuous boundary values $M^{(1)}_\pm$ on $\mathbb{R}$ and
\begin{equation}
M^{(1)}_+(z )=M^{(1)}_-(z )V^{(1)}(z),\hspace{0.5cm}z \in \mathbb{R},\label{jump3}
\end{equation}
where
\begin{align}
&\text{as } z\in R\setminus I_\eta, V^{(1)}(z)=
\left(\begin{array}{cc}
1 & 0\\
\overline{r(z)}T(z)^{2}e^{-2it\theta} & 1
\end{array}\right)\left(\begin{array}{cc}
1 & r(z)T(z)^{-2}e^{2it\theta}\\
0 & 1
\end{array}\right), \\
&\text{as }z\in I_\eta\setminus \left\lbrace \pm z_0\right\rbrace , V^{(1)}(z)=\left(\begin{array}{cc}
1 & \dfrac{r(z)T_-(z)^{-2}}{1+|r(z)|^2}e^{2it\theta}\\
0 & 1
\end{array}\right)\left(\begin{array}{cc}
1 & 0\\
\dfrac{\overline{r(z)}T_+(z)^{2}}{1+|r(z)|^2}e^{-2it\theta} & 1
\end{array}\right);
\end{align}

$\blacktriangleright$ Asymptotic behaviours:
\begin{align}
&M^{(1)}(z )= I+\mathcal{O}(z^{-1}),\hspace{0.5cm}z \rightarrow \infty;\label{asymbehv4}
\end{align}

$\blacktriangleright$ Residue conditions: $M^{(1)}$ has simple poles at each point in $ \mathcal{Z}\bigcup \bar{\mathcal{Z}}$ with:

\hspace{1cm}  For $n\in \Delta^+_{z_0,\eta}$,
\begin{align}
&\res_{z=z_n}M^{(1)}(z)=\lim_{z\to z_n}M^{(1)}(z)\left(\begin{array}{cc}
0 & c_nT(z_n)^{-2}e^{-2i\theta_n t}\\
0 & 0
\end{array}\right),\label{RES7}\\
&\res_{z=\bar{z}_n}M^{(1)}(z)=\lim_{z\to \bar{z}_n}M^{(1)}(z)\left(\begin{array}{cc}
0 & 0\\
\bar{c}_nT(\bar{z}_n)^2e^{2i\bar{\theta}_n t} & 0
\end{array}\right).\label{RES8}
\end{align}

\hspace{1cm} For $n\in \Delta^-_{z_0,\eta}$,
\begin{align}
&\res_{z=z_n}M^{(1)}(z)=\lim_{z\to z_n}M^{(1)}(z)\left(\begin{array}{cc}
0 & 0\\
c_n(1/T)'(z_n)^{-2}e^{2i\theta_n t}& 0
\end{array}\right),\label{RES9}\\
&\res_{z=\bar{z}_n}M^{(1)}(z)=\lim_{z\to \bar{z}_n}M^{(1)}(z)\left(\begin{array}{cc}
0 & \bar{c}_nT'(\bar{z}_k)^{-2}e^{-2i\bar{\theta}_n t}\\
0 & 0
\end{array}\right).\label{RES10}
\end{align}
where we denote $\theta_n=\theta(z_n).$

\begin{proof}
	The analyticity, jump condition and asymptotic behaviours of $M^{(1)}(z)$ is directly from its definition, the proposition \ref{proT} and the properties of $M$. As for residues, because $T(z)$ is analytic at each $z_n$ and $\bar{z}_n$ for $n\in \Delta^+_{z_0,\eta}$, from (\ref{RES1}), (\ref{RES2}) and (\ref{transm1}) we obtain residue conditions at these point immediately.

For $n\in \Delta^-_{z_0,\eta}$, we denote $M(z)=\left(M_1(z), M_2(z) \right) $, then
$$M^{(1)}(z)=\left(M^{(1)}_1(z), M^{(1)}_2(z) \right) =   \left(M_1(z)T(z), M_2(z)T(z)^{-1} \right).$$
 $T(z)$ has a simple zero at $\bar{z}_n$ and a pole at $z_n$, so $z_n$ is no longer the pole of $M^{(1)}_1(z)$ with $\bar{z}_n$ becoming the pole of it. And $M^{(1)}_2(z)$ has opposite situation. It has pole at $z_n$ and a removable singularity at $\bar{z}_n$. The calculation of it is similar as it in \cite{YYL}.
\end{proof}

\section{ A mixed $\bar{\partial}$-RH problem }

\quad  In this section,  we make continuous extension to the jump matrix $V^{(1)}$, for this purpose, we   introduce
  new contours defined as follow:
\begin{align}
&\Sigma_k=z_0+e^{(2k-1)i\pi/4}R_+,\hspace{0.5cm}k=1,4,\\
&\Sigma_k=z_0+e^{(2k-1)i\pi/4}h,\hspace{0.5cm}h\in(0,(\sqrt{2})^{-1}z_0)  ,\hspace{0.5cm}k=2,3,\\
&\Sigma_k=z_0+e^{(2k-1)i\pi/4}h,\hspace{0.5cm}h\in(0,(\sqrt{2})^{-1}z_0)  ,\hspace{0.5cm}k=5,8,\\
&\Sigma_k=-z_0+e^{(2k-1)i\pi/4}R_+,\hspace{0.5cm}k=6,7,\\
&\Sigma_k=e^{(2k-1)i\pi/4}h,\hspace{0.5cm}h\in(0,(\sqrt{2})^{-1}z_0) ,\hspace{0.5cm}k=9,10,11,12,\\
&\Sigma^{(2)}=\Sigma_1\cup\Sigma_2...\cup\Sigma_{12},
\end{align}
then the contour $\Sigma^{(2)}$ and real axis  $\mathbb{R}$   separate complex plane  $\mathbb{C}$  into ten open sectors denoted  by $\Omega_k$, $k=1,...,10$,
 starting with sector $\Omega_1$ between $I_\eta$ and $\Sigma_1$ and numbered consecutively continuing counterclockwise for $\eta$ = 1 ($\eta$ = -1 is similarly) as shown in Figure \ref{figR2}.
\begin{figure}
	\centering
	\subfigure[]{
		\begin{tikzpicture}[node distance=2cm]
		\draw[yellow, fill=yellow] (0,0)--(-1,1)--(-2,0)--(-1,-1)--(0,0);
		\draw[yellow, fill=yellow] (-2,0)--(-3,1)--(-4,0)--(-3,-1)--(-2,0);
		\draw[yellow, fill=yellow] (0,0)--(2,2)--(2.2,2)--(2.2,-2)--(2,-2)--(0,0);
		\draw[yellow, fill=yellow] (-4,0)--(-6,2)--(-6.2,2)--(-6.2,-2)--(-6,-2)--(-4,0);
		\draw[](-2,2.7)node[above]{$\eta=-1$};
		\draw(0,0)--(2,2)node[left]{$\Sigma_1$};
		\draw(0,0)--(-1,1)node[right]{$\Sigma_2$};
		\draw(0,0)--(-1,-1)node[right]{$\Sigma_3$};
		\draw(0,0)--(2,-2)node[right]{$\Sigma_4$};
		\draw[dashed](-6,0)--(2.2,0)node[right]{\scriptsize Re$z$};
		\draw(-2,0)--(-1,1)node[left]{$\Sigma_9$};
		\draw(-2,0)--(-1,-1)node[left]{$\Sigma_{12}$};
		\draw(-2,0)--(-3,-1)node[right]{$\Sigma_{11}$};
		\draw(-2,0)--(-3,1)node[right]{$\Sigma_{10}$};
		\draw(-4,0)--(-3,1)node[left]{$\Sigma_5$};
		\draw(-4,0)--(-3,-1)node[left]{$\Sigma_8$};
		\draw(-4,0)--(-6,2)node[right]{$\Sigma_6$};
		\draw(-4,0)--(-6,-2)node[right]{$\Sigma_7$};
		\draw[-latex](-1,1)--(-1.5,0.5);
		\draw[-latex](-1,-1)--(-1.5,-0.5);
		\draw[-latex](-2,0)--(-2.5,-0.5);
		\draw[-latex](-2,0)--(-2.5,0.5);
		\draw[-latex](-3,1)--(-3.5,0.5);
		\draw[-latex](-3,-1)--(-3.5,-0.5);
		\draw[-latex](-6,2)--(-5,1);
		\draw[-latex](-6,-2)--(-5,-1);
		\draw[-latex](0,0)--(-0.5,-0.5);
		\draw[-latex](0,0)--(-0.5,0.5);
		\draw[-latex](0,0)--(1,1);
		\draw[-latex](0,0)--(1,-1);
		\coordinate (A) at (1,0.5);
		\coordinate (B) at (1,-0.5);
		\coordinate (G) at (-1,0.5);
		\coordinate (H) at (-1,-0.5);
		\coordinate (I) at (0,0);
		\coordinate (C) at (-1,0.1);
		\coordinate (K) at (-4,0);
		\coordinate (l) at (-2,0);	
		\fill (C) circle (0pt) node[above] {$\Omega_3$};
		\coordinate (E) at (-2,1.2);
		\fill (E) circle (0pt) node[above] {$\Omega_2$};
		\coordinate (D) at (1,0.1);
		\fill (D) circle (0pt) node[above] {$\Omega_1$};
		\coordinate (F) at (1,-0.1);
		\fill (F) circle (0pt) node[below] {$\Omega_6$};
		\coordinate (J) at (-2,-1.2);
		\fill (J) circle (0pt) node[below] {$\Omega_5$};
		\coordinate (k) at (-1,-0.1);
		\fill (k) circle (0pt) node[below] {$\Omega_4$};
		\fill (I) circle (1pt) node[below] {$z_0$};
		\fill (K) circle (1pt) node[below] {$-z_0$};
		\fill (l) circle (1pt) node[below] {$0$};
		\coordinate (a) at (-5,0.1);
		\fill (a) circle (0pt) node[above] {$\Omega_7$};
		\coordinate (b) at (-3,0.1);
		\fill (b) circle (0pt) node[above] {$\Omega_8$};
		\coordinate (c) at (-3,-0.1);
		\fill (c) circle (0pt) node[below] {$\Omega_9$};
		\coordinate (d) at (-5,-0.1);
		\fill (d) circle (0pt) node[below] {$\Omega_{10}$};
		\end{tikzpicture}
	}
	\subfigure[]{
		\begin{tikzpicture}[node distance=2cm]
		\draw[yellow, fill=yellow] (0,0)--(-1,1)--(-2,0)--(-1,-1)--(0,0);
		\draw[yellow, fill=yellow] (-2,0)--(-3,1)--(-4,0)--(-3,-1)--(-2,0);
		\draw[yellow, fill=yellow] (0,0)--(2,2)--(2.2,2)--(2.2,-2)--(2,-2)--(0,0);
		\draw[yellow, fill=yellow] (-4,0)--(-6,2)--(-6.2,2)--(-6.2,-2)--(-6,-2)--(-4,0);
		\draw[](-2,2.7)node[above]{$\eta=+1$};
		\draw(0,0)--(2,2)node[left]{$\Sigma_1$};
	\draw(0,0)--(-1,1)node[right]{$\Sigma_2$};
	\draw(0,0)--(-1,-1)node[right]{$\Sigma_3$};
	\draw(0,0)--(2,-2)node[right]{$\Sigma_4$};
	\draw[dashed](-6,0)--(2.2,0)node[right]{\scriptsize Re$z$};
	\draw(-2,0)--(-1,1)node[left]{$\Sigma_9$};
	\draw(-2,0)--(-1,-1)node[left]{$\Sigma_{12}$};
	\draw(-2,0)--(-3,-1)node[right]{$\Sigma_{11}$};
	\draw(-2,0)--(-3,1)node[right]{$\Sigma_{10}$};
	\draw(-4,0)--(-3,1)node[left]{$\Sigma_5$};
	\draw(-4,0)--(-3,-1)node[left]{$\Sigma_8$};
	\draw(-4,0)--(-6,2)node[right]{$\Sigma_6$};
	\draw(-4,0)--(-6,-2)node[right]{$\Sigma_7$};
	\draw[-latex](-1,1)--(-1.5,0.5);
	\draw[-latex](-1,-1)--(-1.5,-0.5);
	\draw[-latex](-2,0)--(-2.5,-0.5);
	\draw[-latex](-2,0)--(-2.5,0.5);
	\draw[-latex](-3,1)--(-3.5,0.5);
	\draw[-latex](-3,-1)--(-3.5,-0.5);
	\draw[-latex](-6,2)--(-5,1);
	\draw[-latex](-6,-2)--(-5,-1);
	\draw[-latex](0,0)--(-0.5,-0.5);
	\draw[-latex](0,0)--(-0.5,0.5);
	\draw[-latex](0,0)--(1,1);
	\draw[-latex](0,0)--(1,-1);
		\coordinate (A) at (1,0.5);
		\coordinate (B) at (1,-0.5);
		\coordinate (G) at (-1,0.5);
		\coordinate (H) at (-1,-0.5);
		\coordinate (I) at (0,0);
		\coordinate (C) at (-1,0.1);
		\coordinate (K) at (-4,0);
		\coordinate (l) at (-2,0);	
		\fill (C) circle (0pt) node[above] {$\Omega_1$};
		\coordinate (E) at (-2,1.2);
		\fill (E) circle (0pt) node[above] {$\Omega_2$};
		\coordinate (D) at (1,0.1);
		\fill (D) circle (0pt) node[above] {$\Omega_3$};
		\coordinate (F) at (1,-0.1);
		\fill (F) circle (0pt) node[below] {$\Omega_4$};
		\coordinate (J) at (-2,-1.2);
		\fill (J) circle (0pt) node[below] {$\Omega_5$};
		\coordinate (k) at (-1,-0.1);
		\fill (k) circle (0pt) node[below] {$\Omega_6$};
		\fill (I) circle (1pt) node[below] {$z_0$};
		\fill (K) circle (1pt) node[below] {$-z_0$};
		\fill (l) circle (1pt) node[below] {$0$};
		\coordinate (a) at (-5,0.1);
		\fill (a) circle (0pt) node[above] {$\Omega_8$};
		\coordinate (b) at (-3,0.1);
		\fill (b) circle (0pt) node[above] {$\Omega_7$};
		\coordinate (c) at (-3,-0.1);
		\fill (c) circle (0pt) node[below] {$\Omega_{10}$};
		\coordinate (d) at (-5,-0.1);
		\fill (d) circle (0pt) node[below] {$\Omega_9$};
		\end{tikzpicture}
	}
	\caption{In the yellow region, $  R^{(2)}\not=I$,  in white region, $ R^{(2)}=I$.}
	\label{figR2}
\end{figure}
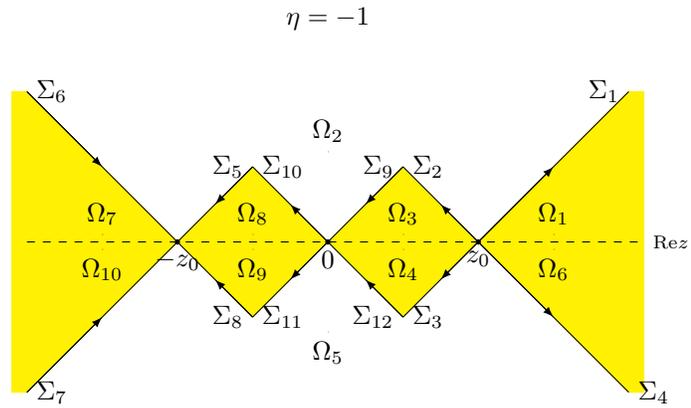
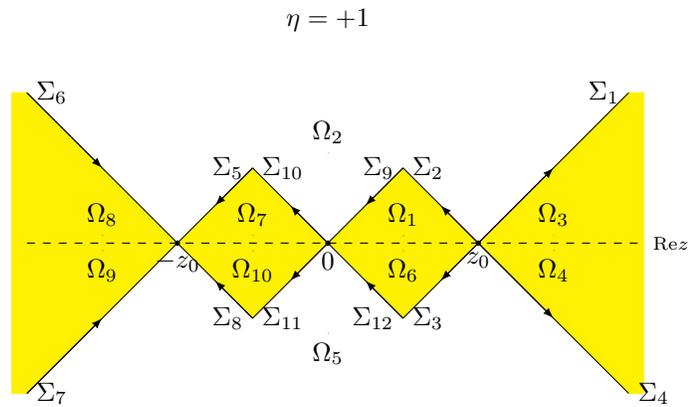

Additionally, let
\begin{equation}
\mu=\frac{1}{2}\min_{\lambda\neq\gamma\in \mathcal{Z}\cup \bar{\mathcal{Z}}}|\lambda -\gamma|.
\end{equation}
Since  there is no pole on the real axis, it holds that  $ {\rm dist}(\mathcal{Z},\mathbb{R})>\mu$.  Then we define $X_\mathcal{Z} \in C_0^\infty(\mathbb{C},[0,1])$ which only supported on the
neighborhood of $\mathcal{Z}\cup \bar{\mathcal{Z}}$,
\begin{equation}
X_\mathcal{Z}(z)=\Bigg\{\begin{array}{ll}
1 &\text{dist}(z,\mathcal{Z}\cup \bar{\mathcal{Z}})<\mu/3\\
0 &\text{dist}(z,\mathcal{Z}\cup \bar{\mathcal{Z}})>2\mu/3.\\
\end{array}
\end{equation}
In order to deform the contour $\mathbb{R}$ to the contour $\Sigma^{(2)}$, we introduce a new unknown function $M^{(2)}$ as follow:
\begin{equation}
M^{(2)}(z)=M^{(1)}(z)R^{(2)}(z),\label{transm2}
\end{equation}
where  $R^{(2)}(z)$ is chosen to satisfy the following conditions:  First, $M^{(2)}$ has no jump on the real axis, so we choose the boundary values of $R^{(2)}(z)$  through the factorization of $V^{(1)}(z)$ in (\ref{jump3}) where the new jumps on $\Sigma^{(2)}$ match a well known model RH problem; Second, we need to control the norm of $R^{(2)}(z)$, so that the $\bar{\partial}$-contribution to the long-time asymptotics of $u(y,t)$ can be ignored; Third the residues are unaffected by the transformation.  So we choose $R^{(2)}(z)$ as
\begin{equation}
R^{(2)}(z)=\left\{\begin{array}{lll}
\left(\begin{array}{cc}
1 & (-1)^{m_j}R_j(z)e^{2it\theta}\\
0 & 1
\end{array}\right), & z\in \Omega_j,j=1,4,7,9;\\
\\
\left(\begin{array}{cc}
1 & 0\\
(-1)^{m_j}R_j(z)e^{-2it\theta} & 1
\end{array}\right),  &z\in \Omega_j,j=3,6,8,10;\\
\\
I  &z\in \Omega_2\cup\Omega_5;\\
\end{array}\right.
\end{equation}
where $m_1=m_3=m_7=m_8=1$, $m_4=m_6=m_9=m_{10}=0$, and the function $R_j$, $j=1,3,4,6,7,8,9,10$, is defined in following proposition.
\begin{proposition}\label{proR}
	Take $\eta=-1$ as example,  $R_j$: $\bar{\Omega}_j\to C$, $j=1,3,4,6,7,8,9,10$ have boundary values as follow:
	\begin{align}
	&R_1(z)=\Bigg\{\begin{array}{ll}
	r(z)T(z)^{-2} & z>z_0,\\
	r(z_0)T_0(z_0)^{-2}(\eta(z-z_0))^{-2i\eta k(z_0)}(1-X_\mathcal{Z}(z))  &z\in \Sigma_1,\\
	\end{array} \\
	&R_3(z)=\Bigg\{\begin{array}{ll}
	\dfrac{\bar{r}(z_0)T_0(z_0)^2}{1+|r(z_0)|^2}(\eta(z-z_0))^{2i\eta k(z_0)}(1-X_\mathcal{Z}(z))  &z\in \Sigma_2,\\
	\dfrac{\bar{r}(z)T_+(z)^2}{1+|r(z)|^2} & 0<z<z_0,\\
	\end{array} \\
	&R_4(z)=\Bigg\{\begin{array}{ll}
	\dfrac{r(z)T_-(z)^{-2}}{1+|r(z)|^2} & 0<z<z_0, \\
	\dfrac{r(z_0)T_0(z_0)^{-2}}{1+|r(z_0)|^2}(\eta(z-z_0))^{-2i\eta k(z_0)}(1-X_\mathcal{Z}(z))  &z\in \Sigma_3,\\
	\end{array} \\
	&R_6(z)=\Bigg\{\begin{array}{ll}
	\bar{r}(z_0)T_0(z_0)^{2}(\eta(z-z_0))^{2i\eta k(z_0)}(1-X_\mathcal{Z}(z))  &z\in \Sigma_4,\\
	\bar{r}(z)T(z)^{2} & z>z_0,\\
	\end{array} \\
	&R_7(z)=\Bigg\{\begin{array}{ll}
	r(z)T(z)^{-2} & z<-z_0,\\
	r(-z_0)T_0(-z_0)^{-2}(\eta(z+z_0))^{-2i\eta k(-z_0)}(1-X_\mathcal{Z}(z))  &z\in \Sigma_6,\\
	\end{array} \\
	&R_8(z)=\Bigg\{\begin{array}{ll}
	\dfrac{\bar{r}(-z_0)T_0(-z_0)^2}{1+|r(-z_0)|^2}(\eta(z+z_0))^{2i\eta k(-z_0)}(1-X_\mathcal{Z}(z))  &z\in \Sigma_5,\\
	\dfrac{\bar{r}(z)T_+(z)^2}{1+|r(z)|^2} & 0>z>-z_0,\\
	\end{array} \\
	&R_9(z)=\Bigg\{\begin{array}{ll}
	\dfrac{r(z)T_-(z)^{-2}}{1+|r(z)|^2} & 0>z>-z_0, \\
	\dfrac{r(-z_0)T_0(-z_0)^{-2}}{1+|r(-z_0)|^2}(\eta(z+z_0))^{-2i\eta k(-z_0)}(1-X_\mathcal{Z}(z))  &z\in \Sigma_8,\\
	\end{array} \\
	&R_{10}(z)=\Bigg\{\begin{array}{ll}
	\bar{r}(-z_0)T_0(-z_0)^{-2}(\eta(z+z_0))^{2i\eta k(-z_0)}(1-X_\mathcal{Z}(z))  &z\in \Sigma_7,\\
	\bar{r}(z)T(z)^{2} & z<-z_0.\\
	\end{array}
	\end{align}	
	And in the case of $\eta=-1$,  $R_j$ is defined follow the in reverse order.   $R_j$  have following property:
	for $j=1,3,4,6,$
	\begin{align}
	&|R_j(z)|\lesssim\sin^2(\arg(z-z_0))+\langle \text{Re}(z)\rangle^{-1/2},\label{R}\\
	&|\bar{\partial}R_j(z)|\lesssim|\bar{\partial}X_Z(z)|+|p_j'(\text{Re}z)|+|z-z_0|^{-1/2},\label{dbarRj}
	\end{align}
	and for $j=7,8,9,10,$
	\begin{align}
	&|R_j(z)|\lesssim \sin^2(\arg(z+z_0))+\langle \text{Re}(z)\rangle^{-1/2},\label{Rk}\\
	&| \bar{\partial}R_j(z)|\lesssim|\bar{\partial}X_Z(z)|+|p_j'(\text{Re}z)|+|z+z_0|^{-1/2},\label{dbarRk}
	\end{align}
	where
	\begin{align}
	&p_1(z)=p_7(z)=r(z),\hspace{2.5cm}p_3(z)=p_8(z)=\dfrac{r(z)}{1+|r(z)|^2},\\
	&p_4(z)=p_9(z)=\dfrac{r(z)}{1+|r(z)|^2},\hspace{1.5cm}p_6(z)=p_{10}(z)=\bar{r}(z).
	\end{align}
	And
	\begin{equation}
	\bar{\partial}R_j(z)=0,\hspace{0.5cm}\text{if } z\in \Omega_2\cup\Omega_5\text{ or }\text{dist}(z,\mathcal{Z}\cup \bar{\mathcal{Z}})<\mu/3.
	\end{equation}
\end{proposition}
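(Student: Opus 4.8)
\emph{Proof idea.} The plan is to build each $R_j$ by an explicit, piecewise‑smooth interpolation between its two prescribed boundary values — the value dictated by the triangular factorization of $V^{(1)}$ on the relevant part of $\mathbb{R}$, and the ``model'' power‑function value on the ray $\Sigma_k$ — and then to read off (\ref{R})--(\ref{dbarRk}) by differentiating that formula. I will carry out the construction for $R_1$ and only indicate that the other sectors are handled verbatim; the case $\eta=+1$ then reduces to $\eta=-1$ by the symmetries $r(-z)=\overline{r(\bar z)}$, $k(-s)=k(s)$ and Proposition \ref{proT}. Two analytic facts do all the work: (i) $|T(z)|=1$ for $z\in\mathbb{R}\setminus I_\eta$ while $|T_\pm(z)|^{2}=(1+|r(z)|^{2})^{\pm1}$ for $z\in I_\eta$ (Proposition \ref{proT}(b),(c)), so that on the real axis each $R_j$ collapses to $r$ times a bounded factor; and (ii) $r\in H^{1,1}(\mathbb{R})\subset H^{1}(\mathbb{R})\hookrightarrow C^{0,1/2}(\mathbb{R})$ with $|r(z)|\lesssim\langle z\rangle^{-1/2}$, together with the H\"older remainder $|T(z)^{-2}-T_0(z_0)^{-2}(\eta(z-z_0))^{-2i\eta k(z_0)}|\lesssim|z-z_0|^{1/2}$ near $z_0$ (and the analogue near $-z_0$ and with $T^2$), which follows from Proposition \ref{proT}(f).

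\textbf{Construction near $z_0$.} Write $z=z_0+\ell e^{i\psi}$ with $\psi=\arg(z-z_0)$, so that $\Sigma_1=\{\psi=\pi/4\}$, $\Sigma_3=\{\psi=-3\pi/4\}$, $\Sigma_4=\{\psi=-\pi/4\}$, etc., while the relevant half‑line of $\mathbb{R}$ sits at $\psi\in\{0,\pi\}$. With $\mathcal{R}_{z_0}(z)=r(z_0)T_0(z_0)^{-2}(\eta(z-z_0))^{-2i\eta k(z_0)}$ (the target value on $\Sigma_1$), set
\begin{equation}
R_1(z)=\Big[\mathcal{R}_{z_0}(z)\big(1-\cos 2\psi\big)+r(\text{Re}\,z)\,T(z)^{-2}\cos 2\psi\Big]\big(1-X_{\mathcal{Z}}(z)\big),\qquad z\in\overline{\Omega}_1,\nonumber
\end{equation}
where $T(z)^{-2}$ denotes the branch holomorphic on $\Omega_j$. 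On $\{z>z_0\}$ one has $\psi=0$ and $X_{\mathcal{Z}}\equiv0$, hence $R_1=r(z)T(z)^{-2}$; on $\Sigma_1$ one has $\cos 2\psi=0$, hence $R_1=\mathcal{R}_{z_0}(z)(1-X_{\mathcal{Z}}(z))$ — exactly the boundary values required. The factor $1-X_{\mathcal{Z}}$ forces $R^{(2)}\equiv I$ on $\{\text{dist}(z,\mathcal{Z}\cup\bar{\mathcal{Z}})<\mu/3\}$, which is why the conjugation $M^{(1)}\mapsto M^{(2)}$ does not disturb the residue conditions of RHP3. The functions $R_3,R_4,R_6$ near $z_0$ and $R_7,\dots,R_{10}$ near $-z_0$ are defined by the same recipe, with $r$ replaced by the corresponding $p_j$, $T(z)^{-2}$ replaced by $T(z)^{2}$ or $T_\pm(z)^{\pm2}$, $\mathcal{R}_{z_0}$ replaced by the matching model value, and $z_0$ by $-z_0$; in $\Omega_2\cup\Omega_5$ we simply put $R^{(2)}\equiv I$.

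\textbf{Estimates.} For the size bound, use $1-\cos 2\psi=2\sin^{2}\psi$, $|\mathcal{R}_{z_0}(z)|\lesssim1$, $|T(z)^{-2}|\lesssim1$ on $\overline{\Omega}_1$ (Proposition \ref{proT}(d),(f)), and $|r(\text{Re}\,z)|\lesssim\langle\text{Re}\,z\rangle^{-1/2}$ to get $|R_1(z)|\lesssim\sin^{2}(\arg(z-z_0))+\langle\text{Re}\,z\rangle^{-1/2}$, i.e. (\ref{R}); on $I_\eta$ the factors $T_\pm^{\pm2}/(1+|r|^{2})$ in $R_3,R_4$ have modulus $\lesssim1$ by fact (i), so the same bound holds. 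For the $\bar\partial$‑bound, $\mathcal{R}_{z_0}$ and $T(z)^{-2}$ are holomorphic on $\Omega_1$, hence
\begin{equation}
\bar\partial R_1=\Big[\tfrac12 r'(\text{Re}\,z)T(z)^{-2}\cos 2\psi+\big(r(\text{Re}\,z)T(z)^{-2}-\mathcal{R}_{z_0}(z)\big)\bar\partial\big(\cos 2\psi\big)\Big](1-X_{\mathcal{Z}})-\big[\cdots\big]\bar\partial X_{\mathcal{Z}}.\nonumber
\end{equation}
The first term is $\lesssim|r'(\text{Re}\,z)|=|p_1'(\text{Re}\,z)|$; the last is $\lesssim|\bar\partial X_{\mathcal{Z}}(z)|$; and in the middle term $|\bar\partial(\cos 2\psi)|\lesssim|z-z_0|^{-1}$, while splitting $r(\text{Re}\,z)T(z)^{-2}-\mathcal{R}_{z_0}(z)=[r(\text{Re}\,z)-r(z_0)]T(z)^{-2}+r(z_0)[T(z)^{-2}-T_0(z_0)^{-2}(\eta(z-z_0))^{-2i\eta k(z_0)}]$ and invoking the $1/2$‑H\"older continuity of $r$ together with Proposition \ref{proT}(f) gives $|r(\text{Re}\,z)T(z)^{-2}-\mathcal{R}_{z_0}(z)|\lesssim|z-z_0|^{1/2}$, so the middle term is $\lesssim|z-z_0|^{-1/2}$. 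Altogether $|\bar\partial R_1(z)|\lesssim|\bar\partial X_{\mathcal{Z}}(z)|+|p_1'(\text{Re}\,z)|+|z-z_0|^{-1/2}$, which is (\ref{dbarRj}); the sectors at $-z_0$ give (\ref{Rk})--(\ref{dbarRk}) after replacing $z-z_0$ by $z+z_0$. Finally $\bar\partial R_j\equiv0$ on $\Omega_2\cup\Omega_5$ and on $\{\text{dist}(z,\mathcal{Z}\cup\bar{\mathcal{Z}})<\mu/3\}$, since $R^{(2)}$ (resp. $1-X_{\mathcal{Z}}$) is constant there.

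\textbf{Where the difficulty lies.} The only subtle point is the $\bar\partial$‑estimate near the stationary points $\pm z_0$: differentiating the angular cutoff $\cos(2\arg(z\mp z_0))$ produces a genuine $|z\mp z_0|^{-1}$ blow‑up, and the sole mechanism that brings it down to the (locally $L^{2}$) order $|z\mp z_0|^{-1/2}$ recorded in (\ref{dbarRj}), (\ref{dbarRk}) is pairing it against the smallness of $p_j(\text{Re}\,z)T(z)^{\mp2}$ minus its model value. This forces one to use \emph{both} the $C^{0,1/2}$ regularity of $r$ at $\pm z_0$ (available precisely because $r\in H^{1,1}$) and the sharp $|z\mp z_0|^{1/2}$ remainder for $T$ from Proposition \ref{proT}(f); any weaker modulus of continuity would ruin the estimate. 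A secondary bookkeeping matter is verifying that the support of $X_{\mathcal{Z}}$, of radius $2\mu/3<\text{dist}(\mathcal{Z},\mathbb{R})$, is disjoint from the arcs of $\Sigma^{(2)}$ on which each $R_j$ is required to equal $r\,T^{\mp2}$ exactly, so that the two defining branches of $R_j$ agree — this is exactly where the choice of $\mu$ enters.
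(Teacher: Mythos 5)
Your construction is correct and is essentially the argument the paper itself invokes: the paper gives no details but defers to the standard $\bar\partial$-extension proofs in \cite{fNLS,YYL}, which use exactly your interpolation of the two boundary values by $\cos\bigl(2\arg(z\mp z_0)\bigr)$ with the cutoff $1-X_{\mathcal{Z}}$, the bound $|r(\mathrm{Re}\,z)|\lesssim\langle \mathrm{Re}\,z\rangle^{-1/2}$ for (\ref{R}), and the pairing of the $|z\mp z_0|^{-1}$ blow-up from the angular factor with the $1/2$-H\"older continuity of $r$ plus Proposition \ref{proT}(f) to obtain (\ref{dbarRj})--(\ref{dbarRk}). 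No gaps beyond minor bookkeeping (boundedness of $T^{\mp2}$ on the closed sectors and the disjointness of $\mathrm{supp}\,X_{\mathcal{Z}}$ from $\mathbb{R}$), which you address.
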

The proof of above proposition is similar to  that  in \cite{fNLS,YYL}.  In addition, from the definition of $k(z)$ in (\ref{ks}) and the symmetry of $r(z)$, we have that $k(z_0)=k(-z_0)$.

We now  use $R^{(2)}$ to define the transformation (\ref{transm2}),  which satisfies the following mixed $\bar{\partial}$-RH problem.

\noindent \textbf{RHP4}. Find a matrix valued function  $ M^{(2)}(z;y,t)$ with following properties:

$\blacktriangleright$ Analyticity:  $M^{(2)}(z;y,t)$ is continuous in $\mathbb{C}$,  sectionally continuous first partial derivatives in
$\mathbb{C}\setminus (\Sigma^{(2)}\cup \mathcal{Z}\cup \bar{\mathcal{Z}})$  and meromorphic in $\Omega_2\cup\Omega_5$;

$\blacktriangleright$ Symmetry: $\overline{M^{(2)}(\bar{z})}$=$M^{(2)}(-z)$=$\sigma_2M^{(2)}(z)\sigma_2$;

$\blacktriangleright$ Asymptotic behaviours:
\begin{align}
&M^{(2)}(z;y,t)= I+\mathcal{O}(z^{-1}),\hspace{0.5cm}z \rightarrow \infty;\label{asymbehv5}
\end{align}

$\blacktriangleright$ Jump condition: $M^{(2)}$ has continuous boundary values $M^{(2)}_\pm$ on $\Sigma^{(2)}$ and
\begin{equation}
M^{(2)}_+(z;y,t)=M^{(2)}_-(z;y,t)V^{(2)}(z),\hspace{0.5cm}z \in \Sigma^{(2)},\label{jump4}
\end{equation}
where take $\eta=-1$ as an example, we have
\begin{equation}
V^{(2)}(z)=\left\{\begin{array}{llllllll}
\left(\begin{array}{cc}
1 & R_1(z)e^{2it\theta}\\
0 & 1
\end{array}\right), & z\in \Sigma_1\cup\Sigma_9,\\
\\
\left(\begin{array}{cc}
1 & 0\\
R_3(z)e^{-2it\theta} & 1
\end{array}\right),  &z\in \Sigma_2,\\
\\
\left(\begin{array}{cc}
1 & R_4(z)e^{2it\theta}\\
0 & 1
\end{array}\right),  &z\in \Sigma_3,\\
\\
\left(\begin{array}{cc}
1 & 0\\
R_6(z)e^{-2it\theta} & 1
\end{array}\right),  &z\in \Sigma_4\cup\Sigma_{12},\\
\\
\left(\begin{array}{cc}
1 & R_7(z)e^{2it\theta}\\
0 & 1
\end{array}\right),  &z\in \Sigma_6\cup\Sigma_{10},\\
\\
\left(\begin{array}{cc}
1 & 0\\
R_8(z)e^{-2it\theta} & 1
\end{array}\right),  &z\in \Sigma_5,\\
\\
\left(\begin{array}{cc}
1 & R_9(z)e^{2it\theta}\\
0 & 1
\end{array}\right),  &z\in \Sigma_8,\\
\\
\left(\begin{array}{cc}
1 & 0\\
R_{10}(z)e^{-2it\theta} & 1
\end{array}\right),  &z\in \Sigma_7\cup\Sigma_{11};\\
\end{array}\right.\label{V2}
\end{equation}

$\blacktriangleright$ $\bar{\partial}$-Derivative: For $\mathbb{C}\setminus (\Sigma^{(2)}\cup \mathcal{Z}\cup \bar{\mathcal{Z}})$
we have
\begin{align}
\bar{\partial}M^{(2)}=M^{(1)}\bar{\partial}R^{(2)},
\end{align}
where
\begin{equation}
\bar{\partial}R^{(2)}=\left\{\begin{array}{lll}
\left(\begin{array}{cc}
0 & (-1)^{m_j}\bar{\partial}R_j(z)e^{2it\theta}\\
0 & 0
\end{array}\right), & z\in \Omega_j,j=1,4,7,9,\\
\\
\left(\begin{array}{cc}
0 & 0\\
(-1)^{m_j}\bar{\partial}R_j(z)e^{-2it\theta} & 0
\end{array}\right),  &z\in \Omega_j,j=3,6,8,10,\\
\\
0  &z\in \Omega_2\cup\Omega_5;\\
\end{array}\right. \label{DBARR2}
\end{equation}

$\blacktriangleright$ Residue conditions: $M^{(2)}$ has simple poles at each point in $ \mathcal{Z}\cup \bar{\mathcal{Z}}$ with:

\hspace{1cm} 1.   When $n\in \Delta^+_{z_0,\eta}$,
\begin{align}
&\res_{z=z_n}M^{(2)}(z)=\lim_{z\to z_n}M^{(2)}(z)\left(\begin{array}{cc}
0 & c_nT(z_n)^{-2}e^{-2i\theta_n t}\\
0 & 0
\end{array}\right),\\
&\res_{z=\bar{z}_n}M^{(2)}(z)=\lim_{z\to \bar{z}_n}M^{(2)}(z)\left(\begin{array}{cc}
0 & 0\\
\bar{c}_nT(\bar{z}_n)^2e^{2i\bar{\theta}_n t} & 0
\end{array}\right);
\end{align}

\hspace{1cm} 2.   When $n\in \Delta^-_{z_0,\eta}$,
\begin{align}
&\res_{z=z_n}M^{(2)}(z)=\lim_{z\to z_n}M^{(2)}(z)\left(\begin{array}{cc}
0 & 0\\
c_n(1/T)'(z_n)^{-2}e^{2i\theta_n t}& 0
\end{array}\right),\\
&\res_{z=\bar{z}_n}M^{(2)}(z)=\lim_{z\to \bar{z}_n}M^{(2)}(z)\left(\begin{array}{cc}
0 & \bar{c}_nT'(\bar{z}_k)^{-2}e^{-2i\bar{\theta}_n t}\\
0 & 0
\end{array}\right).
\end{align}

\section{ Decomposition of the mixed $\bar{\partial}$-RH problem }\label{sec6}
\quad To solve RHP4,  we decompose it into a model   RH  Problem  for $M^{RHP}(z;y,t)$  with $\bar\partial R^{(2)}=0$   and a pure $\bar{\partial}$-Problem with $\bar\partial R^{(2)}\not=0$.
For the first step, we establish  a   RH problem  for the  $M^{RHP}(z;y,t)$   as follows.

\noindent\textbf{RHP5}. Find a matrix-valued function  $  M^{RHP}(z;y,t)$ with following properties:

$\blacktriangleright$ Analyticity: $M^{RHP}(z;y,t)$ is analytical  in $\mathbb{C}\setminus (\Sigma^{(2)}\cup \mathcal{Z}\cup \bar{\mathcal{Z}})$;

$\blacktriangleright$ Jump condition: $M^{RHP}$ has continuous boundary values $M^{RHP}_\pm$ on $R$ and
\begin{equation}
M^{RHP}_+(z;y,t)=M^{RHP}_-(z;y,t)V^{(2)}(z),\hspace{0.5cm}z \in R;\label{jump5}
\end{equation}

$\blacktriangleright$ Symmetry: $\overline{M^{RHP}(\bar{z})}$=$M^{RHP}(-z)$=$\sigma_2M^{RHP}(z)\sigma_2$;

$\blacktriangleright$ $\bar{\partial}$-Derivative:  $\bar{\partial}R^{(2)}=0$, for $ z\in \mathbb{C}$;

$\blacktriangleright$ Asymptotic behaviours:
\begin{align}
&M^{RHP}(z;y,t)= I+\mathcal{O}(z^{-1}),\hspace{0.5cm}z \rightarrow \infty;\label{asymbehv6}
\end{align}

$\blacktriangleright$ Residue conditions: $M^{RHP}$ has simple poles at each point in $ \mathcal{Z}\cup \bar{\mathcal{Z}}$ with:

\hspace{1cm} 1.   When $n\in \Delta^+_{z_0,\eta}$,
\begin{align}
&\res_{z=z_n}M^{RHP}(z)=\lim_{z\to z_n}M^{RHP}(z)\left(\begin{array}{cc}
0 & c_nT(z_n)^{-2}e^{-2i\theta_n t}\\
0 & 0
\end{array}\right),\label{6.3}\\
&\res_{z=\bar{z}_n}M^{RHP}(z)=\lim_{z\to \bar{z}_n}M^{RHP}(z)\left(\begin{array}{cc}
0 & 0\\
\bar{c}_nT(\bar{z}_n)^2e^{2i\bar{\theta}_n t} & 0
\end{array}\right).
\end{align}

\hspace{1cm} 2.   When $n\in \Delta^-_{z_0,\eta}$,
\begin{align}
&\res_{z=z_n}M^{RHP}(z)=\lim_{z\to z_n}M^{RHP}(z)\left(\begin{array}{cc}
0 & 0\\
c_n(1/T)'(z_n)^{-2}e^{2i\theta_n t}& 0
\end{array}\right),\\
&\res_{z=\bar{z}_n}M^{RHP}(z)=\lim_{z\to \bar{z}_n}M^{RHP}(z)\left(\begin{array}{cc}
0 & \bar{c}_nT'(\bar{z}_k)^{-2}e^{-2i\bar{\theta}_n t}\\
0 & 0
\end{array}\right).\label{6.6}
\end{align}
The existence  and asymptotic  of  $M^{RHP}(z)$  will shown in   section 8.

We now use $M^{RHP}(z)$ to construct  a new matrix function
\begin{equation}
M^{(3)}(z)=M^{(2)}(z)M^{RHP}(z)^{-1}.\label{transm3}
\end{equation}
which   removes   analytical component  $M^{RHP}$    to get  a  pure $\bar{\partial}$-problem.

\noindent\textbf{RHP6}. Find a matrix-valued function  $ M^{(3)}(z;y,t)$ with following properties:

$\blacktriangleright$ Analyticity: $M^{(3)}(z;y,t)$ is continuous with sectionally continuous first partial derivatives in
 $\mathbb{C}\setminus (\Sigma^{(2)}\cup \mathcal{Z}\cup \bar{\mathcal{Z}})$ and meromorphic in $\Omega_2\cup\Omega_5$.

$\blacktriangleright$ Symmetry: $\overline{M^{(3)}(\bar{z})}$=$M^{(3)}(-z)$=$\sigma_2M^{(3)}(z)\sigma_2$;

$\blacktriangleright$ Asymptotic behavior:
\begin{align}
&M^{(3)}(z;y,t) \sim I+\mathcal{O}(z^{-1}),\hspace{0.5cm}z \rightarrow \infty;\label{asymbehv7}
\end{align}

$\blacktriangleright$ $\bar{\partial}$-Derivative: For $\mathbb{C}\setminus (\Sigma^{(2)}\cup \mathcal{Z}\cup \bar{\mathcal{Z}})$ we have $\bar{\partial}M^{(3)}=M^{(3)}W^{(3)}$,
\begin{equation}
W^{(3)}=M^{RHP}(z)\bar{\partial}R^{(2)}M^{RHP}(z)^{-1}.
\end{equation}

\begin{proof}
	By using  properties  of  the   solutions   $M^{(2)}$ and $M^{RHP}$  for  RHP4  and  RHP5,
 the analyticity and asymptotics are obtained   immediately.  Since $M^{(2)}$ and $M^{RHP}$ have same jump matrix, we have
	\begin{align*}
	M_-^{(3)}(z)^{-1}M_+^{(3)}(z)&=M_-^{(2)}(z)^{-1}M_-^{RHP}(z)M_+^{RHP}(z)^{-1}M_+^{(2)}(z)\\
	&=M_-^{(2)}(z)^{-1}V^{(2)}(z)^{-1}M_+^{(2)}(z)=I,
	\end{align*}
	which means $ M^{(3)}$ has no jumps and is everywhere continuous.  We also can show  that $ M^{(3)}$ has no pole. For
 $\lambda \in \mathcal{Z}\cup \bar{\mathcal{Z}}$,  let $\mathcal{N}$ denote the  nilpotent matrix which appears in the left side of the
corresponding residue condition of RHP4  and  RHP5,
 we have the Laurent expansions in $z-\lambda$
	\begin{align}
&M^{(2)}(z)=a(\lambda) \left[ \dfrac{\mathcal{N}}{z-\lambda}+I\right] +\mathcal{O}(z-\lambda),\nonumber\\
&	M^{RHP}(z)=A(\lambda) \left[ \dfrac{N\mathcal{}}{z-\lambda}+I\right] +\mathcal{O}(z-\lambda),\nonumber
\end{align}
	where $a(\lambda)$ and $A(\lambda)$ are the constant row vector and matrix in their respective expansions.
Then  from $M^{RHP}(z)^{-1}=\sigma_2M^{RHP}(z)^T\sigma_2$,  we have
	\begin{align}
	M^{(3)}(z)&=\left\lbrace a(\lambda) \left[ \dfrac{N}{z-\lambda}+I\right]\right\rbrace \left\lbrace\left[ \dfrac{-N}{z-\lambda}+I\right]\sigma_2A(\lambda)^T\sigma_2\right\rbrace + \mathcal{O}(z-\lambda)\nonumber\\
	&=\mathcal{O}(1),
	\end{align}
	which  implies that  $ M^{(3)}$ has removable singularities at $\lambda$.
 And the $\bar{\partial}$-derivative of  $ M^{(3)}$ come  from    $ M^{(3)}$  due to   analyticity of $M^{RHP}$.
\end{proof}

  We construct the solution $M^{RHP}$ of the RHP5 in the following form
\begin{equation}
M^{RHP}=\left\{\begin{array}{lll}
E(z)M^{(out)}(z) & z\notin U_{z_0},\\
E(z)M^{(z_0)}(z)  &z\in U_{z_0,}\\
E(z) M^{(-z_0)}(z) &z\in U_{-z_0},\\
\end{array}\right. \label{transm4}
\end{equation}
where $ U_{\pm z_0}$ are  the neighborhoods of $\pm z_0$, respectively
\begin{equation}
U_{\pm z_0}=\left\lbrace z:|z\mp z_0|\leq \min\left\lbrace\frac{z_0}{2} ,\mu/3\right\rbrace \triangleq\varepsilon\right\rbrace .\label{Uz0}
\end{equation}
This implies that $M^{RHP}$   and  $M^{(\pm z_0)}$ have no poles in  $ U_{\pm z_0}$, since $ {\rm dist}(\mathcal{Z},\mathbb{R})>\mu$.
This decomposition  splits   $M^{RHP}$  into two parts:  $M^{(out)}$ solves a model  RHP obtained by ignoring the jump conditions of  RHP5,
 which will be solved   in next Section \ref{sec7};   While  $M^{(\pm z_0)}$,  whose solution can be  approximated  with   parabolic cylinder functions
 if we let  $M^{(\pm z_0)}$   exactly match to the  $M^{(2)}$  and a  parabolic cylinder model  in $ U_{\pm z_0}$,  these results will given  in Section \ref{sec8}.
  And $E(z)$ is a error function, which is a solution of a small-norm RH problem and we discuss it in Section \ref{sec9}.

And  from  the  RHP5, whose jump matrix admits the  following extimates.
\begin{proposition}\label{pro3v2}
	For the jump matrix $ V^{(2)}(z)$, we have the following  estimate
	\begin{align}
	&\parallel V^{(2)}-I\parallel_{L^\infty(\Sigma^{(2)}_\pm\cap U_{\pm z_0})}=\mathcal{\mathcal{O}}(e^{- \frac{\sqrt{2}|t|}{4}|z\mp z_0|}(z_0^{-2}-|z|^{-2}) ),\label{7.1}\\
	&\parallel V^{(2)}-I\parallel_{L^\infty(\Sigma^{(2)}_0)}=\mathcal{\mathcal{O}}(e^{-\frac{|t|}{4z_0} } ),\label{7.2}
	\end{align}
	where  the contours are defined by
\begin{align}
&\Sigma^{(2)}_+=\Sigma_1\cup\Sigma_2\cup\Sigma_3\cup\Sigma_4, \ \ \Sigma^{(2)}_-=\Sigma_5\cup\Sigma_6\cup\Sigma_7\cup\Sigma_8,\nonumber\\
&\Sigma^{(2)}_0=\Sigma_{10}\cup\Sigma_{11}\cup\Sigma_{12}\cup\Sigma_9.\nonumber
	\end{align}
\end{proposition}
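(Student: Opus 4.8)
The plan is to reduce both bounds to an elementary pointwise estimate on $\text{Re}(\pm 2it\theta)$ along the rays of $\Sigma^{(2)}$. On each such ray the jump matrix $V^{(2)}(z)$ from (\ref{V2}) is $I$ plus a strictly triangular matrix whose single nonzero entry is $\pm R_j(z)\,e^{\pm 2it\theta(z)}$. By Proposition \ref{proR} (the estimates (\ref{R}) and (\ref{Rk})) the functions $R_j$ are uniformly bounded on their contours, $|R_j(z)|\lesssim \sin^2(\arg(z\mp z_0))+\langle\text{Re}(z)\rangle^{-1/2}\lesssim 1$, so for any sub-contour $\Gamma$ one has $\|V^{(2)}-I\|_{L^\infty(\Gamma)}\lesssim\sup_{z\in\Gamma}e^{\text{Re}(\pm 2it\theta(z))}$, and it remains only to show that the exponent that actually occurs in $V^{(2)}$ along $\Gamma$ is negative and of the claimed order. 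Throughout I will use the identity recorded just after (\ref{theta}), namely $\text{Re}(2it\theta)=-2t\,\text{Im}(z)\left(\xi+\frac{1}{4|z|^2}\right)$, which after substituting $\xi=-1/(4z_0^2)$ becomes $\text{Re}(2it\theta)=\frac{t}{2}\,\text{Im}(z)\left(\frac{1}{z_0^2}-\frac{1}{|z|^2}\right)$.

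For the first estimate I would work near $z_0$ (the point $-z_0$ being entirely symmetric) and parametrize each ray of $\Sigma^{(2)}_+$ meeting $U_{z_0}$ as $z=z_0+e^{i\psi}l$, $0<l\le\varepsilon$, with $\psi$ one of the steepest-descent angles. On $\Sigma_1$ (so $\psi=\pi/4$, taking $\eta=-1$) one has $\text{Im}(z)=l/\sqrt2$ and $|z|^2=z_0^2+\sqrt2\,z_0 l+l^2>z_0^2$, hence
\begin{equation*}
\text{Re}(2it\theta)=-\frac{\sqrt2\,|t|}{4}\,|z-z_0|\left(\frac{1}{z_0^2}-\frac{1}{|z|^2}\right)<0,
\end{equation*}
which is exactly the bound in (\ref{7.1}). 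On $\Sigma_2,\Sigma_3,\Sigma_4$ the sign of $\cos\psi$, hence of $|z|^2-z_0^2$, changes from ray to ray, but this change is precisely compensated by the switch between $e^{2it\theta}$ and $e^{-2it\theta}$ in (\ref{V2}), so the relevant exponent is always $-\frac{\sqrt2\,|t|}{4}\,|z-z_0|\,\bigl|\,z_0^{-2}-|z|^{-2}\,\bigr|$. This is the quantitative reflection of $\theta'(\pm z_0)=0$: the exponent vanishes only to first order in $|z-z_0|$ because $|z|^2-z_0^2\gtrsim z_0|z-z_0|$ supplies the missing power, so in fact $|e^{\pm2it\theta}|\lesssim e^{-|t||z-z_0|^2/z_0^3}$ near the stationary point. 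The case $\eta=+1$ follows by replacing $t$ with $-t$, i.e.\ by reflecting the contour picture.

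For the second estimate I would parametrize the rays through the origin, $\Sigma_9,\dots,\Sigma_{12}$, as $z=e^{i\psi}h$ with $\psi\in\{\pm\pi/4,\pm3\pi/4\}$ and $0<h\le z_0/\sqrt2$. Here $|z|=h<z_0$, so $z_0^{-2}-|z|^{-2}<0$ along the whole segment, while $\text{Im}(z)=\pm h/\sqrt2$; matching this against the exponential occurring in $V^{(2)}$ on that ray — necessarily the one whose modulus \emph{decreases} toward the origin, since $\theta$ has a pole at $z=0$ — the exponent equals, for $\eta=-1$,
\begin{equation*}
-\frac{|t|}{2\sqrt2}\,h\left(\frac{1}{h^2}-\frac{1}{z_0^2}\right)<0 ,
\end{equation*}
which is increasing in $h$ on $(0,z_0/\sqrt2]$ and hence maximal at the far endpoint $h=z_0/\sqrt2$, where it equals $-|t|/(4z_0)$. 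Therefore $\|V^{(2)}-I\|_{L^\infty(\Sigma^{(2)}_0)}\lesssim e^{-|t|/(4z_0)}$, which is (\ref{7.2}).

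The only genuinely delicate part is the bookkeeping: for every ray and every choice of $\eta$ one has to verify that the triangular factor of $V^{(2)}$ carries exactly the exponential $e^{\pm2it\theta}$ whose real part is negative along that ray. This is what the steepest-descent choice of the ray directions $(2k-1)\pi/4$ is designed to guarantee, and near $z=0$ it is also the reason $\Sigma_9,\dots,\Sigma_{12}$ are truncated at radius $z_0/\sqrt2$, so that on those segments one stays on the decaying side of the pole of $\theta$ and the $L^\infty$ supremum is attained at the endpoint farthest from $0$. Beyond this no $L^p$ bounds or Cauchy-operator estimates enter, since $V^{(2)}-I$ is a bounded factor $R_j$ times a single scalar exponential, and pointwise control of $\text{Re}(\pm2it\theta)$ suffices.
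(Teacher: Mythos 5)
Your proposal is correct and follows essentially the same route as the paper's proof: bound $V^{(2)}-I$ by the uniformly bounded factor $R_j$ times the scalar exponential and control $\mathrm{Re}(\pm 2it\theta)=\mp 2t\,\mathrm{Im}(z)\bigl(\xi+\tfrac{1}{4|z|^2}\bigr)$ pointwise along each ray (the paper writes out only the $\Sigma_9$ case and declares the rest similar, whereas you supply the remaining bookkeeping). The only blemish is the parenthetical claim $|e^{\pm2it\theta}|\lesssim e^{-|t||z-z_0|^2/z_0^3}$, whose constant in the exponent is slightly too generous, but that remark plays no role in establishing (\ref{7.1}) or (\ref{7.2}).
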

\begin{proof}
	We   prove (\ref{7.2})  for  the case when  $\eta=+1$ and  $z\in\Sigma_{9}$,    other cases can be shown in a  similar  way.
	By using  definition of $V^{(2)}$ and (\ref{R}),  we have
	\begin{align}
	\parallel V^{(2)}-I\parallel_{L^\infty(\Sigma_9)}\leq \parallel R_1e^{2it\theta}\parallel_{L^\infty(\Sigma_9)}. \label{poope}
	\end{align}
	Note that $|z|<\sqrt{2}z_0/2$ for $z\in \Sigma_{9}$, together with (\ref{theta}),  we find that
	\begin{align}
	|e^{2it\theta}|= e^{-2t\text{Im}z\left(\xi+\frac{1}{4|z|^2}\right) }\leq e^{-\frac{t}{4z_0} } \to 0,\hspace{0.3cm} \text{as }t\to + \infty,\nonumber
	\end{align}
	which together with (\ref{poope}) gives (\ref{7.2}). And the calculation of $\Sigma^{(2)}_\pm$ is similar.
\end{proof}

This proposition means that the jump matrix $V^{(2)}$    uniformly goes to  $I$  on  both   $\Sigma^{(2)}_\pm\cap U_{\pm z_0}$ and $\Sigma^{(2)}_0$,
so outside the $U_{z_0}\cup U_{-z_0}$ there is only exponentially small error (in t) by completely ignoring the jump condition of  $M^{RHP}$. And note that unlike the neighborhood of $\pm z_0$, $V^{(2)}\to I$ as $z\to0$, it has  uniformly property. So we doesn't need to consider the neighborhood of $z=0$ alone.

\section{ Outer   model  RH problem } \label{sec7}
\quad In this section, we build a outer model RH  problem and  show that  its solution can  approximated  with  a finite sum  of soliton solutions.
Note that from the reconstruct formula (\ref{recons u}), we only need the property of $M^{(out)}$ as $z\to 0$. We can introduce following outer model problem.

\noindent\textbf{RHP7.}  Find a matrix-valued function  $ M^{(out)}(z;y,t)$ with following properties:

$\blacktriangleright$ Analyticity: $M^{(out)}(z;y,t)$ is analytical  in $\mathbb{C}\setminus (\Sigma^{(2)}\cup \mathcal{Z}\cup \bar{\mathcal{Z}})$;

$\blacktriangleright$ Symmetry: $\overline{M^{(out)}(\bar{z})}$=$M^{(out)}(-z)$=$\sigma_2M^{(out)}(z)\sigma_2$;

$\blacktriangleright$ Asymptotic behaviours:
\begin{align}
&M^{(out)}(z;y,t) \sim I+\mathcal{O}(z^{-1}),\hspace{0.5cm}z \rightarrow \infty;
\end{align}

$\blacktriangleright$ Residue conditions: $M^{(out)}$ has simple poles at each point in $ \mathcal{Z}\cup \bar{\mathcal{Z}}$ satisfying the
same residue relations  (\ref{6.3})-(\ref{6.6}) with $M^{RHP}(z)$.

Before showing the existence and uniqueness of solution of the above RHP7,  we  first consider the reflectionless case
of the RHP1. In this case,  $M$ has  no  contour,  the RHP1   reduces to the following RH problem.

\noindent \textbf{RHP8.}  Given discrete data $\sigma_d=\left\lbrace(z_k,c_k) \right\rbrace _{k=1}^N$, and $\mathcal{Z}=\left\lbrace z_k\right\rbrace _{k=1}^N $.
 Find a matrix-valued function  $  m(z;y,t|\sigma_d)$ with following properties:
	
	$\blacktriangleright$ Analyticity: $m(z;y,t|\sigma_d)$ is analytical  in $\mathbb{C}\setminus (\Sigma^{(2)}\cup \mathcal{Z}\cup \bar{\mathcal{Z}})$;
	
	$\blacktriangleright$ Symmetry: $\overline{m(\bar{z};y,t|\sigma_d)}$=$m(-z;y,t|\sigma_d)$=$\sigma_2m(z;y,t|\sigma_d)\sigma_2$;
	
	$\blacktriangleright$ Asymptotic behaviours:
	\begin{align}
	&m(z;y,t|\sigma_d) \sim I+\mathcal{O}(z^{-1}),\hspace{0.5cm}z \rightarrow \infty;
	\end{align}

	$\blacktriangleright$ Residue conditions: $m(z;y,t|\sigma_d)$ has simple poles at each point in $ \mathcal{Z}\cup \bar{\mathcal{Z}}$ satisfying
	\begin{align}
	&\res_{z=z_n}m(z;y,t|\sigma_d)=\lim_{z\to z_n}m(z;y,t|\sigma_d)\tau_k,\\
	&\res_{z=\bar{z}_n}m(z;y,t|\sigma_d)=\lim_{z\to \bar{z}_n}m(z;y,t|\sigma_d)\hat{\tau}_k,
	\end{align}
	where $\tau_k$ is a nilpotent matrix satisfies
	\begin{align}
	\tau_k=\left(\begin{array}{cc}
	0 & \gamma_k\\
	0 & 0
	\end{array}\right),\hspace{0.5cm}\hat{\tau}_k=\sigma_2\overline{\tau_k}\sigma_2,\hspace{0.5cm} \gamma_k=c_ke^{-2it\theta_k}.
	\end{align}
	Moreover, the solution satisfies
	\begin{equation}
	\parallel m(z;y,t|\sigma_d)^{-1}\parallel_{L^\infty(C\setminus(\mathcal{Z}\cup \bar{\mathcal{Z}}))}\lesssim 1. \label{normm}
	\end{equation}

\begin{proposition}\label{unim}	 The RHP8   exists an  unique solution.
\end{proposition}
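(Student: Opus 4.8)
The plan is to establish existence and uniqueness of the solution to RHP8 by the standard machinery for a finite, poles-only Riemann--Hilbert problem: first reduce the problem to a finite-dimensional linear algebra problem and prove that the associated coefficient matrix is nonsingular; then use the vanishing lemma / symmetry argument to upgrade uniqueness. Concretely, since $m(z;y,t|\sigma_d)$ has no jump contour, the difference $m(z)-I$ is a rational function of $z$ vanishing at $\infty$ with simple poles only at the points of $\mathcal{Z}\cup\bar{\mathcal{Z}}$, so I would write
\begin{equation}
m(z)=I+\sum_{k=1}^{N}\left(\frac{\operatorname{Res}_{z=z_k}m}{z-z_k}+\frac{\operatorname{Res}_{z=\bar z_k}m}{z-\bar z_k}\right),\nonumber
\end{equation}
and plug this ansatz into the residue relations $\operatorname{Res}_{z=z_n}m=\lim_{z\to z_n}m(z)\tau_n$ (and its conjugate). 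Because each $\tau_n$ is the rank-one nilpotent matrix with single nonzero entry $\gamma_n$ in the $(1,2)$ slot, only the second column of the limiting value feeds back into the first column of the residue; exploiting the symmetry $\hat\tau_k=\sigma_2\overline{\tau_k}\sigma_2$ one sees the unknowns are essentially the $N$ scalars giving the first column behavior at each $z_k$. This collapses the residue conditions to a linear system $\left(I + \mathcal{A}\right)\mathbf{x}=\mathbf{b}$, where $\mathcal{A}$ is an $N\times N$ matrix built from the $\gamma_k$ and the Cauchy kernels $1/(z_j-\bar z_k)$, $1/(z_j-z_k)$.

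Next I would prove that $I+\mathcal{A}$ is invertible. The cleanest route is the vanishing lemma: suppose $m_0(z)$ is a solution of the homogeneous version of RHP8 (same analyticity, same residue structure, but with $m_0\to 0$ as $z\to\infty$); I want to show $m_0\equiv 0$. Consider $H(z)=m_0(z)\,\overline{m_0(\bar z)}^{\,T}$. Using the symmetry $\overline{m_0(\bar z)}=\sigma_2 m_0(z)\sigma_2$ together with the residue relations, one checks that the apparent simple poles of $H$ at $z_k$ and $\bar z_k$ are in fact removable — this is the standard cancellation where the residue of $m_0$ at $z_k$, being proportional to the second column times $\gamma_k$, kills the would-be pole once paired against the conjugated factor (the nilpotency $\tau_k^2=0$ is what makes this work). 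Hence $H$ is entire and $\mathcal{O}(z^{-2})$ at infinity, so $H\equiv 0$ by Liouville, forcing $m_0\equiv 0$. Since the solution space is finite dimensional and the homogeneous problem has only the trivial solution, Fredholm alternative gives existence and uniqueness of the inhomogeneous RHP8; equivalently, $I+\mathcal{A}$ is nonsingular, and one can even read off an explicit determinantal formula for $m$ à la the usual $N$-soliton construction.

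Finally I would record the uniform bound \eqref{normm} on $m^{-1}$: by the symmetry $m^{-1}=\sigma_2 m^T\sigma_2$ and $\det m\equiv 1$ (the determinant has no jumps, no poles after the removable-singularity check, and tends to $1$, hence is identically $1$ by Liouville), $m^{-1}$ is expressed through the bounded entries of $m$, and away from fixed neighborhoods of $\mathcal{Z}\cup\bar{\mathcal{Z}}$ the rational expression is uniformly controlled; a compactness argument in $(y,t)$-independent fashion, or the explicit formula, gives the $\lesssim 1$ bound. I expect the main obstacle to be the bookkeeping in the removable-singularity computation for $H(z)$ at the poles — getting the pairing of residues and the $\sigma_2$-conjugation exactly right so that all second-order and first-order poles cancel — rather than anything conceptually deep; the rest is the textbook reflectionless RHP argument already used in \cite{fNLS,YYL}.
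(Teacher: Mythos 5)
Your proposal is correct and follows essentially the same route as the paper: a partial-fraction ansatz for $m-I$ justified by the pole/symmetry structure, a Liouville-type argument for uniqueness, and the bound on $m^{-1}$ via $\det m\equiv 1$ and the $\sigma_2$-symmetry. The only difference is that you spell out the existence step (reduction to a finite linear system plus the vanishing-lemma argument with $H(z)=m_0(z)\overline{m_0(\bar z)}^{T}$, whose pole cancellation indeed works thanks to $\tau_k^2=0$ and $\sigma_2\tau_k^T\sigma_2=-\tau_k$), whereas the paper delegates exactly this point to Appendix B of the cited focusing-NLS reference.
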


\begin{proof}
	The uniqueness of solution follows from the Liouville's theorem. The symmetries  of $m(z;y,t|\sigma_d)$  means that
it  admits a partial fraction expansion of following form
	\begin{equation}
	m(z;y,t|\sigma_d)=I+\sum_{k=1}^{N}\left[ \frac{1}{z-z_k}\left(\begin{array}{cc}
	0 & \nu_k(y,t)\\
	0 & \varsigma_k(y,t)
	\end{array}\right)+\frac{1}{z-\bar{z}_k}\left(\begin{array}{cc}
	\overline{\varsigma_k(y,t)} & 0\\
	-\overline{\nu_k(y,t)} & 0
	\end{array}\right)\right] .\label{EXPM}
	\end{equation}
	By using a  similar way  to Appendix B  in \cite{NLSB},    we can show   the existence of the solution for the RHP8.	
	Since det($m(z;y,t|\sigma_d)$)=1,  $\parallel m(z;y,t|\sigma_d)\parallel_{L^\infty(C\setminus(\mathcal{Z}\cup \bar{\mathcal{Z}}))}$ is bounded.
 And from (\ref{EXPM}),  we simply obtain (\ref{normm}).
\end{proof}

In  reflectionless case, the transmission coefficient admits   following trace formula
\begin{equation}
a(z)=\prod_{k=1}^N\dfrac{z-z_k}{z-\bar{z}_k},\label{az}
\end{equation}
whose poles can be split into two parts.   Let $\triangle \subseteqq\left\lbrace 1,2,...,N\right\rbrace $, and define
\begin{equation}
a_\triangle(z)=\prod_{k\in\triangle }\dfrac{z-z_k}{z-\bar{z}_k},\nonumber
\end{equation}
we  make  a  renormalization transformation
\begin{equation}
m^\triangle(z|D) =m(z|\sigma_d)a^\triangle(z)^{-\sigma_3},\label{renormalization}
\end{equation}
where the scattering data are given by
\begin{align}
D=\left\lbrace(z_k,c_k') \right\rbrace _{k=1}^N, \ \ c_k'=c_ka^\triangle(z)^2.\label{delta}
\end{align}
It is easy to see that the  transformation (\ref{renormalization})   splits the poles between the columns of $m^\triangle(z|D)$
according to the choice of $\triangle$, and it satisfies the following modified discrete RH problem.

\noindent\textbf{RHP9.}   Given discrete data (\ref{delta}),  find a matrix-valued function  $ m^\triangle(z;y,t|D)$ with following properties:

$\blacktriangleright$ Analyticity: $m^\triangle(z;y,t|D)$ is analytical  in $\mathbb{C}\setminus (\Sigma^{(2)}\cup \mathcal{Z}\cup \bar{\mathcal{Z}})$;

$\blacktriangleright$ Symmetry:  $m^\triangle(z;y,t|D)=\sigma_2\overline{m^\triangle(\bar{z};y,t|D)}\sigma_2=\overline{m^\triangle(-\bar{z};y,t|D)}$;

$\blacktriangleright$ Asymptotic behaviours:
\begin{align}
&m^\triangle(z;y,t|D) \sim I+\mathcal{O}(z^{-1}),\hspace{0.5cm}z \rightarrow \infty;
\end{align}

$\blacktriangleright$ Residue conditions: $m(z;y,t|\sigma_d)$ has simple poles at each point in $ \mathcal{Z}\cup \bar{\mathcal{Z}}$ satisfying
\begin{align}
&\res_{z=z_n}m^\triangle(z;y,t|D)=\lim_{z\to z_n}m^\triangle(z;y,t|D)\tau^\triangle_k,\\
&\res_{z=\bar{z}_n}m^\triangle(z;y,t|D)=\lim_{z\to \bar{z}_n}m^\triangle(z;y,t|D)\hat{\tau}_k^\triangle,
\end{align}
where $\tau_k$ is a nilpotent matrix satisfies
\begin{align}
&\tau_k^\triangle=\left\{\begin{array}{ll}
\left(\begin{array}{cc}
0 & \gamma_ka^\triangle(z)^2\\
0 & 0
\end{array}\right) & k\notin \triangle\\
\left(\begin{array}{cc}
0 & 0\\
\gamma_k^{-1}a'^\triangle(z)^{-2} & 0
\end{array}\right)  & k\in \triangle\\
\end{array}\right.,\hspace{0.5cm}\hat{\tau}_k^\triangle=\sigma_2\overline{\tau}_k^\triangle\sigma_2^{-1},\nonumber\\
&\gamma_k=c_ke^{-2it\theta_k}.\label{tau}
\end{align}
Since (\ref{renormalization}) is a explicit transformation of $m(z;y,t|\sigma_d)$,  by  Proposition  \ref{unim},
 we obtain the existence and uniqueness of the solution of the  RHP9.

In the  RHP9, take $\Delta =\Delta^-_{z_0,\eta}$ and replace the scattering data  $D$ with  scattering data
\begin{align}
\tilde{D}=\left\lbrace(z_k,\tilde{c}_k) \right\rbrace _{k=1}^N, \ \ \tilde{c}_k=c_k\delta(z_k)^2,\label{out}
\end{align}
then we have
\begin{corollary}
 	 There exists and   unique solution  for the RHP7, moreover,
 	\begin{equation*}
 	M^{(out)}(z;y,t)=m^{\Delta^-_{z_0,\eta}}(z;y,t|\tilde{D}),
 	\end{equation*}
 	where scattering data  $\tilde{D}$ is given by (\ref{out}).
\end{corollary}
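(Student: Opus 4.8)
The plan is to build the solution of RHP7 out of the already solved renormalized reflectionless problem RHP9 and then invoke uniqueness. First note that RHP7 imposes no jump condition across $\Sigma^{(2)}$: only analyticity off $\Sigma^{(2)}\cup\mathcal{Z}\cup\bar{\mathcal{Z}}$, the symmetry, the normalization at infinity, and the residue relations (\ref{6.3})--(\ref{6.6}) are required. Hence a solution is in fact analytic on all of $\mathbb{C}\setminus(\mathcal{Z}\cup\bar{\mathcal{Z}})$, and RHP7 is a rational interpolation problem: a matrix function analytic off the finite set $\mathcal{Z}\cup\bar{\mathcal{Z}}$, tending to $I$ at infinity, with the prescribed symmetry and with the principal part at each pole fixed by the nilpotent residue relations (\ref{6.3})--(\ref{6.6}).

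For uniqueness I would argue exactly as in Proposition \ref{unim}: if $M_1^{(out)}$ and $M_2^{(out)}$ are two solutions, the symmetry reduction forces $\det M_i^{(out)}\equiv 1$, so $\Theta:=M_1^{(out)}(M_2^{(out)})^{-1}$ is a well-defined matrix function analytic on $\mathbb{C}\setminus(\mathcal{Z}\cup\bar{\mathcal{Z}})$ with $\Theta\to I$ as $z\to\infty$. At each $z_n$ (and each $\bar z_n$) the relations (\ref{6.3})--(\ref{6.6}) pin down the singular part of $M_i^{(out)}$ in terms of the regular value of its complementary column, and the Laurent-expansion computation already carried out for $M^{(3)}$ in Section \ref{sec6} shows that $\Theta$ has only a removable singularity there. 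Thus $\Theta$ extends to an entire function bounded at infinity, so $\Theta\equiv I$ by Liouville's theorem.

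For existence I would check that $m^{\Delta^-_{z_0,\eta}}(z;y,t|\tilde D)$, which exists and is unique by Proposition \ref{unim} and the remark following RHP9, meets every requirement of RHP7. Analyticity off $\mathcal{Z}\cup\bar{\mathcal{Z}}$, the symmetry, and the asymptotics $m^{\Delta^-_{z_0,\eta}}\to I$ are immediate from RHP9, so the only real work is matching the residues. Here the key is the identity, read off from (\ref{T}) and the definition of $a_\triangle$,
\[
T(z)=a_{\Delta^-_{z_0,\eta}}(z)^{-1}\,\delta(z),
\]
valid on $\mathbb{C}\setminus I_\eta$. Evaluating this, and the leading Laurent coefficients of $T$ and of $1/T$, at $z_n$ and $\bar z_n$, and combining with the choice of $\tilde c_k$ in (\ref{out}) and the residue matrices $\tau_k^{\Delta^-_{z_0,\eta}}$ in (\ref{tau}), I would verify: for $n\in\Delta^+_{z_0,\eta}$ the pole sits in the column where $a_{\Delta^-_{z_0,\eta}}$ is regular and nonzero, and the residue coefficient of $m^{\Delta^-_{z_0,\eta}}(z;y,t|\tilde D)$ at $z_n$ reduces to $c_n T(z_n)^{-2}e^{-2it\theta_n}$, which is (\ref{6.3}); for $n\in\Delta^-_{z_0,\eta}$ the factor $a_{\Delta^-_{z_0,\eta}}$ has a simple zero at $z_n$, so the pole is transferred to the other column, and the leading coefficient of $1/T$ at $z_n$ produces exactly the $(1/T)'(z_n)^{-2}$ factor in the $\Delta^-_{z_0,\eta}$ residue condition of RHP5; the residues at $\bar z_n$ then follow from the symmetry. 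This shows $m^{\Delta^-_{z_0,\eta}}(z;y,t|\tilde D)$ solves RHP7, and the uniqueness just proved gives $M^{(out)}(z;y,t)=m^{\Delta^-_{z_0,\eta}}(z;y,t|\tilde D)$.

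The main obstacle is entirely in this last bookkeeping step: one must track how each norming constant $c_k$ is altered first by the $\delta$-twist contained in $T$ and then by the pole-splitting conjugation $a_{\Delta^-_{z_0,\eta}}(z)^{-\sigma_3}$, which in particular moves the pole to the opposite column precisely for the indices $k\in\Delta^-_{z_0,\eta}$, and then check that the powers of $\delta$, the finite products over $\Delta^-_{z_0,\eta}$, and the oscillatory factors $e^{\mp 2it\theta_n}$ reassemble into the coefficients written in (\ref{6.3})--(\ref{6.6}). Once the displayed identity for $T$ is available this reduces to elementary algebra, and no analytic input is needed beyond Proposition \ref{unim} (solvability of the reflectionless problem) and Proposition \ref{proT} (properties of $T$).
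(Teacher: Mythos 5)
Your proposal is correct and follows essentially the same route as the paper: the paper obtains the corollary by simply taking $\triangle=\Delta^-_{z_0,\eta}$ in RHP9 with the modified data $\tilde D$, which is exactly your residue-matching step via $T(z)=a_{\Delta^-_{z_0,\eta}}(z)^{-1}\delta(z)$, with solvability inherited from Proposition \ref{unim}. You merely make explicit the uniqueness (Liouville) argument and the norming-constant bookkeeping that the paper leaves implicit.
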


   If $u_{sol}(y,t)=u_{sol}(y,t;D)$ denotes the $N$-soliton solution of the SP equation (\ref{sp}) encoded by the RHP8,
  by using (\ref{renormalization}),  we also have the reconstruction formula
\begin{equation}
u_{sol}(x,t|D)=u_{sol}(y(x,t),t|D)=\lim_{z\to 0}\dfrac{\left( m^\triangle(0;y,t|D)^{-1}m^\triangle(z;y,t|D) \right)_{12} }{iz},
\end{equation}
which show that each normalization encodes $u_{sol}(y,t)$ in the same way. If we choosing $\triangle$ appropriately, the asymptotic limits $|t|\to\infty$ with $\xi=y/t$ bounded are under better asymptotic control. Then we consider the long-time behavior of soliton solutions.

Give pairs points $y_1\leq y_2\in \mathbb{R}$ and velocities $v_1\leq v_2 \in \mathbb{R}^-$,  we  define a  cone
\begin{equation}
C(y_1,y_2,v_1,v_2)=\left\lbrace (y,t)\in R^2|y=y_0+vt, \ y_0\in[y_1,y_2]\text{, }v\in[v_1,v_2]\right\rbrace.\label{coneC}
\end{equation}
and denote
\begin{align}
&I=\left\lbrace z:\ -\frac{1}{4v_1}<|z|^2<-\frac{1}{4v_2} \right\rbrace,\nonumber \\
&\mathcal{Z}(I)=\left\lbrace z_k\in \mathcal{Z}: \ z_k\in I\right\rbrace ,\hspace{2.1cm}N(I)=|\mathcal{Z}(I)|,\nonumber \\
&\mathcal{Z}^-(I)=\left\lbrace z_k\in \mathcal{Z}: \ |z|^2>-\frac{1}{4v_2}\right\rbrace,\hspace{0.5cm}\mathcal{Z}^+(I)=\left\lbrace z_k\in \mathcal{Z}: \ |z|^2<-\frac{1}{4v_1}\right\rbrace\nonumber,\\
&c_k(I)=c_k\prod_{\text{Re}z_n\in I_\eta\setminus I}\left( \frac{z_k-z_n}{z_k-\bar{z}_n}\right) ^2\exp\left[-\frac{1}{\pi i}\int_{I_\eta}\frac{\log[1+|r(\zeta)|^2]}{\zeta-z}d\zeta\right].\label{dataI}
\end{align}
We can show the  following lemma.

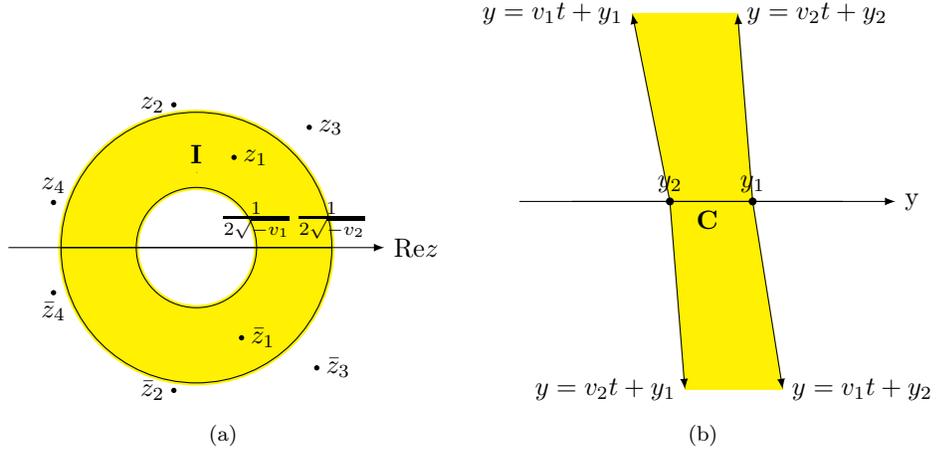
\begin{figure}[H]
	\centering
	\centering
	\subfigure[]{
		\begin{tikzpicture}[node distance=2cm]
		\filldraw[yellow,line width=2] (0.8,0)--(1.8,0)arc(0:180:1.8)--(-1.8,0)--(-0.8,0)arc(180:0:0.8)--(0.8,0);
		\filldraw[yellow,line width=2] (0.8,0)--(1.8,0)arc(360:180:1.8)--(-1.8,0)--(-0.8,0)arc(180:360:0.8)--(0.8,0);
		\draw[-latex](-2.5,0)--(2.5,0)node[right]{Re$z$};
		\draw(0,0) circle (0.8);
		\draw(0,0) circle (1.8);
		\draw[-](0,0)--(-0.8,0);
		\draw[-](-0.8,0)--(-1.8,0);
		\draw[-](0,0)--(0.8,0)node[above] {$\frac{1}{2\sqrt{-v_1}}$};
		\draw[-](0.8,0)--(1.8,0)node[above] {$\frac{1}{2\sqrt{-v_2}}$};
		\coordinate (A) at (0.5,1.2);
		\coordinate (B) at (0.6,-1.2);
		\coordinate (C) at (1.5,1.6);
		\coordinate (D) at (1.6,-1.6);
		\coordinate (E) at (-1.9,0.6);
		\coordinate (F) at (-1.9,-0.6);
		\coordinate (G) at (-0.3,1.9);
		\coordinate (H) at (-0.3,-1.9);
		\coordinate (I) at (0,1);
		\fill (A) circle (1pt) node[right] {$z_1$};
		\fill (B) circle (1pt) node[right] {$\bar{z}_1$};
		\fill (G) circle (1pt) node[left] {$z_2$};
		\fill (H) circle (1pt) node[left] {$\bar{z}_2$};
		\fill (C) circle (1pt) node[right] {$z_3$};
		\fill (D) circle (1pt) node[right] {$\bar{z}_3$};
		\fill (E) circle (1pt) node[above] {$z_4$};
		\fill (F) circle (1pt) node[below] {$\bar{z}_4$};
		\fill (I) circle (0pt) node[above] {$\textbf{I}$};
		\label{figC}
		\end{tikzpicture}
	}
	\subfigure[]{
		\begin{tikzpicture}[node distance=2cm]
		\draw[yellow, fill=yellow] (0.6,0)--(0.4,2.5)--(-1.0,2.5)--(-0.5,0)--(-0.3,-2.5)--(1.0,-2.5)--(0.6,0);
		\draw[-latex](-2.5,0)--(2.5,0)node[right]{y};
		\draw[-latex](0.6,0)--(0.4,2.5)node[right]{$y=v_2t+y_2$};
		\draw[-latex](-0.5,0)--(-0.3,-2.5)node[left]{$y=v_2t+y_1$};
		\draw[-latex](-0.5,0)--(-1.0,2.5)node[left]{$y=v_1t+y_1$};
		\draw[-latex](0.6,0)--(1.0,-2.5)node[right]{$y=v_1t+y_2$};
		\draw[-](0,0)--(-0.8,0);
		\draw[-](-0.8,0)--(-1.8,0);
		\draw[-](0,0)--(0.8,0);
		\draw[-](0.8,0)--(1.8,0);
		\coordinate (A) at (0.6,0);
		\coordinate (B) at (-0.5,0);
		\coordinate (I) at (0,0);
		\fill (A) circle (1.5pt) node[above] {$y_1$};
		\fill (B) circle (1.5pt) node[above] {$y_2$};
		\fill (I) circle (0pt) node[below] {$\textbf{C}$};
		\label{figzero}
		\end{tikzpicture}
	}
	\caption{(a) In the example here, the original data has four pairs zero points of discrete spectrum, but insider the cone C only three pairs
 points with $\mathcal{Z}(I)=\left\lbrace z_1 \right\rbrace$; (b) The cone $C(y_1,y_2,v_1,v_2)$}
	\label{figC(I)}
\end{figure}

\begin{lemma}
	Fix reflectionless data $D=\left\lbrace (z_k,c_k')\right\rbrace_{k=1}^N $, $D(I)=\left\lbrace (z_k,c_k'(I))|z_k\in \mathcal{Z}(I)\right\rbrace$. Then as $|t|\to\infty$ with $(y,t)\in C(y_1,y_2,v_1,v_2)$, we have
	\begin{equation}
m^{\triangle^-_{z_0,\eta}}(z;y,t|D)=\left( I+\mathcal{O}(e^{-2\mu(I)|t|})\right) m^{\triangle^-_{z_0,\eta}}(z;y,t|D(I)),\label{DI}
	\end{equation}
	where $\mu(I)=\min_{z_K\in \mathcal{Z}\setminus \mathcal{Z}(I)}\left\lbrace \text{\rm Re}(z_k)\frac{-v_2}{|z|^2}(|z|+\frac{1}{2\sqrt{-v_1}})  {\rm dist}(z_k,I)\right\rbrace $.
\end{lemma}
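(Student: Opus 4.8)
The plan is to view both $m^{\triangle^-_{z_0,\eta}}(z;y,t|D)$ and $m^{\triangle^-_{z_0,\eta}}(z;y,t|D(I))$ as the unique solutions (Proposition \ref{unim}, via the explicit transformation (\ref{renormalization})) of RHP9--type problems that carry no jump and differ only in their pole set --- all of $\mathcal{Z}\cup\bar{\mathcal{Z}}$ for the first, only $\mathcal{Z}(I)\cup\bar{\mathcal{Z}}(I)$ for the second --- and in the norming constants attached to the shared poles. Each such solution is a rational matrix of the form (\ref{EXPM}) whose finitely many coefficients solve a linear algebraic system obtained from the residue conditions, and for which the a priori bound (\ref{normm}) holds uniformly for every intermediate choice of discrete data (and uniformly for $t\to+\infty$ and $t\to-\infty$, the partition class $\triangle^-_{z_0,\eta}$ carrying the sign of $t$). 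The argument then has three ingredients: (i) for $z_k\in\mathcal{Z}\setminus\mathcal{Z}(I)$ the exponential factor in the residue coefficient of $m^{\triangle^-_{z_0,\eta}}(z;y,t|D)$ at $z_k$ is of size $\mathcal{O}(e^{-2\mu(I)|t|})$ throughout the cone $C(y_1,y_2,v_1,v_2)$; (ii) such a node can be peeled off by a single Blaschke conjugation $m\mapsto m\,b_{z_k}(z)^{\mp\sigma_3}$ with $b_{z_k}(z)=\frac{z-z_k}{z-\bar z_k}$, arranged so that the peeled coefficient tends to $0$, the accumulated conjugations rescaling the surviving norming constants into exactly the $c_k'(I)$ of (\ref{dataI}); (iii) each removal perturbs the solution by $\mathcal{O}(e^{-2\mu(I)|t|})$ in $L^\infty\bigl(\mathbb{C}\setminus(\mathcal{Z}\cup\bar{\mathcal{Z}})\bigr)$.

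For (i): on the cone $\xi=y/t=v+y_0/t$ with $v\in[v_1,v_2]$, $y_0\in[y_1,y_2]$, so $z_0^2=-\tfrac1{4\xi}$ stays within $\mathcal{O}(|t|^{-1})$ of $\bigl[-\tfrac1{4v_1},-\tfrac1{4v_2}\bigr]$, the range of $|z|^2$ over $z\in I$. Hence, for $|t|$ large (depending only on the cone), the class of $k$ in $\triangle^-_{z_0,\eta}$ is constant on the cone for every $z_k\in\mathcal{Z}\setminus\mathcal{Z}(I)$, and, writing $\xi=-\tfrac1{4z_0^2}$,
\begin{equation*}
\Bigl|\xi+\tfrac1{4|z_k|^2}\Bigr|=\frac{\bigl|z_0^2-|z_k|^2\bigr|}{4z_0^2|z_k|^2}\geq\frac{-v_2}{|z_k|^2}\Bigl(|z_k|+\tfrac1{2\sqrt{-v_1}}\Bigr)\,{\rm dist}(z_k,I),
\end{equation*}
so that $\bigl|{\rm Re}\,(2it\theta(z_k))\bigr|=2|t|\,{\rm Im}(z_k)\bigl|\xi+\tfrac1{4|z_k|^2}\bigr|\geq 2\mu(I)|t|$ by the definition of $\mu(I)$. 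Moreover the sign of ${\rm Re}\,(2it\theta(z_k))$ is fixed on the cone and matches one of the two columns; accordingly the residue coefficient of $m^{\triangle^-_{z_0,\eta}}$ at $z_k$ is either of size $\mathcal{O}(e^{-2\mu(I)|t|})$ or, if it sits in the opposite column, of size $e^{+2\mu(I)|t|}$, in which case one extra Blaschke conjugation at $z_k$ interchanges the two columns and produces a coefficient of size $\mathcal{O}(e^{-2\mu(I)|t|})$.

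For (ii)--(iii): removing a node whose renormalized coefficient is $\mathcal{O}(e^{-2\mu(I)|t|})$ is, at the level of (\ref{EXPM}), the statement that setting that one coefficient to $0$ changes the solution of the governing linear system by $\mathcal{O}(e^{-2\mu(I)|t|})$; this is immediate from the uniform invertibility of that system along the cone, which is exactly what (\ref{normm}) encodes for every intermediate pole configuration. Iterating over the finitely many $z_k\in\mathcal{Z}\setminus\mathcal{Z}(I)$ preserves the bound $\mathcal{O}(e^{-2\mu(I)|t|})$. Finally, each Blaschke conjugation $b_{z_n}$ rescales the surviving norming constant at $z_k\in\mathcal{Z}(I)$ by $b_{z_n}(z_k)^{\pm2}=\bigl(\tfrac{z_k-z_n}{z_k-\bar z_n}\bigr)^{\pm2}$; collecting these over all removed nodes, together with the $\delta$-type factor already carried by the scattering data of the reflectionless model, reproduces precisely the product and exponential factors in the definition (\ref{dataI}) of $c_k'(I)$. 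This yields (\ref{DI}).

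The main obstacle is the bookkeeping of step (ii): one must verify that the order of removal is irrelevant and that the assembled product of Blaschke and $\delta$ factors coincides with $c_k'(I)$ as defined in (\ref{dataI}), keeping careful track of which far nodes are already flipped into the decaying column by the $\triangle^-_{z_0,\eta}$--conjugation and which still require an extra conjugation, so that every far-node coefficient is ultimately made to decay rather than grow. The analytic content of (iii) --- uniform invertibility of the finite linear system along the cone and the consequent $\mathcal{O}(e^{-2\mu(I)|t|})$ bound on each perturbation --- is a direct consequence of (\ref{normm}) and presents no difficulty.
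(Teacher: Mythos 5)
Your step (i) --- the pointwise estimate showing that for $z_k\in\mathcal{Z}\setminus\mathcal{Z}(I)$ the residue coefficient $\tau_k^{\triangle}$ is $\mathcal{O}(e^{-2\mu(I)|t|})$ uniformly on the cone --- is essentially the paper's computation (the paper writes $|\gamma_k|=|c_ke^{2\mathrm{Re}(z)y_0}|\,|e^{2t\mathrm{Re}(z)(\frac{1}{4|z|^2}+v_0)}|$ and deduces (\ref{asytau})), and your factorization of $|\xi+\frac{1}{4|z_k|^2}|$ matches the definition of $\mu(I)$. Where you diverge is the removal step. The paper does not perturb a finite linear system: it surrounds each discarded pole by a small disk $D_k$, conjugates $m^{\triangle^-_{z_0,\eta}}(\cdot|D)$ inside $D_k$ by $I-\tau_k/(z-z_k)$ as in (\ref{7.24}) so that the residue becomes a jump $\widetilde V$ on $\partial D_k$ with $\|\widetilde V-I\|_{L^\infty}=\mathcal{O}(e^{-2\mu(I)|t|})$, then forms the ratio $m_0=\widetilde m^{\triangle^-_{z_0,\eta}}(\cdot|D)\,m^{\triangle^-_{z_0,\eta}}(\cdot|D(I))^{-1}$, whose conjugated jump $V_{m_0}=m(\cdot|D(I))\widetilde V m(\cdot|D(I))^{-1}$ stays exponentially close to $I$ precisely because of (\ref{normm}), and concludes $m_0=I+\mathcal{O}(e^{-2\mu(I)|t|})$ by small-norm RH theory. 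Note also that in the paper's construction the surviving norming constants are not rescaled at all (the conjugation is the identity outside the disks), so no Blaschke bookkeeping of the kind you describe enters the proof of this lemma; those factors appear only later, in the corollary relating $M^{(out)}$ to $m^{\triangle^-_{z_0,\eta}}(\cdot|D(I))$.

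The genuine gap is in your step (iii). You assert that setting one exponentially small coefficient to zero changes the solution of the linear system underlying (\ref{EXPM}) by the same exponentially small amount, and that this is ``immediate from the uniform invertibility of that system along the cone, which is exactly what (\ref{normm}) encodes.'' It is not: (\ref{normm}) is an $L^\infty$ bound on the matrix inverse $m(z)^{-1}$ of one fixed solution, whereas what you need is a $t$-uniform bound on the resolvent of the finite-dimensional linear map sending the residue data $\{\gamma_k\}$ to the coefficients $\{\nu_k,\varsigma_k\}$ --- equivalently, a Lipschitz bound for the solution map in the norming constants. These are different objects, and no argument is given to pass from one to the other; a priori the resolvent of the coefficient system could degenerate as $|t|\to\infty$ even while each individual solution stays bounded. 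This is exactly the step the paper's construction supplies rigorously: converting the small residues into small jumps and invoking the boundedness of $m(\cdot|D(I))^{\pm 1}$ from (\ref{normm}) to control the conjugated jump $V_{m_0}$ turns the needed stability into a standard small-norm Riemann--Hilbert estimate. Your outline is repairable (either by adopting the disk-conjugation argument, or by actually proving the uniform resolvent bound for the algebraic system as in Appendix B of the focusing NLS reference), but as written the central quantitative claim is asserted rather than proved.
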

\begin{proof} We denote
$$\triangle^+(I)=\left\lbrace k|\text{Re}(z_k)<-v_2/4\right\rbrace,  \ \
\triangle^-(I)=\left\lbrace k|\text{Re}(z_k)>-v_1/4\right\rbrace.$$
and take $\triangle=\triangle_{z_0,\eta}^-$  with $\eta=\text{sgn}(t)$ in the  RHP9,
then for $z\in \mathcal{Z}\setminus \mathcal{Z}(I)$ and $(y,t)\in C(y_1,y_2,v_1,v_2)$, by using the residue coefficients (\ref{tau}),
   direct calculation shows that
	\begin{align}
& |\gamma_k|=|c_ke^{2\text{Re}(z)y_0}||e^{2t\text{Re}(z)(\frac{1}{4|z|^2}+v_0)}|\nonumber\\
&=|c_ke^{2\text{Re}(z)y_0}| \big|e^{-2t\text{Re}(z)\frac{-v_0}{|z|^2}(|z|+\frac{1}{2\sqrt{-v_0}})(|z|-\frac{1}{2\sqrt{-v_0}})}\big|,\nonumber
	\end{align}
which leads to
	\begin{equation}
	\parallel\tau_k^{\triangle^\pm(I)}\parallel=\mathcal{O}(e^{-2\mu(I)|t|}),\hspace{0.5cm}  t\to \pm\infty.\label{asytau}
	\end{equation}
	Suppose that $D_k$ is a small disks centrad in each $z_k\in \mathcal{Z}\setminus \mathcal{Z}(I)$ with  radius smaller than $\mu$. Denote $\partial D_k$ is the boundary of $D_k$. Then we can  introduce a new transformation which can remove the poles $z_k\in \mathcal{Z}\setminus \mathcal{Z}(I)$ and these residues change to near-identity jumps.
	\begin{equation}\label{7.24}
	 \widetilde{m}^{\triangle^-_{z_0,\eta}}(z;y,t|D)=\left\{\begin{array}{ll}
	m^{\triangle^-_{z_0,\eta}}(z;y,t|D)\left(I-\frac{\tau^{\triangle^\eta(I)}_k}{z-z_k} \right)  & z\in D_k,\\
	m^{\triangle^-_{z_0,\eta}}(z;y,t|D)\left(I-\frac{\hat{\tau}^{\triangle^\eta(I)}_k}{z-\bar{z}_k} \right)  & z\in \bar{D}_k,\\
	m^{\triangle^-_{z_0,\eta}}(z;y,t|D)  & elsewhere.\\
	\end{array}\right.
	\end{equation}
	Comparing with $m^{\triangle^-_{z_0,\eta}}$, the new matrix function $\widetilde{m}^{\triangle^-_{z_0,\eta}}(z;y,t|D)$ has new jump in each $\partial D_k$ which denote by $\widetilde{V}(z)$. Then using (\ref{asytau}),  we have
	\begin{equation}
	\parallel\widetilde{V}(z)-I\parallel_{L^\infty(\widetilde{\Sigma})}=\mathcal{O}(e^{-2\mu(I)|t|}),\hspace{0.5cm}\widetilde{\Sigma}=\cup_{z_k\in \mathcal{Z}\setminus \mathcal{Z}(I)}\left( \partial D_k\cup\partial \bar{D}_k\right).\label{asytV}
	\end{equation}
Since  $ \widetilde{m}^{\triangle^-_{z_0,\eta}}(z;y,t|D) $ has same poles and residue conditions  with  $  {m}^{\triangle^-_{z_0,\eta}}(z;y,t|D(I))$,
then
$$m_0(z)= \widetilde{m}^{\triangle^-_{z_0,\eta}}(z;y,t|D)  {m}^{\triangle^-_{z_0,\eta}}(z;y,t|D(I))^{-1}$$
  has no poles, but  it has jump matrix for $z\in\widetilde{\Sigma}$,
	\begin{equation}
	m_0^+(z)=m_0^-(z)V_{m_0}(z),
	\end{equation}
where the jump matrix $V_{m_0}(z)$ given by
	\begin{equation}
 V_{m_0}(z)=m(z|D(I))\widetilde{V}(z)m(z|D(I))^{-1},
	\end{equation}
which, by using (\ref{asytV}),  also admits the same decaying estimate
$$\parallel V_{m_0}(z)-I\parallel_{L^\infty(\widetilde{\Sigma})}=\parallel\widetilde{V}(z)-I\parallel_{L^\infty(\widetilde{\Sigma})}=\mathcal{O}(e^{-2\mu(I)|t|}), \ \ t\rightarrow \pm\infty.$$

 Then by using  the theory of small norm  RH problem \cite{RN4,RN5},   we find  that
  $m_0(z)$ exists and
  $$m_0(z)=I+\mathcal{O}(e^{-2\mu(I)|t|}), \ \ t\to \pm\infty,$$
which together with (\ref{7.24})  gives  the formula (\ref{DI}).
\end{proof}

Using reconstruction formula  to $m^{\triangle^-_{z_0,\eta}}(z;y,t|D)$,  we  immediately obtain the following  result.
\begin{corollary}\label{usol}
	Let $u_{sol}(y,t;D)$ and $u_{sol}(y,t;D(I))$ denote the $N$-soliton solution of (\ref{sp}) corresponding to discrete scattering data D and D(I) respectively.
 As $|t|\to\infty$ with $(y,t)\in C(y_1,y_2,v_1,v_2)$, we have
	\begin{align}
	&\lim_{z\to 0}\dfrac{\left( m^{\triangle^-_{z_0,\eta}}(0;y,t|D)^{-1}m^{\triangle^-_{z_0,\eta}}(z;y,t|D) \right)_{21} }{iz}=u_{sol}(y(x,t),t;D)\nonumber\\
	&=u_{sol}(y(x,t),t;D(I))+\mathcal{O}(e^{-\mu(I)|t|}).
	\end{align}
By using  (\ref{c+}), we have
	\begin{align}
	c_+(x,t;D)=&\lim_{z\to 0}\dfrac{\left( m^{\triangle^-_{z_0,\eta}}(0;y,t|D)^{-1}m^{\triangle^-_{z_0,\eta}}(z;y,t|D) \right)_{11}-1 }{iz}\nonumber\\
	&=c_+(x,t;D(I))+\mathcal{O}(e^{-\mu(I)|t|}).\label{c+D}
	\end{align}
\end{corollary}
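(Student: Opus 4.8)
The plan is to transplant the estimate of the preceding lemma through the reconstruction formula. Recall that for every choice of the index set $\triangle$ the $N$-soliton data are recovered from $m^{\triangle}$ by the same formulas as in (\ref{recons u}) and (\ref{c+}): $u_{sol}(y,t;D)$ is the coefficient of $iz$ in the small-$z$ expansion of the $(1,2)$ entry of $m^{\triangle}(0;y,t|D)^{-1}m^{\triangle}(z;y,t|D)$, while $c_{+}(x,t;D)$ is read off in the same way from the $(1,1)$ entry after subtracting $1$. Hence it suffices to compare, as $z\to 0$, the matrices $m^{\triangle^-_{z_0,\eta}}(0|D)^{-1}m^{\triangle^-_{z_0,\eta}}(z|D)$ and $m^{\triangle^-_{z_0,\eta}}(0|D(I))^{-1}m^{\triangle^-_{z_0,\eta}}(z|D(I))$, and the previous lemma already controls $m^{\triangle^-_{z_0,\eta}}(z|D)$ in terms of $m^{\triangle^-_{z_0,\eta}}(z|D(I))$.

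The one subtlety is that the lemma is an $L^{\infty}$ statement, whereas here we must divide by $z$ and let $z\to 0$, so the error has to carry a factor of $z$ near the origin. To obtain this I would revisit the comparison matrix from the proof of the lemma, $m_0(z)=\widetilde{m}^{\triangle^-_{z_0,\eta}}(z|D)\,m^{\triangle^-_{z_0,\eta}}(z|D(I))^{-1}$. Its jump contour $\widetilde{\Sigma}$ is a finite union of small circles centred at the discarded points $z_k\in\mathcal{Z}\setminus\mathcal{Z}(I)$; since ${\rm dist}(\mathcal{Z},\mathbb{R})>\mu$ these circles lie at a fixed positive distance from $z=0$, and on a fixed neighbourhood $B$ of the origin one has $\widetilde{m}^{\triangle^-_{z_0,\eta}}(z|D)=m^{\triangle^-_{z_0,\eta}}(z|D)$ by (\ref{7.24}). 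Thus $m_0$ is analytic on $B$, and from the Cauchy-integral representation of the small-norm solution together with $\|V_{m_0}-I\|_{L^{\infty}(\widetilde{\Sigma})}=\mathcal{O}(e^{-2\mu(I)|t|})$ one gets $m_0(z)=I+\mathcal{O}(e^{-2\mu(I)|t|})$ uniformly on $B$ and, by a Cauchy estimate on a slightly smaller ball, $m_0'(z)=\mathcal{O}(e^{-2\mu(I)|t|})$ there; since $m_0$ is analytic this yields $m_0(0)^{-1}m_0(z)=I+\mathcal{O}\bigl(z\,e^{-2\mu(I)|t|}\bigr)$ for $z\in B$.

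With this in hand I would write, for $z$ near $0$,
\begin{align*}
m^{\triangle^-_{z_0,\eta}}(0|D)^{-1}m^{\triangle^-_{z_0,\eta}}(z|D)
&=m^{\triangle^-_{z_0,\eta}}(0|D(I))^{-1}\bigl[m_0(0)^{-1}m_0(z)\bigr]m^{\triangle^-_{z_0,\eta}}(z|D(I))\\
&=m^{\triangle^-_{z_0,\eta}}(0|D(I))^{-1}m^{\triangle^-_{z_0,\eta}}(z|D(I))+\mathcal{O}\bigl(z\,e^{-2\mu(I)|t|}\bigr),
\end{align*}
where the second line uses that $m^{\triangle^-_{z_0,\eta}}(z|D(I))$, as well as its value and inverse at $z=0$, are bounded uniformly in $(y,t)$ on $B$ (Proposition \ref{unim} together with $\det\equiv1$, and the boundedness of $a^{\triangle}(z)^{\pm\sigma_3}$ near $z=0$). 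Extracting the $(1,2)$ entry, dividing by $iz$ and letting $z\to 0$ gives $u_{sol}(y,t;D)=u_{sol}(y,t;D(I))+\mathcal{O}(e^{-2\mu(I)|t|})$, and since $e^{-2\mu(I)|t|}\le e^{-\mu(I)|t|}$ this is the first claimed identity; performing the same extraction on the $(1,1)$ entry after subtracting $1$, and using (\ref{c+}), yields the stated expansion of $c_+(x,t;D)$. The main obstacle is precisely this passage from the $L^{\infty}$ bound of the lemma to an estimate that is differentiable at $z=0$, and that is exactly what the analyticity of $m_0$ near the origin and the Cauchy estimates on $\widetilde{\Sigma}$ deliver; everything else is bookkeeping with the uniform bounds already in place.
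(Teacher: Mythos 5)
Your argument is correct and follows essentially the same route as the paper, which simply applies the reconstruction formula to the estimate $m^{\triangle^-_{z_0,\eta}}(z|D)=\bigl(I+\mathcal{O}(e^{-2\mu(I)|t|})\bigr)m^{\triangle^-_{z_0,\eta}}(z|D(I))$ from the preceding lemma. Your additional step — using analyticity of $m_0$ near $z=0$ (the jump contour $\widetilde{\Sigma}$ stays a distance $>\mu$ from $\mathbb{R}$) and a Cauchy estimate to upgrade the $L^\infty$ bound to $m_0(0)^{-1}m_0(z)=I+\mathcal{O}(z\,e^{-2\mu(I)|t|})$ so that the division by $z$ is legitimate — is a genuine detail the paper leaves implicit, and it is handled correctly.
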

Now  we come back to the outer model and obtain the  following result.
\begin{corollary}
	The RHP7   exists  an  unique solution $M^{(out)}$  	with
	\begin{align}
	M^{(out)}(z)&=m^{\triangle^-_{z_0,\eta}}(z|D^{(out)})\nonumber\\
	&=m^{\triangle^-_{z_0,\eta}}(z;y,t|D(I))\prod_{\text{Re}z_n\in I_{z_0}^\eta\setminus I}\left( \frac{z-z_n}{z-\bar{z}_n}\right)^{-\sigma_3} \delta^{-\sigma_3} +\mathcal{O}(e^{-\mu(I)|t|}),\label{mout}
	\end{align}
	where $D^{(out)}=\left\lbrace z_k,c_k(z_0)\right\rbrace_{k=1}^N $ with
	\begin{equation*}
	c_k(\xi)=c_k\exp\left[-\frac{1}{\pi i}\int_{I_\eta}\frac{\log[1+|r(\zeta)|^2]}{\zeta-z}d\zeta\right].
	\end{equation*}
	Then substitute (\ref{mout})  into (\ref{normm}) we immediately have
	\begin{equation}
	\parallel M^{(out)}(z)^{-1}\parallel_{L^\infty(C\setminus(\mathcal{Z}\cup \bar{\mathcal{Z}}))}\lesssim 1.\label{normmout}
	\end{equation}
	Moreover, we have reconstruction formula
	\begin{equation}
	\lim_{z\to 0}\dfrac{\left( M^{(out)}(0)^{-1}M^{(out)}(z) \right)_{12} }{iz}=u_{sol}(y,t;D^{(out)}),\label{expMout}
	\end{equation}
	where the $u_{sol}(y,t;D^{out})$ is the $N$-soliton solution of (\ref{sp}) corresponding to discrete scattering data $\widetilde{D}$. And
	\begin{equation}
	u_{sol}(y,t;D^{(out)})=u_{sol}(y,t;D(I))+\mathcal{O}(e^{-\mu(I)|t|}),\hspace{0.5cm} \text{for }t\to\pm\infty.\label{asyqout}
	\end{equation}
\end{corollary}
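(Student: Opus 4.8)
The plan is to exploit that RHP7 carries no jump contour --- only the residue conditions \eqref{6.3}--\eqref{6.6} at the finite set $\mathcal{Z}\cup\bar{\mathcal{Z}}$ --- so that it is a reflectionless problem which can be solved explicitly by pulling back the solution of RHP8 through the renormalization \eqref{renormalization}, and then to feed the outcome into the preceding Lemma and Corollary \ref{usol} to extract the cone-localized leading term.

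First, for the existence, uniqueness, and the first equality in \eqref{mout}: recall from Proposition \ref{proT} that, for the partition $\Delta^-_{z_0,\eta}$, the function $T$ factors as $T(z)=a^{\Delta^-_{z_0,\eta}}(z)^{-1}\delta(z)$, where $a^{\Delta^-_{z_0,\eta}}(z)=\prod_{k\in\Delta^-_{z_0,\eta}}\frac{z-z_k}{z-\bar z_k}$ is precisely the partial Blaschke product appearing in \eqref{renormalization} and $\delta(z)$ is analytic and non-vanishing off $I_\eta$. I would therefore set $M^{(out)}(z)=m^{\Delta^-_{z_0,\eta}}(z;y,t|D^{(out)})$ and verify that the transformation \eqref{renormalization} carries the residue coefficients of RHP7 --- $c_nT(z_n)^{-2}e^{-2i\theta_nt}$ and $\bar c_nT(\bar z_n)^{2}e^{2i\bar\theta_nt}$ for $n\in\Delta^+_{z_0,\eta}$, and $c_n(1/T)'(z_n)^{-2}e^{2i\theta_nt}$ and $\bar c_nT'(\bar z_n)^{-2}e^{-2i\bar\theta_nt}$ for $n\in\Delta^-_{z_0,\eta}$ --- exactly into the coefficients \eqref{tau} of RHP9: the Blaschke part of $T$ is absorbed into the factor $a^{\Delta^-_{z_0,\eta}}(z)^{-\sigma_3}$, which simultaneously transfers the pole of the appropriate column from $z_n$ to $\bar z_n$ for $n\in\Delta^-_{z_0,\eta}$, while the remaining $\delta(z_n)^{\pm2}$ rescales the norming constants; expressing $\delta(z_k)^{2}$ through the Cauchy integral in the definition of $c_k(z_0)$ identifies the modified data as $D^{(out)}$ (equivalently $\widetilde D$ of \eqref{out}). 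Since RHP9 is an explicit diagonal rescaling of RHP8, it is uniquely solvable by Proposition \ref{unim}; hence $M^{(out)}$ exists, is unique, and equals $m^{\Delta^-_{z_0,\eta}}(z|D^{(out)})$.

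Next I would establish the second line of \eqref{mout} by applying the preceding Lemma to the reflectionless data $D=D^{(out)}$, which gives, as $|t|\to\infty$ with $(y,t)\in C(y_1,y_2,v_1,v_2)$, the identity $m^{\Delta^-_{z_0,\eta}}(z;y,t|D^{(out)})=\big(I+\mathcal{O}(e^{-2\mu(I)|t|})\big)\,m^{\Delta^-_{z_0,\eta}}(z;y,t|D^{(out)}(I))$, with $D^{(out)}(I)$ the restriction of $D^{(out)}$ to the poles in $\mathcal{Z}(I)$. It then remains to write $m^{\Delta^-_{z_0,\eta}}(z;y,t|D^{(out)}(I))$ as $m^{\Delta^-_{z_0,\eta}}(z;y,t|D(I))$ multiplied by the diagonal prefactor $\prod_{\text{Re}\,z_n\in I_\eta\setminus I}\big(\frac{z-z_n}{z-\bar z_n}\big)^{-\sigma_3}\delta^{-\sigma_3}$; this is once more a renormalization identity --- factoring out the Blaschke contributions of the poles outside $I$ together with the Cauchy integral over $I_\eta$ carries the data $D(I)$, with norming constants $c_k(I)$ of \eqref{dataI}, into $D^{(out)}(I)$ up to that explicit diagonal conjugation --- and combining the two identities yields \eqref{mout}.

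Finally, \eqref{normmout} follows at once from \eqref{mout}, the a priori bound \eqref{normm} applied to $m^{\Delta^-_{z_0,\eta}}(z|D(I))$, and the boundedness off $\mathcal{Z}\cup\bar{\mathcal{Z}}$ of $\delta$ and of the finite Blaschke products (Proposition \ref{proT}(b),(d)). The reconstruction formula \eqref{expMout} holds because every normalization $m^{\triangle}$ encodes the same $N$-soliton solution of \eqref{sp} via $\lim_{z\to 0}(m^{\triangle}(0)^{-1}m^{\triangle}(z))_{12}/(iz)$, so with $\triangle=\Delta^-_{z_0,\eta}$ and data $D^{(out)}$ this limit is by definition $u_{sol}(y,t;D^{(out)})$; and \eqref{asyqout} is exactly Corollary \ref{usol} applied to the reflectionless data $D^{(out)}$ (equivalently, the reconstruction formula applied to both sides of the Lemma's estimate, which yields the weaker rate $e^{-\mu(I)|t|}$). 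The step I expect to require the most care is the algebraic bookkeeping in these two renormalizations --- verifying $T(z)=a^{\Delta^-_{z_0,\eta}}(z)^{-1}\delta(z)$, checking that under \eqref{renormalization} the residue data of RHP7 (including the derivative factors $(1/T)'(z_n)^{-2}$, $T'(\bar z_n)^{-2}$ and the column pole-splitting for $n\in\Delta^-_{z_0,\eta}$) map exactly to those of RHP9 with data $D^{(out)}$, and that the cone restriction of the Lemma leaves precisely the diagonal prefactor in \eqref{mout}. All analytic input --- unique solvability, the small-norm estimate, and the continuity and boundedness of $\delta$ --- is already furnished by Propositions \ref{proT} and \ref{unim}, the Lemma, and Corollary \ref{usol}.
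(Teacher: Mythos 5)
Your proposal is correct and follows essentially the same route the paper intends: identify $M^{(out)}$ with the renormalized reflectionless solution $m^{\Delta^-_{z_0,\eta}}(z|\tilde D)$ via the factorization $T(z)=a^{\Delta^-_{z_0,\eta}}(z)^{-1}\delta(z)$ and Proposition \ref{unim}, then invoke the preceding Lemma for the cone-localized reduction to $D(I)$, and read off \eqref{normmout}, \eqref{expMout} and \eqref{asyqout} from \eqref{normm}, the reconstruction formula and Corollary \ref{usol}. The paper states the corollary without a written proof, and your filled-in bookkeeping of the two renormalizations is exactly the argument it relies on.
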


\section{ A local solvable  RH model near phase points} \label{sec8}

\quad From  the Proposition  \ref{pro3v2}, in the neighborhood $U_{\pm z_0}$ of $\pm z_0$, we find that $V^{(2)}-I$ doesn't  have a uniformly small jump for large time,
so we establish  a local model  for function $E(z)$ with   a uniformly small jump.

For soliton-free case when there are no discrete spectrum,   the formula  (\ref{T}) and (\ref{T0}) reduce  to  $T_0(\pm z_0)=\delta(\pm z_0)$.
The   RHP5    exactly   reduces to a  solvable  model for the SP equation \cite{xusp}.

\noindent\textbf{RHP10.}  Let   $\Sigma^{(2)}$ be the same contour  in the  RHP5. Find a matrix-valued   function $M^{sp}(z;\eta)$  such that

$\bullet$ Analyticity: $M^{sp}(z;\eta)$ is analytical  in $\mathbb{C}\setminus \Sigma^{(2)}$;

$\bullet$ Symmetry: $M^{sp}(z;\eta= 1 )=\sigma_2M^{sp}(-z;\eta=-1  )\sigma_2$;

$\bullet$ Asymptotic behaviors:
\begin{align}
&M^{sp}(z;\eta) \sim I+\mathcal{O}(z^{-1}),\hspace{0.5cm}z \rightarrow \infty.
\end{align}

$\bullet$  Jump condition: $M^{sp}(z;\eta  )$ has continuous boundary values $M^{sp}_\pm(z;\eta )$ on $\Sigma^{(2)}$ and
\begin{equation}
M^{sp}_+(z;\eta )=M^{sp}_-(z;\eta )V^{(sp)}(z;\eta  ),\hspace{0.5cm}z \in \Sigma^{(2)},\label{jumppc}
\end{equation}
where the  jump matrix $V^{(sp)}(z,  \eta)$,  taking   $\eta=-1$ as an example,  is given by
\begin{equation}
V^{(sp)}(z, \eta=-1)=\left\{\begin{array}{llllllll}
\left(\begin{array}{cc}
1 & r(z_0) \delta^{-2} (z_0) (z-z_0)^{2i \kappa} e^{2it\theta(z)}\\
0 & 1
\end{array}\right), & z\in \Sigma_1\cup\Sigma_9,\\
\\
\left(\begin{array}{cc}
1 & 0\\
\dfrac{\bar{r}(z_0)\delta^{2} (z_0) }{1+|r(z_0)|^2}(z-z_0)^{-2i \kappa} e^{-2it\theta(z)} & 1
\end{array}\right),  &z\in \Sigma_2,\\
\\
\left(\begin{array}{cc}
1 & \dfrac{r(z_0) \delta^{-2} (z_0)}{1+|r(z_0)|^2}(z-z_0)^{2i\kappa}e^{2it\theta(z)}\\
0 & 1
\end{array}\right),  &z\in \Sigma_3,\\
\\
\left(\begin{array}{cc}
1 & 0\\
\bar{r}(z_0)\delta^{2} (z_0)(z-z_0)^{-2i \kappa} e^{-2it\theta(z)} & 1
\end{array}\right),  &z\in \Sigma_4\cup\Sigma_{12},\\
\\
\left(\begin{array}{cc}
1 & r(-z_0) \delta^{-2} (-z_0)(z+z_0)^{2i\kappa}e^{2it\theta}\\
0 & 1
\end{array}\right),  &z\in \Sigma_6\cup\Sigma_{10},\\
\\
\left(\begin{array}{cc}
1 & 0\\
\frac{\bar{r}(-z_0)\delta^{2} (-z_0)}{1+|r(-z_0)|^2}(z+z_0)^{-2i\kappa}e^{-2it\theta} & 1
\end{array}\right),  &z\in \Sigma_5,\\
\\
\left(\begin{array}{cc}
1 & \frac{r(-z_0) \delta^{-2} (-z_0)}{1+|r(-z_0)|^2}(z+z_0)^{2i\kappa}e^{2it\theta}\\
0 & 1
\end{array}\right),  &z\in \Sigma_8,\\
\\
\left(\begin{array}{cc}
1 & 0\\
\bar{r}(-z_0)\delta^{2} (-z_0)(z+z_0)^{2i\kappa}e^{-2it\theta} & 1
\end{array}\right),  &z\in \Sigma_7\cup\Sigma_{11}.\\
\end{array}\right. \label{Vsp}
\end{equation}
The  proposition  \ref{pro3v2} shows that the  jump matrix $V^{(sp)}$   uniformly  goes to  $I$   outside the neighborhood of $\pm z_0$,
 then following the result in \cite{xusp},  the  above RHP10  is solvable.  The main  contribution to the $M^{sp}(z;\eta)$  comes  from
 a local RH problem   near $\pm z_0$, see   Figure \ref{figsp}.  We simply describe the process of construction for the solution of the RHP10,
  see \cite{xusp} for the detail.  We decompose the
 jump matrix    $ V^{(sp)}=(b_-)^{-1}b_+$, and set
 $$w_\pm =\pm(b_\pm-I), \ \ \ w=w_++w_-,$$
 and let $\mu(z)$ is the solution of the operator  equation $\mu(z)=I+C_w\mu(z)$, here $C_w$ is defined by
 $$C_wf=C_+(fw_-)+C_-(fw_+),$$
 with $C_\pm$ denoting the  Cauchy projection  operators. In the same way, we can define  matrix  functions   $\mu_{\pm z_0}(z)$ and  $w_{\pm z_0}(z)$ by using
 the jumps near phase points $\pm z_0$,   then   the   solution of the RHP10  is  given by
\begin{align}
&M^{sp}(z;\eta)  =I+\frac{1}{2\pi i} \int_{\Sigma^{(2)}} \frac{\mu(s)w(s)}{s-z}ds\nonumber \\
&=I+\frac{1}{2\pi i} \int_{\Sigma^{(2)}_+} \frac{\mu_{z_0}(s)w(s)}{s-z}ds+ \frac{1}{2\pi i} \int_{\Sigma^{(2)}_- } \frac{\mu_{-z_0}(s)w(s)}{s-z}ds+\mathcal{O}(|t|^{-1}\log |t|)\nonumber\\
&=I+\frac{|t|^{-1/2}}{z-z_0}A(z_0,\eta)-\frac{|t|^{-1/2}}{ z+z_0}A(-z_0,\eta)+\mathcal{O}(|t|^{-1}\text{In}|t|), \label{asympc}
\end{align}

\begin{figure}[H]
	\centering
	\subfigure{
		\begin{tikzpicture}[node distance=2cm]
		\draw(0,0)--(2,2)node[left]{$\Sigma^{(2)}_+$};
		\draw(0,0)--(-1,1);
		\draw(0,0)--(-1,-1);
		\draw(0,0)--(2,-2);
		\draw[dashed](-6,0)--(2.2,0)node[right]{\scriptsize Re$z$};
		\draw(-3,1)--(-4,0);
		\draw(-3,-1)--(-4,0);
		\draw(-6,2)--(-4,0);
		\draw(-6,-2)--(-4,0);
		\draw[-latex](-3,1)--(-3.5,0.5);
		\draw[-latex](-3,-1)--(-3.5,-0.5);
		\draw[-latex](-6,2)--(-5,1);
		\draw[-latex](-6,-2)--(-5,-1);
		\draw[-latex](0,0)--(-0.5,-0.5);
		\draw[-latex](0,0)--(-0.5,0.5);
		\draw[-latex](0,0)--(1,1);
		\draw[-latex](0,0)--(1,-1);
		\coordinate (I) at (0,0);
		\coordinate (K) at (-4,0);
		\coordinate (l) at (-2,0);	
		\coordinate (k) at (-6,2);
		\fill (k) circle (0pt) node[left] {$\Sigma^{(2)}_-$};
		\fill (I) circle (1pt) node[below] {$z_0$};
		\fill (K) circle (1pt) node[below] {$-z_0$};
		\end{tikzpicture}
	}
	\caption{ The jump contour for the  local RHP near phase points $\pm z_0$.}
	\label{figsp}
\end{figure}
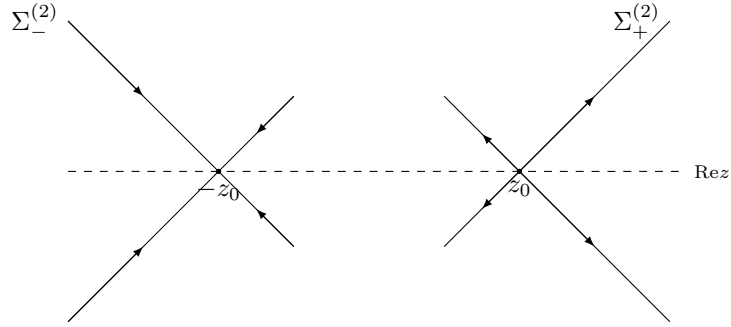

where
\begin{equation}
A(z_0,\eta)=\left(\begin{array}{cc}
0 & -A_{12}^\eta\\
A_{21}^\eta & 0
\end{array}\right), \ \ A(-z_0,\eta)= \left(\begin{array}{cc}
0 & -A_{21}^\eta \\
A_{12}^\eta  & 0
\end{array}\right),
\end{equation}
with
\begin{align}
&A_{12}^+=iz_0^{3/2}\beta_{12},\hspace{0.5cm}A_{21}^+=iz_0^{3/2}\beta_{21},\ \
A_{12}^-=-iz_0^{3/2}\beta_{21},\hspace{0.5cm}A_{21}^-=-iz_0^{3/2}\beta_{12},\\
&	\beta_{12}=\dfrac{\sqrt{2\pi}e^{-\kappa\pi/2}e^{i\pi/4}}{\bar{r}_0\Gamma(-i\kappa)},\hspace{0.5cm}\beta_{21}=\dfrac{\beta_{12} }{\kappa},  \ \ \arg(\beta_{12})=\frac{\pi}{4}+\arg\Gamma(-i\kappa)+\arg(r_0).
\end{align}
In addition, it is shown that $\parallel M^{sp}(z;\eta )\parallel_\infty\lesssim 1$.

It is easy to check that   the    RHP5 and  RHP10  have  the same contour and jump matrices,
   we  use  $M^{sp}(z;\eta )$  to  define a local model in two  circles   $z\in  U_{\pm z_0}$
\begin{equation}
M^{(\pm z_0)}(z)=M^{(out)}(z)M^{sp}(z;\eta ),\label{8.9}
\end{equation}
which  is a bounded function in $U_{\pm z_0}$ and has   the  same jump matrix as $M^{RHP}(z,\eta)$.

\section{The small norm RH problem  for error function }\label{sec9}

\quad In this section,  we consider the error matrix-function $E(z)$.
From the definition (\ref{transm4}) and (\ref{8.9}),
  we can obtain a RH problem  for the matrix function  $E(z)$.

\noindent\textbf{RHP11.}   Find a matrix-valued function $E(z)$  with following properties:

$\blacktriangleright$ Analyticity: $E(z)$ is analytical  in $\mathbb{C}\setminus (\Sigma^{(E)})$, where
$$\Sigma^{(E)}=\partial U_{z_0}\cup\partial U_{-z_0}\cup
(\Sigma^{(2)}\setminus\left( U_{z_0}\cup U_{-z_0}\right);$$

$\blacktriangleright$ Symmetry:  $\overline{E(\bar{z})}=E(-z)=\sigma_2E(z)\sigma_2^{-1}$;

$\blacktriangleright$ Asymptotic behaviours:
\begin{align}
&E(z) \sim I+\mathcal{O}(z^{-1}),\hspace{0.5cm}|z| \rightarrow \infty;
\end{align}

$\blacktriangleright$ Jump condition: $E$ has continuous boundary values $E_\pm$ on $\Sigma^{(E)}$ satisfying
$$E_+(z)=E_-(z)V^{(E)},$$
 where the jump matrix $V^{(E)}$ is given by
\begin{equation}
V^{(E)}(z)=\left\{\begin{array}{llll}
M^{(out)}(z)V^{(2)}(z)M^{(out)}(z)^{-1}, & z\in \Sigma^{(2)}\setminus U_{\pm z_0},\\[4pt]
M^{(out)}(z)M^{(sp)}(z)M^{(out)}(z)^{-1},  & z\in \partial U_{\pm z_0},
\end{array}\right. \label{deVE}
\end{equation}
 which is  shown in  Figure \ref{figE}.

We will show  that for large times, the error function  $E(z)$  solves following small norm  RH problem.

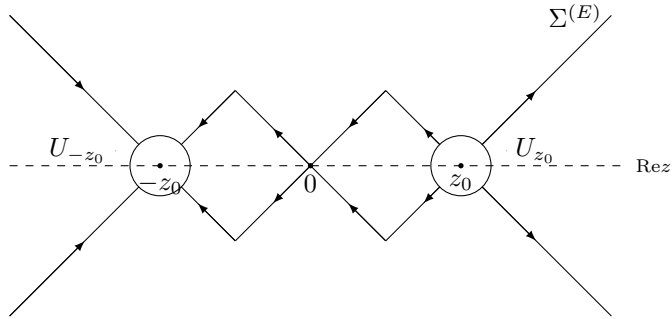
\begin{figure}[H]
	\centering
	\subfigure{
		\begin{tikzpicture}[node distance=2cm]
		\draw[ ](0.28,0.28)--(2,2)node[left]{$\Sigma^{(E)}$};
		\draw[](-0.28,0.28)--(-1,1);
		\draw[](-0.28,-0.28)--(-1,-1);
		\draw[](0.28,-0.28)--(2,-2);
		\draw[dashed](-6,0)--(2.2,0)node[right]{\scriptsize Re$z$};
		\draw[](-1,1)--(-2,0);
		\draw[](-1,-1)--(-2,0);
		\draw[](-2,0)--(-3,1);
		\draw[](-2,0)--(-3,-1);
		\draw[](-3,1)--(-3.72,0.28);
		\draw[](-3,-1)--(-3.72,-0.28);
		\draw[](-6,2)--(-4.28,0.28);
		\draw[](-6,-2)--(-4.28,-0.28);
		\draw[][-latex](-1,1)--(-1.5,0.5);
		\draw[][-latex](-1,-1)--(-1.5,-0.5);
		\draw[][-latex](-2,0)--(-2.5,-0.5);
		\draw[][-latex](-2,0)--(-2.5,0.5);
		\draw[][-latex](-3,1)--(-3.5,0.5);
		\draw[][-latex](-3,-1)--(-3.5,-0.5);
		\draw[][-latex](-6,2)--(-5,1);
		\draw[][-latex](-6,-2)--(-5,-1);
		\draw[][-latex](-0.28,-0.28)--(-0.5,-0.5);
		\draw[][-latex](-0.28,0.28)--(-0.5,0.5);
		\draw[][-latex](0.28,0.28)--(1,1);
		\draw[][-latex](0.28,-0.28)--(1,-1);
		\draw[](0,0) circle (0.4);
		\draw[](-4,0) circle (0.4);
		\coordinate (A) at (0.6,0.2);
		\fill (A) circle (0pt) node[right][] {$U_{z_0}$};
		\coordinate (I) at (0,0);
		\coordinate (B) at (-4.6,0.2);
		\fill (B) circle (0pt) node[left][] {$U_{-z_0}$};
		\coordinate (K) at (-4,0);
		\coordinate (l) at (-2,0);	
		\coordinate (k) at (-1,-0.1);
		\fill (I) circle (1pt) node[below] {$z_0$};
		\fill (K) circle (1pt) node[below] {$-z_0$};
		\fill (l) circle (1pt) node[below] {$0$};
		\end{tikzpicture}
	}
	\caption{  The jump contour $\Sigma^{(E)}$ for the $E(z)$ .}
	\label{figE}
\end{figure}

By using (\ref{normmout}) and   {Proposition \ref{pro3v2}}, we have the following estimates
\begin{equation}
|V^{(E)}-I|\lesssim\left\{\begin{array}{llll}
\exp\left\{-t\frac{\sqrt{2}}{32z_0^2}|z\mp z_0|\right\},  & z\in \Sigma^{(2)}_\pm\setminus U_{\pm z_0},\\[6pt]
\exp\left\{-\frac{|t|}{4z_0}\right\},   & z\in \Sigma^{(2)}_0.
\end{array}\right. \label{VE-I}
\end{equation}

For $z\in \partial U_{\pm z_0}$,  $M^{(out)}(z)$ is bounded,  so   by using  (\ref{asympc}),  we find that
\begin{equation}
| V^{(E)}-I|=   \big|M^{(out)}(z)^{-1}(M^{sp}(z)-I)M^{(out)}(z) \big| = \mathcal{O}(|t|^{-1/2}).\label{VE}
\end{equation}
Therefore,    the   existence and uniqueness  of  the RHP11 can  shown  by using  a  small-norm RH problem \cite{RN5,RN9,RN10},  and  we have
\begin{equation}
E(z)=I+\frac{1}{2\pi i}\int_{\Sigma^{(E)}}\dfrac{\left( I+\rho(s)\right) (V^{(E)}-I)}{s-z}ds,\label{Ez}
\end{equation}
where the $\rho\in L^2(\Sigma^{(E)})$ is the unique solution of following equation:
\begin{equation}
(1-C_E)\rho=C_E\left(I \right),
\end{equation}
where $C_E$ is a integral operator defined by
\begin{equation}
C_E(f)(z)=C_-\left( f(V^{(E)}-I)\right) ,
\end{equation}
where the $C_-$ is the usual Cauchy projection operator on $\Sigma^{(E)}$
\begin{equation}
C_-(f)(s)=\lim_{z\to \Sigma^{(E)}_-}\frac{1}{2\pi i}\int_{\Sigma^{(E)}}\dfrac{f(s)}{s-z}ds.
\end{equation}
Then by (\ref{VE}) we have
\begin{equation}
\parallel C_E\parallel\leq\parallel C_-\parallel \parallel V^{(E)}-I\parallel_\infty \lesssim \mathcal{O}(t^{-1/2}),
\end{equation}
which means $\parallel C_E\parallel<1$ for sufficiently large t,   therefore  $1-C_E$ is invertible,  and   $\rho$  exists and is unique.
Moreover,
\begin{equation}
\parallel \rho\parallel_{L^2(\Sigma^{(E)})}\lesssim\dfrac{\parallel C_E\parallel}{1-\parallel C_E\parallel}\lesssim|t|^{-1/2}.\label{normrho}
\end{equation}
Then we have the existence and boundedness of $E(z)$. In order to reconstruct the solution $u(y,t)$ of (\ref{sp}), we need the asymptotic behavior of $E(z)$ as $z\to 0$ and the long time asymptotic behavior of $E(0)$. Note that when we estimate its  asymptotic behavior, from (\ref{Ez}) and (\ref{VE-I}) we only need to consider the calculation on $\partial U_{\pm z_0}$ because it  approach zero exponentially on other boundary.
\begin{proposition}\label{asyE}
	As $z\to 0$, we have
	\begin{align}
	E(z)=E(0)+E_1z+\mathcal{O}(z^2),
	\end{align}
	where
	\begin{align}
	E(0)=I+\frac{1}{2\pi i}\int_{\Sigma^{(E)}}\dfrac{\left( I+\rho(s)\right) (V^{(E)}-I)}{s}ds,
	\end{align}
	with long time asymptotic behavior
	\begin{equation}
	E(0)=I+|t|^{-1/2}H^{(0)}+\mathcal{O}(|t|^{-1}),\label{E0t}
	\end{equation}
	and
	\begin{align}
	H^{(0)}&=\frac{1}{2\pi i}\int_{\partial U_{\pm z_0}}\dfrac{M^{(out)}(s)^{-1}A(\pm z_0,\eta)M^{(out)}(s)}{s(\pm s- z_0)}ds\nonumber\\
	&=\frac{1}{z_0} M^{(out)}(z_0)^{-1}A( z_0,\eta)M^{(out)}(z_0)+\nonumber\\
	&+\frac{1}{z_0}M^{(out)}(-z_0)^{-1}A(- z_0,\eta)M^{(out)}(-z_0) .
	\end{align}
	The last equality  follows from a residue calculation. Moreover,
	\begin{equation}
	E(0)^{-1}=I+\mathcal{O}(|t|^{-1/2}).
	\end{equation}
	And
	\begin{equation}
	E_1=-\frac{1}{2\pi i}\int_{\Sigma^{(E)}}\dfrac{\left( I+\rho(s)\right) (V^{(E)}-I)}{s^2}ds,
	\end{equation}
	satisfying long time asymptotic behavior condition
	\begin{equation}
	E_1=|t|^{-1/2}H^{(1)}+\mathcal{O}(|t|^{-1}),\label{E1t}
	\end{equation}
	where
	\begin{align}
	H^{(1)}=&\frac{1}{2\pi i}\int_{\partial U_{\pm z_0}}\dfrac{M^{(out)}(s)^{-1}A(\pm z_0,\eta)M^{(out)}(s)}{s^2(\pm s- z_0)}ds\nonumber\\
	=&\dfrac{1}{z_0^2}M^{(out)}(z_0)^{-1}A(z_0,\eta)M^{(out)}(z_0)\nonumber\\
	&-\dfrac{1}{z_0^2}[M^{(out)}(-z_0)^{-1}A(-z_0,\eta)M^{(out)}(-z_0)  .
	\end{align}
\end{proposition}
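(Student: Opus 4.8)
The proof rests on the integral representation (\ref{Ez}) together with the bounds already established: the norm estimate (\ref{normrho}) on $\rho$, the exponential smallness (\ref{VE-I}) of $V^{(E)}-I$ away from $\partial U_{\pm z_0}$, the bound (\ref{VE}) on $\partial U_{\pm z_0}$, and the explicit expansion (\ref{asympc}) of $M^{sp}$. \textbf{Step 1: the expansion in $z$.} I would insert $\frac{1}{s-z}=\frac{1}{s}+\frac{z}{s^{2}}+\frac{z^{2}}{s^{2}(s-z)}$ into (\ref{Ez}). The only delicate point is that $\Sigma^{(E)}$ reaches the origin along the rays $\Sigma_{9},\dots,\Sigma_{12}$, so $1/(s-z)$ is not uniformly bounded on the contour as $z\to0$; to handle this I would cut $\Sigma^{(E)}$ at $|s|=\delta_{0}$ for a fixed small $\delta_{0}>0$. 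On $\{|s|\ge\delta_{0}\}$ the geometric expansion is uniformly valid for $|z|<\delta_{0}/2$, and the coefficient integrals $\int_{\Sigma^{(E)}}\frac{(I+\rho(s))(V^{(E)}(s)-I)}{s^{k}}\,ds$ converge by Cauchy--Schwarz, using $\rho\in L^{2}(\Sigma^{(E)})$ and $V^{(E)}-I\in L^{2}\cap L^{\infty}$. On $\{|s|<\delta_{0}\}$ -- portions of $\Sigma_{9},\dots,\Sigma_{12}$ -- the key observation is that $\text{Re}(2it\theta(s))\to-\infty$ like $-c|t|/|s|$ as $s\to 0$, so $V^{(E)}(s)-I$, and hence the whole integrand, decays faster than any power of $|s|$ there; this piece is therefore $\mathcal{O}(e^{-c'|t|})$ uniformly for small $|z|$ and is absorbed into every coefficient. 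Collecting powers of $z$ yields $E(z)=E(0)+E_{1}z+\mathcal{O}(z^{2})$ with $E(0)$ and $E_{1}$ as stated, and both defining integrals converge thanks to the super-exponential decay near $s=0$.

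\textbf{Step 2: large-time asymptotics.} I would decompose $\Sigma^{(E)}=(\partial U_{z_{0}}\cup\partial U_{-z_{0}})\cup(\Sigma^{(2)}\setminus(U_{z_{0}}\cup U_{-z_{0}}))$. On the second set (\ref{VE-I}) gives $\|V^{(E)}-I\|\lesssim e^{-c|t|}$, so its contribution (including its $\rho$-part) is exponentially small. On $\partial U_{\pm z_{0}}$, $M^{(out)}$ is bounded and the $\rho$-part is $\lesssim\|\rho\|_{L^{2}}\,\|V^{(E)}-I\|_{L^{2}}\lesssim|t|^{-1}$ by (\ref{VE}) and (\ref{normrho}); the leading part comes from writing $V^{(E)}-I=M^{(out)}(s)^{-1}(M^{sp}(s)-I)M^{(out)}(s)$ and replacing $M^{sp}(s)-I$ by its explicit $|t|^{-1/2}$ term from (\ref{asympc}), namely $|t|^{-1/2}M^{(out)}(s)^{-1}\left(\dfrac{A(z_{0},\eta)}{s-z_{0}}-\dfrac{A(-z_{0},\eta)}{s+z_{0}}\right)M^{(out)}(s)$. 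Because $M^{(out)}$ is analytic in $U_{\pm z_{0}}$ -- the disks have radius $\le\mu/3$ and are centred on $\mathbb{R}$, while the poles $\mathcal{Z}\cup\bar{\mathcal{Z}}$ lie at distance $>\mu$ from $\mathbb{R}$ -- and $0\notin U_{\pm z_{0}}$, a residue computation at $s=z_{0}$ for the integral over $\partial U_{z_{0}}$ (only the $A(z_{0},\eta)$ term is singular there) and at $s=-z_{0}$ for the integral over $\partial U_{-z_{0}}$ produces exactly the closed forms $H^{(0)}$ and $H^{(1)}$; the remaining terms of (\ref{asympc}) contribute only at order $\mathcal{O}(|t|^{-1})$, which gives (\ref{E0t}) and (\ref{E1t}). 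Finally, since $\det E(z)\equiv1$ (because $M^{RHP}$, $M^{(out)}$ and $M^{sp}$ all have unit determinant) and $E(0)=I+\mathcal{O}(|t|^{-1/2})$, for $|t|$ large we get $E(0)^{-1}=\mathrm{adj}\,E(0)=I+\mathcal{O}(|t|^{-1/2})$.

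The main obstacle is Step 1: making the termwise expansion of the Cauchy integral in powers of $z$ legitimate even though $\Sigma^{(E)}$ actually touches $z=0$. This succeeds precisely because the oscillatory factor $e^{2it\theta}$ forces $V^{(E)}-I$ to vanish faster than any power of $s$ near the origin, which both renders the defining integrals for $E(0)$ and $E_{1}$ convergent and makes the near-origin portion of the contour contribute only at exponentially small order. Once this is in place, the residue calculus and the $L^{2}/L^{\infty}$ norm bounds in Step 2 are routine, relying only on (\ref{normrho}), (\ref{VE-I}), (\ref{VE}) and (\ref{asympc}).
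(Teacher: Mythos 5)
Your proposal is correct and follows exactly the route the paper indicates but does not write out: expand the Cauchy kernel in (\ref{Ez}), discard everything off $\partial U_{\pm z_0}$ using (\ref{VE-I}) and the $L^2$ bound (\ref{normrho}) on $\rho$, insert the explicit $|t|^{-1/2}$ term of (\ref{asympc}), and evaluate by residues at $s=\pm z_0$. Your treatment of the portion of $\Sigma^{(E)}$ that touches the origin (where $\mathrm{Re}(2it\theta)\sim -c|t|/|s|$ forces super-polynomial vanishing of $V^{(E)}-I$, so the coefficient integrals $\int (I+\rho)(V^{(E)}-I)s^{-k}\,ds$ converge) is a genuine detail the paper glosses over, and it is handled correctly.
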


\section{Analysis of   the pure $\bar{\partial}$-Problem}
\quad Now we consider the proposition and the long time asymptotics behavior of $M^{(3)}$.
The RHP6  of $M^{(3)}$ is equivalent to the integral equation
\begin{equation}
M^{(3)}(z)=I+\frac{1}{\pi}\int_\mathbb{C}\dfrac{\bar{\partial}M^{(3)}(s)}{z-s}dm(s)=I+\frac{1}{\pi}\int_\mathbb{C}\dfrac{M^{(3)}(s)W^{(3)}(s)}{z-s}dm(s),
\end{equation}
where $m(s)$ is the Lebegue measure on the $\mathbb{C}$. If we denote $C_z$ is the left Cauchy-Green integral  operator,
\begin{equation*}
fC_z(z)=\frac{1}{\pi}\int_C\dfrac{f(s)W^{(3)}(s)}{z-s}dm(s),
\end{equation*}
then above equation can be rewritten as
\begin{equation}
M^{(3)}(z)=I\left(I-C_z \right) ^{-1}.\label{deM3}
\end{equation}
To proof the existence of operator $\left(I-C_z \right) ^{-1}$, we have following Lemma.
\begin{lemma}\label{Cz}
	The norm of the integral operator $C_z$ decay to zero as $t\to\infty$:
	\begin{equation}
	\parallel C_z\parallel_{L^\infty\to L^\infty}\lesssim |t|^{-1/6},
	\end{equation}
	which implies that  $\left(I-C_z \right) ^{-1}$ exists.
\end{lemma}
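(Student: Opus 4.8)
The plan is to bound the operator norm of $C_z$ directly from its integral kernel. Since for any bounded row‑vector function $f$,
\begin{equation*}
\|fC_z\|_{L^\infty}\le \|f\|_{L^\infty}\,\frac{1}{\pi}\int_{\mathbb C}\frac{|W^{(3)}(s)|}{|z-s|}\,dm(s),
\end{equation*}
it suffices to show $\int_{\mathbb C}|W^{(3)}(s)|\,|z-s|^{-1}\,dm(s)\lesssim |t|^{-1/6}$ uniformly in $z\in\mathbb C$. From $W^{(3)}=M^{RHP}\bar\partial R^{(2)}(M^{RHP})^{-1}$ and $\det M^{RHP}\equiv 1$, together with the uniform bounds $\|M^{(out)}\|_\infty\lesssim 1$ (from (\ref{normmout})), $\|M^{sp}\|_\infty\lesssim 1$ (Section \ref{sec8}) and $\|E\|_\infty\lesssim 1$ (Section \ref{sec9}), one gets $\|M^{RHP}(s)^{\pm1}\|\lesssim 1$ on the support of $\bar\partial R^{(2)}$, which by (\ref{DBARR2}) and Proposition \ref{proR} lies in $\bigcup_{j\ne 2,5}\Omega_j$ and is disjoint from $\mathcal Z\cup\bar{\mathcal Z}$ (so the poles of $M^{(out)}$ cause no trouble). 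Hence $|W^{(3)}(s)|\lesssim|\bar\partial R^{(2)}(s)|$, and it remains to estimate, for each of the finitely many admissible $j$,
\begin{equation*}
\iint_{\Omega_j}\frac{|\bar\partial R_j(s)|\,|e^{\pm 2it\theta(s)}|}{|z-s|}\,dm(s).
\end{equation*}

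Next I would split $|\bar\partial R_j|$ according to (\ref{dbarRj})--(\ref{dbarRk}). The term $|\bar\partial X_{\mathcal Z}|$ is supported in a fixed compact set bounded away from $\mathbb R$, where $|e^{\pm 2it\theta}|$ is exponentially small in $|t|$; that piece contributes $\mathcal O(e^{-c|t|})$. For the two remaining pieces I localize near a stationary point, say $z_0$, writing $s=z_0+u+iv$ in the relevant sector (the other sectors and the point $-z_0$ handled identically via the symmetries of Proposition \ref{proR}). By construction of $R^{(2)}$ the factor $|e^{\pm 2it\theta(s)}|\le 1$ on the support of $\bar\partial R^{(2)}$; since $\theta'(z_0)=0$ with $\theta''(z_0)=-\tfrac1{2z_0^3}\ne 0$ and $z_0=\sqrt{-1/(4\xi)}$ ranges over a compact subset of $(0,\infty)$ for $(y,t)\in C(y_1,y_2,v_1,v_2)$, one has the Gaussian phase bound $|e^{\pm 2it\theta(s)}|\lesssim e^{-c|t|\,|uv|}$ with $c>0$ uniform in the cone. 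Away from $\pm z_0$ (the outer portions of the $\Omega_j$) I would instead use $\text{Re}(2it\theta)=-2t\,\text{Im}\,z\,(\xi+\tfrac1{4|z|^2})$, which gives exponential decay linear in $|\text{Im}\,z|$ and renders those contributions $\mathcal O(|t|^{-1/2})$ by elementary estimates.

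The core estimate is then
\begin{equation*}
\iint\frac{\big(|p_j'(z_0+u)|+\langle u\rangle^{-1/2}+|u+iv|^{-1/2}\big)\,e^{-c|t|\,|uv|}}{|z-s|}\,du\,dv
\end{equation*}
over the finite‑length local sector, which I would bound by Hölder's inequality in $v$ for each fixed $u$: use $p_j'\in L^2(\mathbb R)$ (equivalently $r\in H^{1,1}(\mathbb R)$, by Assumption \ref{initialdata}), $|u+iv|^{-1/2}\in L^p$ locally for any $p<4$, $|z-s|^{-1}\in L^q(dv)$ for any $q<2$ with norm uniform in $z$ and $u$, and the exponential factor carrying a negative power of $|t|$ through its $L^r$‑norm; balancing the Hölder exponents and performing the remaining $u$‑integral produces the stated rate $\mathcal O(|t|^{-1/6})$. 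The main obstacle is precisely the $|s-z_0|^{-1/2}$ term: its non‑$L^2$ singularity at the stationary point, convolved with the Cauchy kernel $|z-s|^{-1}$ while the Gaussian gain $e^{-c|t|\,|uv|}$ degenerates along the coordinate directions $u\to 0$ and $v\to 0$, forces suboptimal Hölder exponents and caps the decay at $|t|^{-1/6}$ rather than the $|t|^{-1/4}$ typical of smoother settings. Once $\|C_z\|_{L^\infty\to L^\infty}\lesssim|t|^{-1/6}\to 0$, the operator $I-C_z$ is invertible for $|t|$ sufficiently large by the Neumann series, so $M^{(3)}(z)$ in (\ref{deM3}) is well defined.
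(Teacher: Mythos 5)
Your proposal follows essentially the same route as the paper: reduce $\|C_z\|_{L^\infty\to L^\infty}$ to $\sup_z\int_{\mathbb C}|\bar\partial R^{(2)}(s)|\,|z-s|^{-1}dm(s)$ via the boundedness of $M^{RHP}$ and $(M^{RHP})^{-1}$, split $|\bar\partial R_j|$ into the three pieces of (\ref{dbarRj}), and combine $p_j'\in L^2$, the local $L^{p}$ ($p<4$) integrability of $|s\mp z_0|^{-1/2}$, the $L^{q}$ ($q<2$) bound on the Cauchy kernel, and the decay of $e^{\pm 2it\theta}$ off $\mathbb R$ to extract $|t|^{-1/6}$, exactly as in the paper's $F_1,F_2,F_3$ decomposition. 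The only caveat is a sketch-level inconsistency: you cannot apply a single H\"older inequality ``in $v$ for fixed $u$'' while simultaneously invoking $\|p_j'\|_{L^2(du)}$ and $\||z-s|^{-1}\|_{L^q(dv)}$ — the paper pairs the Cauchy kernel with the singular factor in the $u$-integral for each fixed $v$ and only then integrates the exponential in $v$ — but this is an organizational detail rather than a gap in the argument.
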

\begin{proof}
	For any $f\in L^\infty$,
	\begin{align}
	\parallel fC_z \parallel_{L^\infty}&\leq \parallel f \parallel_{L^\infty}\frac{1}{\pi}\int_C\dfrac{|W^{(3)}(s)|}{|z-s|}dm(s)\nonumber\\
	&\lesssim\parallel f \parallel_{L^\infty}\frac{1}{\pi}\int_C\dfrac{|\bar{\partial}R^{(2)}(s)|}{|z-s|}dm(s).
	\end{align}
	So we only need to  estimate the integral
	\begin{equation*}
	\frac{1}{\pi}\int_C\dfrac{|\bar{\partial}R^{(2)} (s)|}{|z-s|}dm(s).
	\end{equation*}
	We only show  the case $\eta=-1$. For $\bar{\partial}R^{(2)} (s)$ is a piece-wise function,
we  prove  the case  in the region $\Omega_1$, the other  regions are similar.  By using (\ref{dbarRj}),  we have
	\begin{align}
	\int_{\Omega_1}\dfrac{|\bar{\partial}R^{(2)} (s)|}{|z-s|}dm(s)\leq F_1+F_2+F_3,
	\end{align}
	where
	\begin{align}
	&F_1=\int^{+\infty}_{0}\int^{+\infty}_{z_0+v}\dfrac{|\bar{\partial}X_Z (s)|e^{-\frac{vt}{2(u^2+v^2)}}}{\sqrt{(u-x)^2+(v-y)^2}}due^{-2tv\xi}dv;\\
	&F_2=\int^{+\infty}_{0}\int^{+\infty}_{z_0+v}\dfrac{|p_1'(u) |e^{-\frac{vt}{2(u^2+v^2)}}}{\sqrt{(u-x)^2+(v-y)^2}}due^{-2tv\xi}dv;\\
	&F_3=\int^{+\infty}_{0}\int^{+\infty}_{z_0+v}\dfrac{\left( (u-z_0)^2+v^2\right)^{-1/4} e^{-\frac{vt}{2(u^2+v^2)}}}{\sqrt{(u-x)^2+(v-y)^2}}due^{-2tv\xi}dv.
	\end{align}
and we denote $s=u+vi$, $z=x+yi$.
	
In the following calculation,  we will use the inequality
	\begin{equation}
	\parallel |s-z|^{-1}\parallel_{L^2(z_0, +\infty)}^2=\int^{+\infty}_{z_0}\frac{1}{|v-y|}\left[  \left( \frac{u-x}{v-y}\right) ^2+1\right]  ^{-1} d\left( \frac{u-x}{|v-y|}\right) \leq\frac{\pi}{|v-y|}.
	\end{equation}
	To deal with the absolute value sign, we suppose $y>0$. In fact $y<0$ we can directly remove the absolute value sign and use the same way to estimates it.
	
For $F_1$, noting that $-\frac{vt}{2(u^2+v^2)}$ is a monotonic  decreasing function of $u$,  so
	\begin{align}
	F_1 &\leq\int^{+\infty}_{0}\parallel |s-z|^{-1}\parallel_{L^2(z_0, +\infty)}\parallel\bar{\partial}X_Z (s)\parallel_{L^2(z_0, +\infty)}e^{-\frac{vt}{2(z_0^2+v^2)}}e^{-2tv\xi}dv\nonumber\\
	&\lesssim\int^{+\infty}_{0} |v-y|^{-1/2}\exp\left( -\frac{v|t|}{2}\left( \frac{1}{z_0^2}-\frac{1}{z_0^2+v^2}\right) \right)dv\nonumber\\
	&=  \int_{0}^{y}(y-v)^{-1/2}\exp\left( -\frac{v|t|}{2}\left( \frac{1}{z_0^2}-\frac{1}{z_0^2+v^2}\right)\right) dv\nonumber\\
	&+ \int_{y}^{+\infty}(v-y)^{-1/2}\exp\left( -\frac{v|t|y^2}{2z_0^2(y^2+z_0^2)}\right)dv.\label{10.10}
	\end{align}

	For the first item, note that $e^{-z}\leq z^{1/6}$ for all $z>0$, then
	\begin{align}
	&\int_{0}^{y}(y-v)^{-1/2}\exp\left( -\frac{v|t|}{2}\left( \frac{1}{z_0^2}-\frac{1}{z_0^2+v^2}\right)\right) dv \nonumber\\
	&\lesssim\int_{0}^{y}(y-v)^{-1/2}v^{-1/2} dv|t|^{-1/4}\lesssim |t|^{-1/6}.\label{1011}
	\end{align}

	For the last integral we make the substitution $w=v-y$  then we get
	\begin{align}
	&\int_{y}^{+\infty}(v-y)^{-1/2}\exp\left( -\frac{v|t|y^2}{2z_0^2(y^2+z_0^2)}\right)dv\nonumber\\
	&\leq\int_{0}^{+\infty}w^{-1/2}\exp\left( -\frac{w|t|y^2}{2z_0^2(y^2+z_0^2)}\right)dw\exp\left( -\frac{|t|y^3}{2z_0^2(y^2+z_0^2)}\right)\lesssim |t|^{-1/2}.\label{10.12}
	\end{align}
	Substituting (\ref{10.12}) and (\ref{1011}) into (\ref{10.10}) gives
	\begin{align}
	F_1 \lesssim |t|^{-1/4}.\label{10.13}
	\end{align}
	
The  $F_2$ has the same estimate with (\ref{10.13}).    And for $F_3$,  we first  have that
	\begin{align}
	&\parallel \left( (u-z_0)^2+v^2\right)^{-1/4}\parallel_{L^p(z_0,+\infty)}\nonumber\\
	&=\left\lbrace \int_{z_0}^{+\infty}\left[ (u-z_0)^2+v^2\right]^{-p/4}dv\right\rbrace  ^{1/p}\nonumber\\
	&=\left\lbrace \int_{z_0}^{+\infty}\left[ 1+\left( \frac{u-z_0}{v}\right) ^2\right]^{-p/4}d\left( \frac{u-z_0}{v}\right)  \right\rbrace ^{1/p}v^{1/p-1/2}
 \lesssim v^{1/p-1/2},\label{F3}
	\end{align}
	and
	\begin{align}
	\parallel |s-z|^{-1}\parallel_{L^q(z_0,+\infty)}&=\left\lbrace \int_{z_0}^{+\infty}\left[  \left( \frac{u-x}{v-y}\right) ^2+1\right]^{-q/2}
 d\left( \frac{u-x}{|v-y|}\right)\right\rbrace ^{1/q}|v-y|^{1/q-1}\nonumber\\
	&\lesssim |v-y|^{1/q-1},\label{nromu}
	\end{align}
	where $p>2$ and $\frac{1}{p}+\frac{1}{q}=1$.
 Then we have
	\begin{align}
	F_3&\leq\int^{+\infty}_{0} 	\parallel |s-z|^{-1}\parallel_{L^q}\parallel \left( (u-z_0)^2+v^2\right)^{-1/4}\parallel_{L^p}\exp\left( -\frac{v|t|}{2}\left( \frac{1}{z_0^2}-\frac{1}{z_0^2+v^2}\right)\right)dv\nonumber\\
	&\lesssim \int^{+\infty}_{0}v^{1/p-1/2} |v-y|^{1/q-1}\exp\left( -\frac{v|t|}{2}\left( \frac{1}{z_0^2}-\frac{1}{z_0^2+v^2}\right)\right)dv\nonumber\\	
	&\lesssim\int^{y}_{0}v^{1/p-1/2} (y-v)^{1/q-1}\exp\left( -\frac{v|t|}{2}\left( \frac{1}{z_0^2}-\frac{1}{z_0^2+v^2}\right)\right)dv\nonumber\\
	&+\int^{+\infty}_{y}v^{1/p-1/2} (v-y)^{1/q-1}\exp\left( -\frac{v|t|}{2}\left( \frac{1}{z_0^2}-\frac{1}{z_0^2+v^2}\right)\right)dv.
	\end{align}
	For the first term,  using  the inequality $e^{-z}\leq z^{-1/6}$  leads to
	 \begin{align}
	 &\int^{y}_{0}v^{1/p-1/2} (y-v)^{1/q-1}\exp\left( -\frac{v|t|}{2}\left( \frac{1}{z_0^2}-\frac{1}{z_0^2+v^2}\right)\right)dv\nonumber\\
	&\lesssim |t|^{-1/6}\int^{y}_{0}v^{1/p-1} (y-v)^{1/q-1}dv
	 \lesssim |t|^{-1/6}.
	 \end{align}
	 And for the second term,   we estimate similarly as we estimate $F_1$. Let $w=v-y$, then we have
	 \begin{align}
	 &\int^{+\infty}_{y}v^{1/p-1/2} (v-y)^{1/q-1}\exp\left( -\frac{v|t|}{2}\left( \frac{1}{z_0^2}-\frac{1}{z_0^2+v^2}\right)\right)dv\nonumber\\
	&\leq\int^{+\infty}_{0} w^{1/q-1}(w+y)^{1/p-1/2}\exp\left( -\frac{w|t|y^2}{2z_0^2(y^2+z_0^2)}\right)dw\exp\left( -\frac{|t|y^3}{2z_0^2(y^2+z_0^2)}\right)\nonumber\\
	&\lesssim\int^{+\infty}_{0} w^{-1/2}\exp\left( -\frac{w|t|y^2}{2z_0^2(y^2+z_0^2)}\right)dw \lesssim |t|^{-1/2}.\label{F32}
	 \end{align}
	 Finally,  we have
	 \begin{equation}
	 F_3\lesssim |t|^{-1/6}.
	 \end{equation}
	
Summary the results obtained  above, we obtain the finally consequence.
\end{proof}

\begin{proposition}\label{asyM30}
The solution $M^{(3)}(z)$  of  the RHP6 admits the following estimate
\begin{align}
\parallel M^{(3)}(0)-I\parallel\lesssim |t|^{-1}.
\end{align}
\end{proposition}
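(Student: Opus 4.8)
The plan is to read off $M^{(3)}(0)-I$ from the integral-equation form of RHP6 and reduce everything to one $\bar\partial$-integral. From (\ref{deM3}) and Lemma \ref{Cz}, $\|C_z\|_{L^\infty\to L^\infty}\lesssim|t|^{-1/6}\to0$, so $I-C_z$ is invertible with $\|(I-C_z)^{-1}\|_{op}\lesssim1$; hence $M^{(3)}=I(I-C_z)^{-1}$ satisfies $\|M^{(3)}\|_{L^\infty(\mathbb C)}\lesssim1$. Evaluating the integral equation at $z=0$ gives
\begin{equation}
M^{(3)}(0)-I=-\frac1\pi\int_{\mathbb C}\frac{M^{(3)}(s)W^{(3)}(s)}{s}\,dm(s),\nonumber
\end{equation}
and since $W^{(3)}=M^{RHP}\bar\partial R^{(2)}(M^{RHP})^{-1}$ with $\|M^{RHP}\|_{L^\infty},\|(M^{RHP})^{-1}\|_{L^\infty}\lesssim1$ (by (\ref{normmout}), $\det M^{RHP}\equiv1$, $\|M^{sp}\|_\infty\lesssim1$ and the boundedness of $E$), it suffices to prove
\begin{equation}
\int_{\mathbb C}\frac{|\bar\partial R^{(2)}(s)|}{|s|}\,dm(s)\lesssim|t|^{-1}.\nonumber
\end{equation}

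Next I would split $\mathbb C$ into the discs $U_{z_0}\cup U_{-z_0}$ and their complement. On the complement, $\bar\partial R^{(2)}$ is supported on $\bigcup_{j\neq2,5}\Omega_j$ outside $\{\mathrm{dist}(z,\mathcal Z\cup\bar{\mathcal Z})<\mu/3\}$, and there the oscillatory factor carried by $R^{(2)}$ is exponentially small in $|t|$: writing $\mathrm{Im}\,\theta(s)=\mathrm{Im}(s)(\xi+\tfrac1{4|s|^2})$, the sign choices in $R^{(2)}$ make $|e^{\pm2it\theta(s)}|=e^{-2|t|\,|\mathrm{Im}\,\theta(s)|}$, and $|\mathrm{Im}\,\theta(s)|$ is bounded below on $\Sigma^{(2)}\setminus(U_{\pm z_0}\cup\{|s|<\delta\})$, is bounded below near each $z_k$ (this is exactly what the conjugation step and the partition $\Delta^\pm_{z_0,\eta}$ guarantee once $(y,t)$ is confined to a fixed cone), and blows up like $|\mathrm{Im}(s)|/(4|s|^2)$ as $s\to0$, which dominates the Cauchy factor $1/|s|$. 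Using $|\bar\partial R_j|\lesssim|\bar\partial X_{\mathcal Z}|+|p_j'(\mathrm{Re}\,s)|+|s\mp z_0|^{-1/2}$ together with $p_j'\in L^2$ (from $r\in H^{1,1}$), all of these contributions are $\mathcal O(e^{-c|t|})$.

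The decisive contribution comes from $U_{z_0}\cup U_{-z_0}$. On these discs $|s|\ge z_0-\varepsilon\ge z_0/2$, so $1/|s|\lesssim1$, and since $\mathrm{dist}(\mathcal Z,\mathbb R)>\mu$ forces $U_{\pm z_0}$ to miss $\mathrm{supp}\,\bar\partial X_{\mathcal Z}$, only the pieces $|p_j'(\mathrm{Re}\,s)|$ and $|s\mp z_0|^{-1/2}$ survive in $|\bar\partial R_j|$. Here $\theta$ has a nondegenerate critical point, $\mathrm{Re}(2it\theta(s))\le-c|t|\,|s\mp z_0|^2\sin\big(2\arg(s\mp z_0)\big)$, and $\bar\partial R_j$ in fact carries the same angular factor $|\sin(2\arg(s\mp z_0))|$ in front of $|s\mp z_0|^{-1/2}$. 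One then estimates
\begin{equation}
\int_{U_{\pm z_0}}\big(|p_j'(\mathrm{Re}\,s)|+|s\mp z_0|^{-1/2}\big)\,|e^{\pm2it\theta(s)}|\,dm(s)\nonumber
\end{equation}
by the Cauchy--Schwarz/Hölder scheme of the proof of Lemma \ref{Cz}: split the radial variable at $|s\mp z_0|\sim|t|^{-1/2}$, use $p_j'\in L^2(\mathbb R)$ for the first piece and the $L^p$-integrability of $|s\mp z_0|^{-1/2}$ for $p<4$ for the second, and absorb the Gaussian factor. The key structural point is that, unlike in Lemma \ref{Cz} where taking the supremum over $z$ forced one to integrate the \emph{singular} kernel $1/|z-s|$ across $\pm z_0$ and only gave $|t|^{-1/6}$, here $z=0$ lies at distance $z_0$ from $\pm z_0$, so $1/|s|$ is uniformly bounded and the full Gaussian gain is kept, yielding $\mathcal O(|t|^{-1})$.

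The main obstacle is precisely this last local estimate near $\pm z_0$: squeezing the sharp power $|t|^{-1}$ (rather than a weaker power such as $|t|^{-3/4}$) out of it requires choosing the Hölder exponents and the radial cut-off carefully and retaining the $\sin(2\arg(s\mp z_0))$ weight in the bound for $\bar\partial R_j$. Everything else — the exponential smallness of the contributions away from the stationary points and near the discrete spectrum, and the algebraic reduction using $\|M^{(3)}\|_{L^\infty}\lesssim1$ and the boundedness of $M^{RHP}$ — is routine; summing the two types of contributions then gives $\|M^{(3)}(0)-I\|\lesssim|t|^{-1}$.
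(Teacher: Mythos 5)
Your reduction of the problem to bounding $\int_{\mathbb C}|\bar\partial R^{(2)}(s)|\,|s|^{-1}\,dm(s)$, using $\|M^{(3)}\|_{L^\infty}\lesssim 1$ from Lemma \ref{Cz} and the boundedness of $M^{RHP}$ and its inverse, coincides with the paper's. The way you then dispose of the two regions, however, contains two genuine gaps. First, the contributions away from $U_{\pm z_0}$ are \emph{not} exponentially small in $|t|$. The set where $\bar\partial R^{(2)}\neq 0$ is a union of two-dimensional sectors $\Omega_j$, each of which has an entire ray of the real axis in its boundary; on such a sector $|\mathrm{Im}\,\theta(s)|=|\mathrm{Im}\,s|\cdot|\xi+\tfrac{1}{4|s|^2}|$ vanishes as $\mathrm{Im}\,s\to 0$, so $|e^{\pm 2it\theta}|$ stays close to $1$ arbitrarily far from $\pm z_0$. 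The lower bound on $|\mathrm{Im}\,\theta|$ that you invoke holds on the one-dimensional contour $\Sigma^{(2)}$ (that is the content of Proposition \ref{pro3v2}), not on the two-dimensional support of $\bar\partial R^{(2)}$. What is true is that integrating $e^{-c|t|\,\mathrm{Im}\,s}$ in $\mathrm{Im}\,s$ yields a factor $|t|^{-1}$; this is precisely the computation on which the paper spends most of its proof (splitting the $v$-integral at $v=z_0$ and applying H\"older in $u$ for each fixed $v$), and it is the part your argument omits.

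Second, the local estimate near $\pm z_0$ --- which you yourself identify as the crux --- is asserted rather than carried out, and the bound you claim for it is not attainable by the method you describe. The estimate (\ref{dbarRj}) gives $|\bar\partial R_j|\lesssim|\bar\partial X_{\mathcal Z}|+|p_j'(\mathrm{Re}\,s)|+|s\mp z_0|^{-1/2}$ with \emph{no} angular factor on the last term; the $\sin^2$-weight appears in the bound (\ref{R}) for $R_j$, not for $\bar\partial R_j$. Writing $s-z_0=\rho e^{i\phi}$ and using $|e^{2it\theta}|\approx\exp(-c|t|\rho^{2}\sin 2\phi)$ near $z_0$, the contribution of the $|s-z_0|^{-1/2}$ term is comparable to $\int_0^{\pi/4}\int_0^{\varepsilon}\rho^{1/2}e^{-c|t|\rho^{2}\sin 2\phi}\,d\rho\,d\phi\sim|t|^{-3/4}$, and no choice of H\"older exponents or radial cut-off can improve this, because that is the actual size of the integral (it is exactly why the analogous $\bar\partial$ error in the focusing NLS analysis of \cite{fNLS} is of order $t^{-3/4}$). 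So your scheme, executed correctly, delivers $\|M^{(3)}(0)-I\|\lesssim|t|^{-3/4}$, not $|t|^{-1}$; obtaining more would require additional vanishing of $\bar\partial R_j$ at $\pm z_0$ or extra regularity of $r$, neither of which you invoke. (The corresponding step $I_3\lesssim|t|^{-1}$ in the paper's own proof is itself delicate for the same reason, since the growing factor $\exp\{|t|v/(2((v+z_0)^2+v^2))\}$ can only be absorbed by $e^{-2tv\xi}$ up to a Gaussian in $v$; so the obstacle you flag is real and is not removed simply by citing the paper's H\"older scheme.)
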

\begin{proof}
	We only prove  the case $\eta=-1$. Because of the boundedness of $M^{RHP}$, we only need to  estimate
	\begin{equation*}
	\frac{1}{\pi}\int_C\dfrac{|\bar{\partial}R^{(2)} (s)|}{|z-s|}dm(s).
	\end{equation*}
	For $\bar{\partial}R^{(2)} (s)$ is a piece-wise function, we show  the case
 in the region $\Omega_1$,  the other regions are similar. Denote $s=u+vi$, then we have
	\begin{align}
	\int_{\Omega_1}\dfrac{|\bar{\partial}R^{(2)} (s)|}{|s|}dm(s)=\int^{+\infty}_{0}\int^{+\infty}_{z_0+v}\dfrac{|\bar{\partial}R_1 (s)|e^{-\frac{vt}{2(u^2+v^2)}}}{(u^2+v^2)^{1/2}}e^{-2tv\xi}dudv\leq I_1+I_2+I_3,
	\end{align}
	where by using  (\ref{dbarRj}), we have
	\begin{align}
	&I_1=\int^{+\infty}_{0}\int^{+\infty}_{z_0+v}\dfrac{|\bar{\partial}X_Z (s)|e^{-\frac{vt}{2(u^2+v^2)}}}{(u^2+v^2)^{1/2}}e^{-2tv\xi}dudv;\\
	&I_2=\int^{+\infty}_{0}\int^{+\infty}_{z_0+v}\dfrac{|p_1'(u) (s)|e^{-\frac{vt}{2(u^2+v^2)}}}{(u^2+v^2)^{1/2}}e^{-2tv\xi}dudv;\\
	&I_3=\int^{+\infty}_{0}\int^{+\infty}_{z_0+v}\dfrac{\left( (u-z_0)^2+v^2\right)^{-1/4} e^{-\frac{vt}{2(u^2+v^2)}}}{(u^2+v^2)^{1/2}}e^{-2tv\xi}dudv.
	\end{align}
	
For  $I_1$, we divide it to two part:
	\begin{equation}
	I_1=\int^{+\infty}_{z_0}\int^{+\infty}_{z_0+v}\dfrac{|\bar{\partial}X_Z (s)|e^{-\frac{vt}{2(u^2+v^2)}}}{(u^2+v^2)^{1/2}}e^{-2tv\xi}dudv+\int^{z_0}_{0}\int^{+\infty}_{z_0+v}\dfrac{|\bar{\partial}X_Z (s)|e^{-\frac{vt}{2(u^2+v^2)}}}{(u^2+v^2)^{1/2}}e^{-2tv\xi}dudv.
	\end{equation}
	For the first integral, we use (\ref{nromu}) in the case of $y=0$, then we have
	\begin{align}
	&\int^{+\infty}_{z_0}\int^{+\infty}_{z_0+v}\dfrac{|\bar{\partial}X_\mathcal{Z} (s)|e^{-\frac{vt}{2(u^2+v^2)}}}{(u^2+v^2)^{1/2}}e^{-2tv\xi}dudv\nonumber\\
	&\leq\int^{+\infty}_{z_0}\parallel |s-z|^{-1}\parallel_{L^2}\parallel\bar{\partial}X_Z (s)\parallel_{L^2}e^{-\frac{vt}{2((v+z_0)^2+v^2)}}e^{-2tv\xi}dv\nonumber\\
	&\lesssim \int^{+\infty}_{z_0} v^{-1/2}\exp\left\lbrace-\frac{|2t|v}{5z_0^2} \right\rbrace dv\nonumber\\
	&\lesssim \int^{+\infty}_{z_0} \exp\left\lbrace-\frac{|2t|v}{5z_0^2} \right\rbrace dv \lesssim |t|^{-1}.
	\end{align}
	And for the second item,
	\begin{align}
	&\int^{z_0}_{0}\int^{+\infty}_{z_0+v}\dfrac{|\bar{\partial}X_\mathcal{Z} (s)|e^{-\frac{vt}{2(u^2+v^2)}}}{(u^2+v^2)^{1/2}}e^{-2tv\xi}dudv\nonumber\\
	&\leq\int^{z_0}_{0}\left((v+z_0)^2+v^2 \right) ^{-1/2}\exp\left(\frac{|t|v}{(v+z_0)^2+v^2} \right)dv \nonumber\\
	&\lesssim\int^{z_0}_{0}\exp\left(\frac{|t|v}{z_0^2} \right)dv \lesssim |t|^{-1}.
	\end{align}
	And using the same way we can estimate $I_2$ and get same result. Finally we bound $I_3$, similarly we divide it to two parts:
	\begin{align}
	I_3=&\int^{+\infty}_{z_0}\int^{+\infty}_{z_0+v}\dfrac{\left( (u-z_0)^2+v^2\right)^{-1/4} e^{-\frac{vt}{2(u^2+v^2)}}}{(u^2+v^2)^{1/2}}e^{-2tv\xi}dudv\nonumber\\
	&+\int^{z_0}_{0}\int^{+\infty}_{z_0+v}\dfrac{\left( (u-z_0)^2+v^2\right)^{-1/4} e^{-\frac{vt}{2(u^2+v^2)}}}{(u^2+v^2)^{1/2}}e^{-2tv\xi}dudv.
	\end{align}
	For the first integral, we use (\ref{nromu}) and (\ref{F3}) similarly, take $p>2$ and $\frac{1}{p}+\frac{1}{q}=1$:
	\begin{align}
	&\int^{+\infty}_{z_0}\int^{+\infty}_{z_0+v}\dfrac{\left( (u-z_0)^2+v^2\right)^{-1/4} e^{-\frac{vt}{2(u^2+v^2)}}}{(u^2+v^2)^{1/2}}e^{-2tv\xi}dudv\nonumber\\
	&\leq\int^{+\infty}_{z_0} 	\parallel |s|^{-1}\parallel_{L^q}\parallel \left( (u-z_0)^2+v^2\right)^{-1/4}\parallel_{L^p}\exp\left( -\frac{v|t|}{2}\left( \frac{1}{z_0^2}-\frac{1}{z_0^2+v^2}\right)\right)dv\nonumber\\
	&\lesssim \int^{+\infty}_{z_0} v^{-1/2}\exp\left\lbrace-\frac{|2t|v}{5z_0^2} \right\rbrace dv\nonumber\\
	&\lesssim \int^{+\infty}_{z_0} \exp\left\lbrace-\frac{|2t|v}{5z_0^2} \right\rbrace dv \lesssim |t|^{-1}.
	\end{align}
	And for the second integral,  note that
	\begin{align}
	&\parallel (u^2+v^2)^{-1/2}\exp\left\lbrace\frac{|t|v}{2(u^2+v^2)} \right\rbrace \parallel_{L^4(z_0+v, +\infty)}\nonumber\\
	&=\left\lbrace \int^{+\infty}_{z_0} (u^2+v^2)^{-2}\exp\left\lbrace\frac{2|t|v}{u^2+v^2}\right\rbrace du \right\rbrace^{1/4} \nonumber\\
	&=\left\lbrace \int^{+\infty}_{z_0} |8tv|^{-1}u^{-1}\exp\left\lbrace\frac{2|t|v}{u^2+v^2}\right\rbrace' du \right\rbrace^{1/4} \nonumber\\
	&\lesssim |t|^{-1/4}v^{-1/4}\left( \exp\left\lbrace\frac{|t|v}{2((v+z_0)^2+v^2)}\right\rbrace+1\right) .
	\end{align}
	Then together with (\ref{F3}) we obtain
	\begin{align}
	&\int^{z_0}_{0}\int^{+\infty}_{z_0+v}\dfrac{\left( (u-z_0)^2+v^2\right)^{-1/4} e^{-\frac{vt}{2(u^2+v^2)}}}{(u^2+v^2)^{1/2}}e^{-2tv\xi}dudv\nonumber\\
	&\leq\int^{z_0}_{0} \parallel \left( (u-z_0)^2+v^2\right)^{-1/4}\parallel_{L^{4/3}}\parallel (u^2+v^2)^{-1/2}\exp\left\lbrace\frac{|t|v}{2(u^2+v^2)} \right\rbrace \parallel_{L^4}e^{-2tv\xi}dv\nonumber\\
	&\lesssim |t|^{-1/4}\int^{z_0}_{0}v^{3/4-1/2}v^{-1/4}e^{-2tv\xi}\left( \exp\left\lbrace\frac{|t|v}{2((v+z_0)^2+v^2)}\right\rbrace+1\right)dv\nonumber\\
	&\lesssim |t|^{-1/4}\int^{z_0}_{0}e^{-2tv\xi}dv
	\lesssim |t|^{-5/4}.\label{I3}
	\end{align}
	So we have
	\begin{equation}
	I_3\lesssim |t|^{-1}.
	\end{equation}
	We come to the  result by combining above equations.	
\end{proof}

To reconstruct the solution $u(y,t)$ of the SP equation  (\ref{sp}), we need   the asymptotic behavior of $M^{(3)}_1$   given by
\begin{equation}
M^{(3)}(z)=M^{(3)}(0)+M^{(3)}_1(x,t)z+\mathcal{O}(z^2),\ \ z\to0\label{expM3}
\end{equation}
and
\begin{equation}
M^{(3)}_1(y,t)=\frac{1}{\pi}\int_C\frac{M^{(3)}(s)W^{(3)}(s)}{s^2}dm(s).
\end{equation}
The  $M^{(3)}_1$ admits the following  estimate.
\begin{lemma}
	For all $t\neq0$, we have
	\begin{equation}
	|M^{(3)}_1(x,t)|\lesssim |t|^{-1}.\label{M31}
	\end{equation}
\end{lemma}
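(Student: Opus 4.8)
The plan is to imitate closely the proof of Proposition~\ref{asyM30}; the only genuinely new feature is the extra power of $|s|^{-1}$ carried by the kernel. Starting from (\ref{deM3}) together with the expansion (\ref{expM3}) one has $M^{(3)}_1(y,t)=\frac{1}{\pi}\int_{\mathbb{C}}s^{-2}M^{(3)}(s)W^{(3)}(s)\,dm(s)$ with $W^{(3)}=M^{RHP}\,\bar\partial R^{(2)}\,(M^{RHP})^{-1}$. Using $\|M^{(3)}\|_{L^\infty}\lesssim1$ (which follows from Lemma~\ref{Cz} and (\ref{deM3})), the bound (\ref{normmout}) on $M^{(out)}$ and $\|M^{sp}\|_{L^\infty}\lesssim1$, hence $\|M^{RHP}\|_{L^\infty}+\|(M^{RHP})^{-1}\|_{L^\infty}\lesssim1$, the estimate reduces to showing $\int_{\mathbb{C}}|s|^{-2}|\bar\partial R^{(2)}(s)|\,dm(s)\lesssim|t|^{-1}$. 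As in the proof of Proposition~\ref{asyM30} I would take $\eta=-1$ and split this integral over the eight sectors $\Omega_j$, $j=1,3,4,6,7,8,9,10$ on which $\bar\partial R^{(2)}$ is supported (it vanishes on $\Omega_2\cup\Omega_5$).

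On the four sectors that stay a fixed distance from $z=0$ — the ones lying ``outside'' $\pm z_0$, e.g.\ $\Omega_1$, where $s=u+vi$ satisfies $u>z_0+v$ — one has $|s|^2=u^2+v^2\geq z_0^2$, so $|s|^{-2}\leq z_0^{-1}|s|^{-1}$ and the sectoral integral is bounded by $z_0^{-1}\int_{\Omega_j}|s|^{-1}|\bar\partial R^{(2)}(s)|\,dm(s)$. This is precisely the quantity controlled in the proof of Proposition~\ref{asyM30}: decompose $\bar\partial R_j$ via (\ref{dbarRj}) into the pieces $|\bar\partial X_{\mathcal Z}|$, $|p_j'(\mathrm{Re}\,s)|$, $|s-z_0|^{-1/2}$, combine the Gaussian-type decay factor $\exp\{-\tfrac{|t|v}{2}(z_0^{-2}-(u^2+v^2)^{-1})\}$ with the H\"older estimates (\ref{F3})--(\ref{nromu}) and the elementary inequality $e^{-w}\leq w^{-1}$, to obtain a contribution $\lesssim|t|^{-1}$. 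The sectors around $-z_0$ are identical, using (\ref{dbarRk}) instead.

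The only delicate sectors are those pinched at $z=0$ (for $\eta=-1$: $\Omega_3,\Omega_4,\Omega_8,\Omega_9$), where $|s|^{-2}$ is genuinely singular at the origin. Here I would exploit that the phase $\theta(z)=\xi z-\tfrac{1}{4z}$ has a simple pole at $0$: for $s=u+vi$ one computes $|e^{\pm2it\theta(s)}|=\exp\{\mp2tv(\xi+\tfrac{1}{4|s|^2})\}$, and in each such sector the sign of the exponent occurring in $\bar\partial R^{(2)}$ is exactly the decaying one as $s\to0$, so that $|s|^{-2}|e^{\pm2it\theta(s)}|$ is not only bounded near the origin but, by trading decay for powers of $|t|$ through $e^{-w}\leq w^{-N}$, contributes an extra negative power of $|t|$. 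Away from the origin the sector is treated exactly as before (and, since these sectors are bounded away from $\pm z_0$, the term $|s\mp z_0|^{-1/2}$ in (\ref{dbarRj})--(\ref{dbarRk}) is harmless). Each pinched sector therefore also contributes $\lesssim|t|^{-1}$, and summing the finitely many sectoral estimates yields (\ref{M31}).

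The main obstacle is this last point: controlling the $|s|^{-2}$ singularity at $z=0$, which is the feature special to the short-pulse equation (the $1/z$ singularity in the Lax pair and in $\theta$). One must verify carefully that in every sector whose closure contains the origin the exponential $e^{\pm2it\theta}$ multiplying $\bar\partial R^{(2)}$ has the decaying sign as $s\to0$, and that its super-polynomial decay genuinely dominates $|s|^{-2}$ with enough margin to produce the clean $|t|^{-1}$ rate rather than only $|t|^{-\alpha}$ with $\alpha<1$; everything else is the routine $L^p$--$L^q$ bookkeeping already performed in the proof of Proposition~\ref{asyM30}.
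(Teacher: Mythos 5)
Your proposal is correct and follows essentially the paper's own argument: bound $\|M^{(3)}\|_{L^\infty}$ and $\|M^{RHP}\|_{L^\infty}$, reduce the claim to $\int_{\mathbb{C}}|s|^{-2}|\bar{\partial}R^{(2)}(s)|\,dm(s)\lesssim|t|^{-1}$, and on the sectors bounded away from the origin (e.g.\ $\Omega_1$, where $|s|\geq z_0$) use $|s|^{-2}\leq z_0^{-1}|s|^{-1}$ to fall back on the integrals $I_1,I_2,I_3$ already controlled in Proposition \ref{asyM30}. The only divergence is that you explicitly single out the sectors whose closure contains $z=0$, where $|s|^{-2}$ is not locally integrable and the rapid decay of $e^{\pm 2it\theta}$ near the origin must be invoked; the paper dismisses these cases as ``similar'' to $\Omega_1$, so on this point your sketch is actually the more careful and complete one.
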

\begin{proof}
	From  {Lemma \ref{Cz}} and (\ref{deM3}), we have $\parallel M^{(3)}\parallel_\infty \lesssim1$. And we only estimate the integral on $\Omega_1$ since the other estimates are similar.
Like in the above Lemma, by (\ref{dbarRj}) and (\ref{DBARR2}) we obtain
	\begin{equation}
	|\frac{1}{\pi}\int_{\Omega_1}M^{(3)}(s)\bar{\partial}R^{(2)}(s)|s|^{-2}dm(s)|\lesssim \frac{1}{\pi}\int_{\Omega_1}|\bar{\partial}R^{(2)}(s)||s|^{-2}dm(s)\lesssim I_4+I_5+I_6,
	\end{equation}	
	where the last inequality is from (\ref{dbarRj}) and we also have for $s=u+vi$,
	\begin{align}
	&I_4=\int^{+\infty}_{0}\int^{+\infty}_{z_0+v}\dfrac{|\bar{\partial}X_Z (s)|e^{-\frac{vt}{2(u^2+v^2)}}}{u^2+v^2}dve^{-2tv\xi}du;\\
	&I_5=\int^{+\infty}_{0}\int^{+\infty}_{z_0+v}\dfrac{|p_1'(u) (s)|e^{-\frac{vt}{2(u^2+v^2)}}}{u^2+v^2}dve^{-2tv\xi}du;\\
	&I_6=\int^{+\infty}_{0}\int^{+\infty}_{z_0+v}\dfrac{\left( (u-z_0)^2+v^2\right)^{-1/4} e^{-\frac{vt}{2(u^2+v^2)}}}{u^2+v^2}dve^{-2tv\xi}du.
	\end{align}
	Note that for all $s$ in $\Omega_1$
	\begin{equation*}
	(u^2+v^2)^{-1/2}\leq \frac{1}{z_0},
	\end{equation*}
	so we have
	\begin{equation}
	I_j\leq\frac{1}{z_0}I_{j-3},\hspace{0.4cm} \text{for } j=4,5,6.
	\end{equation}
	So from Proposition \ref{asyM30},  we can easily get the result.
\end{proof}

\section{Soliton resolution  for the  SP equation }

\quad Now we begin to construct the long time asymptotics of the SP equation (\ref{sp}).
 Inverting the sequence of transformations (\ref{transm1}), (\ref{transm2}), (\ref{transm3}) and (\ref{transm4}), we have
\begin{align}
M(z)=&M^{(3)}(z)E(z)M^{(out)}(z)R^{(2)}(z)^{-1}T(z)^{-\sigma_3},\hspace{0.5cm}  z \in C\setminus U_{\pm z_0}
\end{align}
To  reconstruct the solution $u(y,t)$ by using (\ref{recons u}),   we take $z\to0$ along the  imaginary axis. In this case,  $ R^{(2)}(z)=I$, and  we have
\begin{align}
u(x,t)&=u(y(x,t)),t) = -i \lim_{z\to 0} z \left(   M(0)^{-1} M (z) \right)_{12}.
\end{align}
Further using   {Propositions} \ref{proT}, \ref{asyE}  and  \ref{asyM30},  we can obtain the long time asymptotics behavior
\begin{align}
&(M(0)^{-1} M (z) = M^{(out)}(0)^{-1}M^{(out)}(z)+ M^{(out)}(0)^{-1}M^{(out)}(z)T_1^{-\sigma_3}z\nonumber\\
&+M^{(out)}(0)^{-1}H^{(1)} M^{(out)}(z)z|t|^{-1/2}+\mathcal{O}(|t|^{-1}),
\end{align}
where   $T(z)^{\sigma_3}$  is a  diagonal matrix,  then by  {corollary} \ref{usol} and simply calculation  we finally obtain following result.

\begin{theorem}\label{last}   Let $q(x,t)$ be the solution for  the initial-value problem (\ref{sp})-(\ref{sp2}) with generic data   $u_0(x)\in H^{1,1}(\mathbb{R})$,
For  fixed  $y_1,y_2,v_1,v_2\in \mathbb{R}$ with $y_1\leq y_2$  and $v_1\leq v_2\in \mathbb{R}^-$,  we define two zones for spectral variable $z$ 
  \begin{align}
 & I=\left\lbrace z:\ -1/(4v_1)<|z|^2<-1/(4v_2)\}, \ \ N(I)= \left\lbrace z_k\in \mathcal{Z}:\ z_k\in I\right\rbrace  \right\rbrace
  \end{align}
 and a cone for variables $y, t$
	\begin{equation*}
	C(y_1,y_2,v_1,v_2)=\left\lbrace (y,t)\in R^2|y=y_0+vt\text{ ,with }y_0\in[y_1,y_2]\text{, }v\in[v_1,v_2]\right\rbrace,
	\end{equation*}
	which are shown in {Figure \ref{figC(I)}}.  Denote $u_{sol}(y,t|D(I))$ be the $N(I)$ soliton solution corresponding to   scattering data
$ \left\lbrace z_k,c_k(I)\right\rbrace_{k=1}^{N(I)}  $ which given in (\ref{dataI}) and  corresponding $c_+(x,t|D(I))$ defined by (\ref{c+D}). 
Then as $|t|\to\infty$ with $(y,t)\in C(y_1,y_2,v_1,v_2)$,  we have
	\begin{align}
	u(x,t)=&u(y(x,t),t)=u_{sol}(y(x,t),t|D(I)) -i|t|^{-1/2} f_{12}(y,t) +\mathcal{O}(|t|^{-1}),\label{resultu}
	\end{align}
where 
	\begin{align}
&	y(x,t)=x-c_+(x,t;D(I))-iT_1^{-1}
	 -i|t|^{-1/2}f_{11}(y,t) +\mathcal{O}(|t|^{-1})\label{resulty},\\
&f_{12}(y(x,t),t)=\left[M^{(out)}(0)^{-1}H^{(1)}M^{(out)}(0) \right] _{12},\nonumber\\
	&f_{11}(y(x,t),t)=\left[M^{(out)}(0)^{-1}H^{(1)}M^{(out)}(0) \right] _{11}.\nonumber
	\end{align}

\end{theorem}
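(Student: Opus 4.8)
The plan is to assemble Theorem \ref{last} by unwinding the entire chain of transformations established in Sections 3--10 and then reading off the behavior at $z=0$, which is precisely what the reconstruction formula (\ref{recons u}) requires. First I would write out the composite relation
\begin{align}
M(z)=M^{(3)}(z)E(z)M^{(out)}(z)R^{(2)}(z)^{-\sigma_3}\,\big|_{\text{off-diag factors}}\,T(z)^{-\sigma_3},\nonumber
\end{align}
obtained by inverting (\ref{transm1}), (\ref{transm2}), (\ref{transm3}), (\ref{transm4}); since the reconstruction takes $z\to 0$ along the imaginary axis, where $z$ lies in the sector $\Omega_2\cup\Omega_5$, the factor $R^{(2)}(z)$ is the identity there, so $M(z)=M^{(3)}(z)E(z)M^{(out)}(z)T(z)^{-\sigma_3}$ near $z=0$. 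Then I would form $M(0)^{-1}M(z)$ and substitute the small-$z$ expansions: (\ref{expM3}) with the bound (\ref{M31}) for $M^{(3)}$, the expansion from Proposition \ref{asyE} (with (\ref{E0t}), (\ref{E1t})) for $E$, and the expansion (\ref{expT0})--(\ref{T1}) of $T(z)^{-\sigma_3}$. The leading term is $M^{(out)}(0)^{-1}M^{(out)}(z)$; the first correction splits into the diagonal $T_1^{\sigma_3}$ contribution (which is $\mathcal O(1)$, not decaying, and feeds the spatial-variable formula (\ref{resulty})) and the $|t|^{-1/2}H^{(1)}$ term from $E_1$; everything else is $\mathcal O(|t|^{-1})$ because $M^{(3)}(0)-I$, $M^{(3)}_1$, $E(0)-I-|t|^{-1/2}H^{(0)}$, and $E_1-|t|^{-1/2}H^{(1)}$ are all $\mathcal O(|t|^{-1})$.

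Next I would extract the two scalar quantities demanded by (\ref{recons u}): $u(x,t)=-i\lim_{z\to 0}z\,(M(0)^{-1}M(z))_{12}$ and $x(y,t)=y+\lim_{z\to 0}\tfrac{1}{iz}[(M(0)^{-1}M(z))_{11}-1]$. For the $(1,2)$ entry, the diagonal factor $T(z)^{-\sigma_3}$ contributes only to diagonal entries, and $T(0)^{-\sigma_3}$ cancels between $M(0)^{-1}$ and $M(z)$ in the off-diagonal slot, so the leading term reproduces exactly the outer-model reconstruction, which by the Corollary following (\ref{mout}) and Corollary \ref{usol} equals $u_{sol}(y,t;D^{(out)})=u_{sol}(y,t;D(I))+\mathcal O(e^{-\mu(I)|t|})$. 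The surviving $t^{-1/2}$ term comes from the $H^{(1)}$ piece of $E_1$, giving the coefficient $f_{12}=[M^{(out)}(0)^{-1}H^{(1)}M^{(out)}(0)]_{12}$. A symmetric computation on the $(1,1)$ entry yields the $-iT_1^{-1}$ shift and the $-i|t|^{-1/2}f_{11}$ correction in (\ref{resulty}), using $-c_+(x,t;D(I))$ from (\ref{c+D}) for the soliton part of $c_+$.

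The main obstacle I expect is bookkeeping rather than a deep new estimate: one must be scrupulous that each of the four transformations is applied on the correct side and with the correct power of $T$, that the $z\to 0$ expansions are composed in the right order (since these are matrix products, the $\mathcal O(z)$ and $\mathcal O(|t|^{-1/2})$ terms interleave), and that the diagonal factor $T(z)^{-\sigma_3}$ is handled correctly when passing between the $(1,1)$- and $(1,2)$-entry computations — in particular verifying that $T_1$ enters the position-variable formula but not the leading soliton term. A secondary point is confirming that all error contributions are genuinely $\mathcal O(|t|^{-1})$ uniformly for $(y,t)$ in the cone $C(y_1,y_2,v_1,v_2)$; this follows by combining Proposition \ref{asyM30}, the bound (\ref{M31}), Proposition \ref{asyE}, and the uniform bound (\ref{normmout}) on $M^{(out)}$, together with the fact that $z_0=(-4\xi)^{-1/2}$ stays in a compact subinterval of $(0,\infty)$ when $\xi=y/t\in[v_1,v_2]\subset\mathbb R^-$, so none of the implicit constants blow up. Once these consistency checks are in place, the stated asymptotics (\ref{resultu})--(\ref{resulty}) follow by reading off the coefficients, and the asymptotic stability claim is an immediate consequence of the leading term being an $N(I)$-soliton.
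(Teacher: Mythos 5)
Your proposal follows essentially the same route as the paper: invert the chain of transformations to get $M=M^{(3)}EM^{(out)}(R^{(2)})^{-1}T^{-\sigma_3}$, note that $R^{(2)}=I$ on the imaginary axis near $z=0$ (which lies in $\Omega_2\cup\Omega_5$), expand $M(0)^{-1}M(z)$ using Propositions \ref{proT}, \ref{asyE}, \ref{asyM30} and the outer-model corollaries, and read off the $(1,2)$ and $(1,1)$ entries via (\ref{recons u}). The bookkeeping points you flag (placement of $T_1$, uniformity over the cone via $z_0$ staying in a compact set) are exactly the "simple calculation" the paper leaves implicit, so the argument is correct and matches.
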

The   long time asymptotic expansion  (\ref{resultu}) shows the soliton resolution  of  for  the initial value problem  of  the short-pluse equation,
which  consisting of three terms:  the leading order term can
be characterized with  an $N(I)$-soliton whose parameters are modulated by
a sum of localized soliton-soliton
 interactions as one moves through the cone; the second $t^{-1/2}$ order term  coming from soliton-radiation interactions on continuous   spectrum  up to an residual error order
 $\mathcal{O}(|t|^{-1})$ from a $\overline\partial$ equation. Our results also show that soliton solutions of
  short-pluse  equation are asymptotically stable.\vspace{6mm}

\noindent\textbf{Acknowledgements}

This work is supported by  the National Science
Foundation of China (Grant No. 11671095,  51879045).

\hspace*{\parindent}
\\

\end{document}